\newif\ifkdd
\newif\iffinal
\newtheorem{remark}{Remark}
\newcommand{\oldnote}[2][]{}
\newcommand{\note}[2][]{}
\newcommand{\note}[2][]{{\bf [#1: {\it #2}]}}
\newcommand{\ptitle}[1]{\vspace{1mm}\noindent{\bf #1.}}
\newcommand{\cA}{\ensuremath{\mathcal{A}}}
\newcommand{\cW}{\ensuremath{\mathcal{W}}}
\newcommand{\tcW}{\ensuremath{\widetilde{\mathcal{W}}}}
\newcommand{\cT}{\ensuremath{\mathcal{T}}}
\newcommand{\st}{\ensuremath{\mbox{such that }}}
\newcommand{\bw}{\ensuremath{\mathbf{w}}}
\newcommand{\tV}{\ensuremath{\widetilde{V}}}
\newcommand{\tE}{\ensuremath{\widetilde{E}}}
\newcommand{\bE}{\ensuremath{\bar{E}}}
\newcommand{\tG}{\ensuremath{\widetilde{G}}}
\newcommand{\NP}{\ensuremath{\mathbf{NP}}\xspace}
\newcommand{\NPhard}{{\NP}-hard\xspace}
\newcommand{\bigO}{\ensuremath{\mathcal{O}}\xspace}
\newcommand{\STCbinary}{STCbinary}
\setlist[itemize]{leftmargin=*}
\definecolor{yafaxiscolor}{rgb}{0.3, 0.3, 0.3}
\definecolor{yafcolor1}{rgb}{0.4, 0.165, 0.553}
\definecolor{yafcolor2}{rgb}{0.949, 0.482, 0.216}
\definecolor{yafcolor3}{rgb}{0.47, 0.549, 0.306}
\definecolor{yafcolor4}{rgb}{0.925, 0.165, 0.224}
\definecolor{yafcolor5}{rgb}{0.141, 0.345, 0.643}
\definecolor{yafcolor6}{rgb}{0.965, 0.933, 0.267}
\definecolor{yafcolor7}{rgb}{0.627, 0.118, 0.165}
\definecolor{yafcolor8}{rgb}{0.878, 0.475, 0.686}
\definecolor{yafcolor9}{rgb}{0.965, 0.733, 0.767}
\renewcommand\footnotetextcopyrightpermission[1]{}
\begin{document}
\title[From acquaintance to best friend forever]{From acquaintance to best friend forever:\\
robust and fine-grained inference of social-tie strengths}

\author{Florian Adriaens}
\affiliation{%
  \institution{Dept. of Electronics and Information Systems, IDLab, Ghent University}
}
\email{florian.adriaens@ugent.be}
\author{Tijl De Bie}
\affiliation{%
  \institution{Dept. of Electronics and Information Systems, IDLab, Ghent University}
}
\email{tijl.debie@ugent.be}
\author{Aristides Gionis}
\affiliation{%
  \institution{Dept. of Computer Science\\Aalto University}
}
\email{aristides.gionis@aalto.fi}
\author{Jefrey Lijffijt}
\affiliation{%
  \institution{Dept. of Electronics and Information Systems, IDLab, Ghent University}
}
\email{jefrey.lijffijt@ugent.be}
\author{Polina Rozenshtein}
\affiliation{%
  \institution{Dept. of Computer Science\\Aalto University}
}
\email{polina.rozenshtein@aalto.fi}

\renewcommand{\shortauthors}{Adriaens et al.}

\begin{abstract}
Social networks often provide only a binary perspective on social ties:
two individuals are either connected or not.
While sometimes external information can be used to infer the \emph{strength} of social ties,
access to such information may be restricted or impractical.

Sintos and Tsaparas (KDD 2014) first suggested to infer the strength of social ties from the topology of the network alone,
by leveraging the \emph{Strong Triadic Closure (STC)} property. 
The STC property states that if person $A$ has strong social ties with persons $B$ and $C$,
$B$ and $C$ must be connected to each other as well (whether with a weak or strong tie).
Sintos and Tsaparas exploited this property to formulate the inference of the strength of social ties as an 
\NP-hard optimization problem,
and proposed two approximation algorithms.

We refine and improve this line of work,
by developing a sequence of linear relaxations of the problem, 
which can be solved exactly in polynomial time.
Usefully, these relaxations infer more fine-grained levels of tie strength (beyond strong and weak),
which also allows to avoid making arbitrary strong/weak strength assignments when the network topology provides inconclusive evidence.
One of the relaxations simultaneously infers the presence of a limited number of STC violations.
An extensive theoretical analysis leads to two efficient algorithmic approaches.
Finally, our experimental results elucidate the strengths of the proposed approach,
and sheds new light on the validity of the STC property in practice.
\end{abstract}

%
\begin{CCSXML}
\end{CCSXML}


\keywords{Strong Triadic Closure, strength of social ties, Linear Programming, convex relaxations, half-integrality}

\maketitle

\note[Aris]{If it is OK with everyone, I would not call Sintos and Tsaparas' paper
``landmark'' or ``seminal'' (20 citations).}

\note[Aris]{For final check: make sure that we use consistently either node or vertex.
One suggestion is to use ``node'' for the graph, and ``vertex'' for the polytope.}

\oldnote[Aris]{I would suggest simplifying the names of our problems, 
e.g., STC-max instead of Bin-STC-max, LP1 instead of LP1-STC, etc.}

\section{Introduction}\label{sec:intro}

\oldnote[Tijl]{Introduce problem of inferring the strength of social ties, and its importance. Mention Sintos and Tsaparas and some later work.}

Online social networks, such as Facebook, provide unique insights into the social fabric of our society.
They form an unprecedented resource to study social-science questions, such as
how information propagates on a social network, how friendships come and go, how echo chambers work, how conflicts arise, and much more.
\ifkdd\else
\note[Tijl]{Should add citations}
\fi

Yet, many social networks provide a black-and-white perspective on friendship:
they are modeled by unweighted graphs, 
with an edge connecting two nodes representing that two people are friends. 
Surely though, some friendships are stronger than others,
and clearly, in studying social phenomena understanding the strength of social ties can be critical.

Although in some cases detailed data are available
and can be used for inferring the strength of social ties, 
e.g., communication frequency between users,
or explicit declaration of relationship types, 
such information may not always be available.

\note[Aris]{Removed mention on privacy, as 
it may raise concerns for our work attacking (legitimate) privacy considerations.}

The question of whether the strength of social ties can be inferred \emph{from the structure of the social network alone},
the subject of the current paper, is therefore an important one.
\ifkdd\else
Before we discuss our specific contributions, however, let us provide some essential background on prior work on this topic.
\fi

\oldnote[Jef]{I would suggest not to use subsections but instead use boldface non-indented paragraph markers as defined in the $\backslash$ptitle command. That should save a lot of space.}

\ptitle{Background}
\oldnote[Tijl]{Introduce basic notation and explain the original STC optimization problem.}
An important line of research attempting to address the inference of the strength of social ties
is based on the \emph{strong triadic closure} (STC) property from sociology,
introduced by Georg Simmel in 1908 \cite{sim:08}.
To understand the STC property, consider an undirected network $G=(V,E)$, 
with $E\subseteq {V \choose 2}$.
Consider additionally a \emph{strength function} 
$w:E\rightarrow \{\texttt{weak},\texttt{strong}\}$ assigning a binary strength value to each edge.
A triple of connected nodes $i,j,k\in V$ is said to satisfy the STC property, 
with respect to the strength function $w$,
if $w(\{i,j\})=w(\{i,k\}) = \texttt{strong}$ implies $\{j,k\}\in E$. 
In other words, two adjacent strong edges always need to be closed by an edge 
(whether weak or strong).
We refer to a strength function for which all connected triples satisfy the STC property as \emph{STC-compliant}:
\begin{definition}[STC-compliant strength function on a network]
A \emph{strength function} $w:E\rightarrow \{\texttt{weak},\texttt{strong}\}$ is STC-compliant
on an undirected network $G=(V,E)$ if and only if
\begin{align*}
&\mbox{for all } i,j,k\in V, \{i,j\},\{i,k\}\in E :\\
&w(\{i,j\})=w(\{i,k\})=\texttt{strong} \, \mbox{ implies } \{j,k\}\in E.
\end{align*}
\end{definition}
A consequence of this definition is that for an STC-compliant strength function,
any \emph{wedge}---defined as a triple of nodes $i,j,k\in V$ 
for which $\{i,j\},\{i,k\}\in E$ but $\{j,k\}\not\in E$---can include only one strong edge.
We will denote such a wedge by the pair $(i,\{j,k\})$, where $i$ is the root and $\{j,k\}$ are the end-points of the wedge,
and denote the set of wedges in a given network by~$\cW$.

On the other hand, for 
a \emph{triangle}---defined as a triple of nodes $i,j,k\in V$ for which
$\{i,j\},\{i,k\},\{j,k\}\in E$---no constraints are implied on the strengths of the three involved edges.
We will denote a triangle simply by the (unordered) set of its three nodes $\{i,j,k\}$,
and the set of all triangles in a given network as $\cT$.
\oldnote[Florian]{Technical detail: this definition implies that if $(i,j,k) \in \cW$, 
then also $(i,k,j) \in \cW$. 
Meaning we have two triples to denote the same wedge, something we'd rather not want? 
(two equal constraints in the LP)} 
\oldnote[Aris]{One way to address Florian's comment is to say that 
when we write $(i,j,k)$ the first item ($i$) is ordered, 
while the other two items ($j$ and $k$) are unordered.
I do not know how elegant is this, though.
We could also denote wedges as $(i,\{j,k\})$ or $\langle i,\{j,k\}\rangle$.}

Relying on the STC property,
Sintos and Tsaparas~\cite{sintos2014using} 
propose an approach to infer the strength of social ties.
They observe that a strength function that labels all edges as weak is always STC-compliant. 
However, as a large number of strong ties is expected to be found in a social network, 
they suggest  
searching for a strength function that
maximizes the number of strong edges, or (equivalently) minimizes the number of weak edges.

To write this formally,
we introduce a variable $w_{ij}$ 
for each edge $\{i,j\}\in E$,
defined as $w_{ij}=0$ if $w(\{i,j\})=\texttt{weak}$
and $w_{ij}=1$ if $w(\{i,j\})=\texttt{strong}$.
Then, the original STC problem, maximizing the number of strong edges, can be formulated as:
\begin{align}\tag{STCmax}\label{eq:Binary-STC-max}
\max_{w_{ij}:\{i,j\}\in E} & \sum_{\{i,j\}\in E} w_{ij},&&\\
\st & w_{ij} + w_{ik} \leq 1, &\mbox{ for all }& (i,\{j,k\})\in\cW,\label{eq:triangle-max}\\
& w_{ij}\in\{0,1\}, &\mbox{ for all }& \{i,j\}\in E.\label{eq:binary-max}
\end{align}
Equivalently, one could instead minimize $\sum_{\{i,j\}\in E} (1-w_{ij})$ subject to the same constraints,
or with transformed variables $v_{ij}=1-w_{ij}$ equal to $1$ for weak edges and $0$ for strong edges:
\begin{align}\tag{STCmin}\label{eq:Binary-STC-min}
\min_{v_{ij}:\{i,j\}\in E}& \sum_{\{i,j\}\in E} v_{ij},&&\\
\st & v_{ij} + v_{ik} \geq 1, &\mbox{ for all }& (i,\{j,k\})\in\cW,\label{eq:triangle-min}\\
& v_{ij}\in\{0,1\}, &\mbox{ for all }& \{i,j\}\in E.\label{eq:binary-min}
\end{align}
When we do not wish to distinguish between the two formulations,
we will refer to them jointly as \STCbinary.

Sintos and Tsaparas~\cite{sintos2014using} observe that \ref{eq:Binary-STC-min} is equivalent to Vertex Cover
on the so-called \emph{wedge graph} $G_E=(E,F)$, 
whose nodes are the edges of the original input graph $G$, 
and whose edges are $F=\{(\{i,j\},\{i,k\}) \mid (i,\{j,k\})\in\cW\}$,
i.e., two nodes of $G_E$ are connected by an edge if the edges they represented in $G$ form a wedge.
While Vertex Cover is \NPhard,
a simple factor-$2$ approximation algorithm 
can be adopted for \ref{eq:Binary-STC-min}. 
On the other hand, \ref{eq:Binary-STC-max}
is equivalent to finding the {\em maximum independent set} on the wedge graph $G_E$, 
or equivalently the {\em maximum clique} on the {\em complement} of the wedge graph.
It is known that there cannot be a polynomial-time algorithm that for every real number
$\varepsilon > 0$ approximates the maximum clique to within a factor better than
${\mathcal O}(n^{1-\varepsilon})$~\cite{haastad1999clique}.
In other words, while a polynomial-time approximation algorithm exists for minimizing the number of weak edges (with approximation factor two),
no such polynomial-time approximation algorithm exists for maximizing the number of strong edges.

Despite its novelty and elegance, 
\STCbinary\ 
suffers from a number of weaknesses,
which we address in this paper. 

First, \STCbinary\ is an \NPhard problem.
Thus, one has to either resort to approximation algorithms, 
which are applicable only for certain problem variants---see
the discussion on \ref{eq:Binary-STC-min} vs.~\ref{eq:Binary-STC-max} above---or 
rely on exponential algorithms and hope for good behavior in practice. 
Second, the problem returns \emph{only binary edge strengths}, 
weak vs. strong. 
In contrast, real-world social networks contain
tie strengths of many different levels.
A third limitation is that, on real-life networks,
\STCbinary\ tends to have many optimal solutions.
Thus, any such optimal solution makes \emph{arbitrary strength assignments}
for the edges where different optimal solutions differ from each other.%
\footnote{A case in point is a star graph, where the optimal solution contains one strong edge (arbitrarily selected),
while all others are weak.}
Last but not least, \STCbinary\
\emph{assumes that the STC property holds for all wedges}. 
Yet, real-world social networks tend to be noisy,
with spurious connections as well as missing edges.

\ptitle{Contributions}
In this paper we propose a 
series of linear programming relaxations that address all of the above limitations of \STCbinary. 
In particular, our LP relaxations provide the following advantages.
\begin{itemize}
\item 
The first relaxation replaces the integrality constraints $w_{ij}\in\{0,1\}$
with fractional counterparts $0\le w_{ij}\le 1$.
It can be shown that this relaxed LP is \emph{half-integral}, 
i.e., the edge strengths in the optimal solution take values $w_{ij}\in\{0,\frac{1}{2}, 1\}$.
Thus, not only the problem becomes polynomial, 
but the formulation introduces meaningful \emph{three-level} social strengths.
\item 
Next we relax the upper-bound constraint, requiring only $w_{ij}\ge 0$,
while generalizing the STC property to deal with higher gradations of edge strengths.
We show that the optimal edge strengths still
take values in a small discrete set\ifkdd. \else
, controlled with an additional parameter. 
\fi 
Thus, our approach can yield multi-level edge strengths, 
from a small set of discrete values, 
while ensuring a polynomial algorithm. 
\item 
We show how the previous relaxations can be solved 
by advanced and highly efficient combinatorial algorithms, 
so that one need not rely on generic LP solvers.
\item
As our relaxations allow intermediate strength levels,
arbitrary choices between weak and strong values 
can be avoided by assigning an intermediate strength.
\ifkdd\else
Furthermore, the computational tractability of the relaxed solution
makes it possible to also quantify in polynomial time the 
\emph{range of possible strengths} an edge can have in the set of optimal strength assignments.
For \STCbinary, given the intractability of finding even one optimal solution,
this is clearly beyond reach.
\fi
\item 
Our final relaxation simultaneously edits the network
while optimizing the edge strengths,
making it robust against noise in the network.
Also this variant has no integrality constraints, 
and thus, it can again be solved in polynomial time.%
\footnote{Note that Sintos and Tsaparas~\cite{sintos2014using} also suggest a variant of
\STCbinary\ that allows the introduction of new edges.
However, the resulting problem is again NP-hard,
and the provided algorithm provides an $\bigO(\log(|E|))$-approximation 
rather than a constant-factor approximation.
\note[Tijl]{The paper itself mentions $\bigO(\log(n))$ -- I think $n=|E|$ but I could not immediately find it in the paper.}}
\end{itemize}

\ptitle{Outline}
We start by proposing the successive relaxations in Sec.~\ref{sec:relaxations}.
In Sec.~\ref{sec:analysis} we analyse these relaxations and derive properties of their optima,
highlighting the benefits of these relaxations with respect to \STCbinary.
The theory developed in Sec.~\ref{sec:analysis}
leads to efficient algorithms, discussed in Sec.~\ref{sec:algorithms}.
Empirical performance is evaluated in Sec.~\ref{sec:empirical} and related work is
reviewed in Sec.~\ref{sec:relatedwork},
before drawing conclusions in Sec.~\ref{sec:conclusions}.

\section{LP relaxations}\label{sec:relaxations}

Here we will derive a sequence of increasingly loose relaxations 
of Problem~\ref{eq:Binary-STC-max}.
Their detailed analysis is deferred to Sec.~\ref{sec:analysis}.%
\footnote{Our relaxations can also be applied to Problem~\ref{eq:Binary-STC-min}, 
however, for brevity, hereinafter we omit discussion on this minimization problem.}

\subsection{Elementary relaxations}

In this subsection we simply enlarge the feasible set of strengths $w_{ij}$,
for all edges $\{i,j\}\in E$.
This is done in two steps.


\subsubsection{Relaxing the integrality constraint}

The first relaxation relaxes the  
constraint $w_{ij}\in\{0,1\}$ to 
$0\leq w_{ij}\leq 1$.
Denoting the set of edge strengths with $\bw=\{w_{ij}\mid\{i,j\}\in E\}$, 
this yields:
\begin{align}\tag{LP1}\label{eq:LP1-STC}
\max_{\bw} & \sum_{\{i,j\}\in E} w_{ij},&&\\
\st & w_{ij} + w_{ik} \leq 1, &\mbox{ for all }& (i,\{j,k\})\in\cW,\label{eq:wedge}\\
& w_{ij}\geq 0, &\mbox{ for all }& \{i,j\}\in E, \label{eq:lower}\\
& w_{ij}\leq 1, &\mbox{ for all }& \{i,j\}\in E.\label{eq:upper}
\end{align}
\ifkdd\else
Equivalently in Problem~\ref{eq:Binary-STC-min} one can relax constraint~(\ref{eq:binary-min}) to $0\leq v_{ij}\leq 1$.
Recall that Problems~\ref{eq:Binary-STC-max} and \ref{eq:Binary-STC-min} are equivalent respectively with the Independent Set and Vertex Cover problems on the wedge graph.
For those problems, this particular linear relaxation is well-known, and for Vertex Cover it can be used to achieve a 2-approximation \cite{Hoc:82,Hoc:83}.
\oldnote[Aris]{The previous paragraph can be easily omitted, if we need space.}
\fi 

Clearly, this relaxation will lead to solutions that are not necessarily binary.
However, as will be explained in Sec.~\ref{sec:analysis}, 
Problem~\ref{eq:LP1-STC} is \emph{half-integral},
meaning that there always exists an optimal solution with values 
$w_{ij}\in\{0,\frac 12,1\}$ for all $\{i,j\}\in E$.

\subsubsection{Relaxing the upper bound constraints to triangle constraints}

We now further relax Problem~\ref{eq:LP1-STC},
so as to allow for edge strengths larger than~$1$. 
The motivation to do so is to allow for higher gradations in the inference of edge strengths.

Simply dropping the upper-bound constraint~(\ref{eq:upper}) 
would yield uninformative unbounded solutions, 
as edges that are not part of any wedge would be unconstrained.
Thus, the upper-bound constraints cannot simply be deleted; 
they must be replaced by looser constraints
that bound the values of edge strengths in triangles
in the same spirit as the STC constraint does for edges in wedges.

To do so, we propose to generalize the wedge STC constraints~(\ref{eq:wedge})
to STC-like constraints on triangles, as follows: 
\emph{in every triangle, the combined strength of two adjacent edges
should be bounded by an increasing function of the strength of the closing edge}.
In social-network terms: the stronger a person's friendship with two other people,
the stronger the friendship between these two people must be.
Encoding this intuition as a linear constraint yields:
\begin{align*}
w_{ij}+w_{ik} \leq c + d\cdot w_{jk},
\end{align*}
for some $c,d\in\mathbb{R}^+$.
This is the most general linear constraint
that imposes a bound on $w_{ij}+w_{ik}$ that is increasing with $w_{jk}$,
as desired.
We will refer to such constraints as \emph{triangle constraints}.

\oldnote[Jef]{Although there is nothing wrong with this per se, the precise form of the constraint appears a bit out of nowhere. If possible, the motivation for its form could be extended.}

In sum, we relax Problem~\ref{eq:LP1-STC} by first adding the triangle constraints for all triangles,
and subsequently dropping the upper-bound constraints~(\ref{eq:upper}).
For the resulting optimization problem to be a \emph{relaxation} of Problem~\ref{eq:LP1-STC},
the triangle constraints must be satisfied throughout the original feasible region.
This is the case as long as $c\geq 2$:
indeed, then the box constraints $0\leq w_{ij}\leq 1$ ensure that the triangle constraint is always satisfied.
The tightest possible relaxation 
is thus achieved with $c=2$,
yielding the following relaxation:
\begin{align}\tag{LP2}\label{eq:LP2-STC}
\max_{\bw} & \sum_{\{i,j\}\in E} w_{ij},&&\\
\st & w_{ij} + w_{ik} \leq 1, &\mbox{ for all }& (i,\{j,k\})\in\cW,\nonumber\\
& w_{ij}+w_{ik} \leq 2 + d\cdot w_{jk}, &\mbox{ for all }& \{i,j,k\}\in\cT,\label{eq:triangle}\\
& w_{ij}\geq 0, &\mbox{ for all }& \{i,j\}\in E.\nonumber
\end{align}

\begin{remark}[The wedge constraint is a special case of the triangle constraint]\label{rem:absent}
Considering an absent edge as an edge with negative strength $-1/d$,
the wedge constraint can in fact be regarded as a special case of the triangle constraint.
\end{remark}

\oldnote[Tijl]{Introduce parameterized triangle inequality here, with variable $d$. Perhaps this can be done most directly by setting the strength of an absent edge to $-w$, a weak edge to $0$, and a strong edge to a parameter $1$. Then, the constraint is generally of the form $x+y<c+d*z$. To make sure the relaxation is as tight as possible, $c$ and $d$ should be as small as possible while containing the feasible region of the unrelaxed solution. Thus, it should be as tight as possible for all feasible discrete edge strength combinations. The two tightest constraints are for $x=0,y=1,z=-w$, and for $x=1,y=1,z=0$. This gives two equations in two variables, solved by setting $1=c-d*w$ and $2=c$, such that $c=2$ and $w=1/d$. It should be pointed out that this effectively means bounding $\frac{x+y}{d}$ by the closing edge strength $z$, plus a constant $c$. For $0<d<1$, we haven't really thought it through yet! It looks like then there might be more different levels than 6... To be investigated, or omitted from this paper?}

\subsection{Enhancing robustness by allowing edge additions and deletions}

As noted earlier, although the STC property is theoretically motivated, 
real-world social networks are noisy and may contain many exceptions to this rule.
In this subsection we propose two further relaxations of Problem~\ref{eq:LP2-STC}
that gracefully deal with exceptions of two kinds:
wedges where the sum of edges strengths exceeds $1$,
and edges with a negative edge strength,
indicating that the STC property would be satisfied should the edge not be present. 

These relaxations thus solve the STC problem while allowing a small number of edges to be added or removed from the network.
\note[Tijl]{Discuss relation with the similar task in Sintos and Tsaparas.
Perhaps we can explain things also better by connecting it directly to that work.}

\subsubsection{Allowing violated wedge STC constraints}

In order to allow for violated wedge STC constraints,
we can simply add positive \emph{slack variables} $\epsilon_{jk}$ for all $(i,\{j,k\})\in\cW$:
\begin{align}
\label{constraint:wedge-slack}
w_{ij}+w_{ik}\leq 1+\epsilon_{jk},\quad
\epsilon_{jk}\geq 0. 
\end{align}
Elegantly, the slacks $\epsilon_{jk}$ can be interpreted as quantifying the strength of the (absent) edge between $j$ and $k$.
To show this,
let $\bE$ denote the set of pairs of end-points of all the wedges in the graph, 
i.e., $\bE = \{ \{j,k\} \mid \mbox{there exists } i\in V:(i,\{j,k\})\in \cW \}$.
We also extend our notation to introduce strength values for 
those pairs, i.e., 
$\bw=\left\{w_{ij}\mid\{i,j\}\in E \mbox{ or } \{i,j\}\in \bE\right\}$,
and define $w_{jk}\triangleq\frac{\epsilon_{jk}-1}{d}$ for $\{j,k\}\in\bE$.
The relaxed wedge constraints~(\ref{constraint:wedge-slack})
are then formally identical to the triangle STC constraints~(\ref{eq:triangle}).
Meanwhile, the lower bound $\epsilon_{jk}\geq 0$ from~(\ref{constraint:wedge-slack})
implies $w_{jk}\geq -\frac{1}{d}$,
i.e., allowing the strength of these absent edges to be negative.

In order to bias the solution towards few violated wedge constraints
a term $-C\sum_{\{j,k\}\in\bE}w_{jk}$ is added to the objective function.
The larger the parameter $C$, the more a violation of a wedge constraint will be penalized.
The resulting problem is:
\begin{align}\tag{LP3}\label{eq:LP3-STC}
\max_{\bw} & \sum_{\{i,j\}\in E} w_{ij} - C\sum_{\{j,k\}\in\bE}w_{jk},&&\\
\st & w_{ij} + w_{ik} \leq 2+d\cdot w_{jk}, &\mbox{ for all }& (i,\{j,k\})\in\cW,\nonumber\\
& w_{ij}+w_{ik} \leq 2 + d\cdot w_{jk}, &\mbox{ for all }& \{i,j,k\}\in\cT,\nonumber\\
& w_{ij}\geq 0, &\mbox{ for all }& \{i,j\}\in E.\nonumber\\
& w_{jk}\geq -\frac{1}{d}, &\mbox{ for all }& \{j,k\}\in\bE.\nonumber
\end{align}

Note that in Remark~\ref{rem:absent},
$-\frac{1}{d}$ was argued to correspond to the strength of an absent edge.
Thus, the lower-bound constraint on $w_{jk}$ 
requires these weights to be at least as large as the weight that signifies an absent edge.
If it is strictly larger, this may suggest that the edge is in fact missing,
as adding it increases the sum of strengths in the objective more than the penalty paid for adding it.

\ifkdd\else
\note[Tijl]{Add result on relation between $C$ and the number of violated wedge constraints?}
\fi

\subsubsection{Allowing negative edge strengths}\label{sec:negedge}

The final relaxation is obtained by allowing edges to have negative strength,
with lower bound equal to the strength signifying an absent edge:
\begin{align}\tag{LP4}\label{eq:LP4-STC}
\max_{\bw} & \sum_{\{i,j\}\in E} w_{ij} - C\sum_{\{j,k\}\in\bE}w_{jk},&&\\
\st & w_{ij} + w_{ik} \leq 2+d\cdot w_{jk}, &\mbox{ for all }& (i,\{j,k\})\in\cW,\nonumber\\
& w_{ij}+w_{ik} \leq 2 + d\cdot w_{jk}, &\mbox{ for all }& \{i,j,k\}\in\cT,\nonumber\\
& w_{ij}\geq -\frac{1}{d}, &\mbox{ for all }& \{i,j\}\in E.\nonumber\\
& w_{jk}\geq -\frac{1}{d}, &\mbox{ for all }& \{j,k\}\in \bE.\nonumber
\end{align}

This formulation allows the optimization problem to strategically delete some edges from the graph,
if doing so allows it to increase the sum of all edge strengths.

\note[Jef]{I had not realized this yet, but the definition is asymmetric, in that the cost to remove an existing edge is low as compared to inferring it. Due to linearity, I guess we cannot do anything about it but maybe a comment needs to be inserted to discuss this?}
\note[Florian]{Jef, Adding an extra cost term $C_{add}$ for deleting edges would just reduce it, after dividing by $1-C_{add}$, to the same problem right?  In that way, edge deletion cost is incorporated in the current meaning of C.}
\note[Tijl]{My understanding is that the relative cost of adding and deleting is determined by $D$.
This is probably fine, no? Why should they incur the same cost?}

\section{Theoretical analysis of the optima}\label{sec:analysis}

The general form of relaxation ~\ref{eq:LP1-STC} is a well-studied problem, 
and it is known that there always exists a half-integral solution---a solution
where all $w_{ij}\in\{0,\frac{1}{2},1\}$ \cite{NeT:75}.
In this section we demonstrate and exploit the existence of symmetries in the optima to show an analogous result for Problem~\ref{eq:LP2-STC}.
Furthermore, the described symmetries also exist for Problems~\ref{eq:LP3-STC} and \ref{eq:LP4-STC},
although they do not imply an analogue of the half-integrality result for these problems.

We also discuss how the described symmetries are useful in reducing the arbitrariness of the optima,
as compared to Problems~\ref{eq:Binary-STC-max} and \ref{eq:Binary-STC-min},
where structurally-indistinguishable edges might be assigned different strengths at the optima.
Furthermore, in Sec.~\ref{sec:algorithms} we will show how the symmetries can be exploited for algorithmic performance gains, as well.

We start by giving some useful definitions and lemmas.
\ifkdd
Due to space limitations,
the proofs of all results in this section are referred to an extended technical report~\cite{extended}.
\fi

\subsection{Auxiliary definitions and results}\label{sec:definitions}

\begin{figure}[t]
\begin{tikzpicture}

\tikzstyle{exedge} = [yafcolor5!80, thick, text=black!80]
\tikzstyle{exnode} = [thick, draw = yafcolor7!80, fill=white, circle, inner sep = 1pt, text=black, minimum width=11pt]

\node[exnode] (x) at (0, 0.5)    {$x$};
\node[exnode] (y) at (0.95, 0.2) {$y$};
\node[exnode] (z) at (0.6, 1.3)  {$z$};
\node[exnode] (w) at (1.55, 1.1) {$w$};
\node[exnode] (u) at (2.5, 0.9)  {$u$};

\draw[-, exedge, bend left = 10] (w) to (y);
\draw[-, exedge, bend left = 10] (y) to (x);
\draw[-, exedge, bend left = 10] (x) to (z);
\draw[-, exedge, bend left = 10] (z) to (w);
\draw[-, exedge, bend left = 10] (w) to (x);
\draw[-, exedge, bend left = 10] (z) to (y);
\draw[-, exedge, bend left = 10] (w) to (u);

\end{tikzpicture}
\caption{\label{figure:edgetypes}A toy graph illustrating 
the different type of edges defined in Section~\ref{sec:definitions}.}
\end{figure}
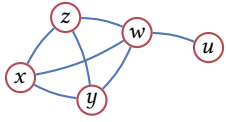

It is useful to distinguish two types of edges:
\begin{definition}[Triangle edge and wedge edge]
A \emph{triangle edge} is an edge that is part of at least one triangle, but that is part of no wedge.
A \emph{wedge edge} is an edge that is part of at least one wedge.
\end{definition}
These definitions are illustrated in a toy graph in Figure~\ref{figure:edgetypes},
where edges $(x,y)$, $(y,z)$, and $(x,z)$ are triangle edges, 
while edges $(w,x)$, $(w,y)$, $(w,z)$, and $(w,u)$ are wedge edges.

It is clear that in this toy example the set of triangle edges forms a clique.
This is in fact a general property of triangle edges:
\begin{lemma}[Subgraph induced by triangle edges]\label{lem:triangleclique}
Each connected component in the edge-induced subgraph, induced by all triangle edges, is a clique.
\end{lemma}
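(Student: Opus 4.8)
The plan is to show that if two triangle edges share a node, then their ``other'' endpoints are joined by an edge, and moreover that this connecting edge is itself a triangle edge; transitivity of being in the same connected component then forces every component to be a clique. So let $\{a,b\}$ and $\{a,c\}$ be two distinct triangle edges sharing the node $a$. First I would argue $\{b,c\}\in E$: if not, then $(a,\{b,c\})$ is a wedge, contradicting the assumption that $\{a,b\}$ (and $\{a,c\}$) is a triangle edge, since by definition a triangle edge lies in no wedge. Hence $\{a,b,c\}$ is a triangle.

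Next I would upgrade this to show $\{b,c\}$ is also a \emph{triangle} edge, not merely an edge. It clearly lies in a triangle (namely $\{a,b,c\}$), so I only need to rule out that $\{b,c\}$ lies in some wedge. Suppose for contradiction $(b,\{c,x\})$ is a wedge for some $x$, i.e.\ $\{b,x\}\in E$ but $\{c,x\}\notin E$ (the case rooted at $c$ is symmetric). I want to derive a contradiction with one of $\{a,b\}$ or $\{a,c\}$ being a triangle edge. Consider the node $x$ relative to $a$: either $\{a,x\}\in E$ or not. If $\{a,x\}\notin E$, then since $\{a,b\}\in E$ and $\{b,x\}\in E$, the triple $(b,\{a,x\})$ is a wedge containing $\{a,b\}$ — contradiction. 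If $\{a,x\}\in E$, then since $\{a,c\}\in E$ and $\{a,x\}\in E$ but $\{c,x\}\notin E$, the triple $(a,\{c,x\})$ is a wedge containing $\{a,c\}$ — again a contradiction. Either way we contradict triangle-edge-ness of one of the original two edges, so $\{b,c\}$ lies in no wedge and is therefore a triangle edge.

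With these two facts in hand, the statement follows: within a connected component of the edge-induced subgraph on triangle edges, any two triangle edges are connected by a path of triangle edges, and the argument above shows that adjacent triangle edges in such a path ``close up''; propagating this along the path (and using that the closing edges are themselves triangle edges, so they stay inside the same component) shows every pair of nodes appearing in the component is adjacent, i.e.\ the component induces a clique.

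The main obstacle is the second step — showing the closing edge $\{b,c\}$ is a triangle edge rather than merely an edge. This is exactly what makes the ``same connected component'' statement nontrivial: without it, one would only know that triangle edges sharing a vertex have their endpoints joined, but the joining edge might be a wedge edge belonging to a different part of the graph, breaking the inductive propagation. The case analysis on whether $\{a,x\}\in E$ is the crux and should be done carefully, including checking the symmetric situation where the putative wedge on $\{b,c\}$ is rooted at $c$ instead of $b$. I'd also want to double-check degenerate coincidences (e.g.\ $x=a$), which are ruled out because a wedge requires its root and two endpoints to be distinct with the endpoints non-adjacent, whereas $\{a,b\},\{a,c\}\in E$.
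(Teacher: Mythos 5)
Your proof is correct, and its first step is exactly the paper's (one-line) argument: two adjacent triangle edges whose endpoints are not joined would form a wedge containing a triangle edge, contradicting the definition. Where you diverge is the second step, in which you upgrade the closing edge $\{b,c\}$ to a triangle edge before propagating along paths. That step is not actually needed for the lemma: being ``part of no wedge'' only requires the \emph{triangle} edge to be one of the two arms of a wedge, while the other arm may be an arbitrary edge of $G$. So the induction goes through directly: for a path $x_0,x_1,\dots,x_k$ of triangle edges, if $\{x_0,x_i\}\in E$ and $\{x_0,x_{i+1}\}\notin E$, then $(x_i,\{x_0,x_{i+1}\})$ is a wedge containing the triangle edge $\{x_i,x_{i+1}\}$ --- a contradiction --- regardless of whether the chord $\{x_0,x_i\}$ is a triangle edge. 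Hence your stated obstacle (``the joining edge might be a wedge edge, breaking the propagation'') does not arise. That said, your extra step is not wasted: it proves a true strengthening, namely that every edge joining two nodes of a triangle clique is itself a triangle edge (the component is complete \emph{as a subgraph of triangle edges}), a fact the paper tacitly uses later when it identifies ``edges in the same triangle clique'' with ``adjacent triangle edges.'' So: same core mechanism as the paper, plus an unnecessary-but-useful refinement, at the cost of a longer case analysis.
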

\ifkdd\else
\begin{proof}
The contrary would imply the presence of a wedge within this subgraph,
which is a contradiction since by definition none of the triangle-edges can be part of any wedge.
\end{proof}
\fi
Thus, we can introduce the notion of a \emph{triangle clique}. 
\begin{definition}[Triangle cliques]
The connected components 
in the edge-induced subgraph induced by all triangle edges are called \emph{triangle cliques}.
\end{definition}

The nodes $\{x,y,z\}$ 
in Figure~\ref{figure:edgetypes}
form a triangle clique.
Note that not every clique in a graph is a triangle clique.
E.g., 
nodes $\{x,y,z,w\}$ form a clique but not a triangle clique. 

A node $k$ is a \emph{neighbor} of a triangle clique $C$ 
if $k$ is connected to at least one node of $C$.
It turns out that a neighbor of a triangle clique
is connected to all the nodes of that triangle clique.

\begin{lemma}[Neighbors of a triangle clique]\label{lem:neighbors}
Consider a triangle clique $C\subseteq V$,
and a node $k\in V\setminus C$.
Then, 
either $\{k,i\}\not\in E$ for all $i\in C$, 
or $\{k,i\}\in E$ for all $i\in C$.
\end{lemma}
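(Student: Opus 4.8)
The claim is that a vertex $k$ outside a triangle clique $C$ is either adjacent to all of $C$ or to none of $C$. I would argue by contradiction: suppose $k$ is adjacent to some node $i \in C$ but not to some node $j \in C$. Then the triple $(i, \{k, j\})$ has $\{i,k\} \in E$ and $\{i,j\} \in E$ but $\{k,j\} \notin E$, which is precisely the definition of a wedge rooted at $i$. The point of tension is that $\{i,j\}$ is then an edge of $G$ that participates in a wedge, whereas $\{i,j\}$ lives inside the triangle clique $C$ and hence should be a triangle edge, i.e.\ part of no wedge --- a contradiction. So the heart of the argument is establishing that $\{i,j\}$ really is a triangle edge.

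The subtlety is that $C$ is defined as a connected component of the edge-induced subgraph on triangle edges, so a priori a pair $i,j \in C$ need not be directly connected by a triangle edge; they are merely joined by a path of triangle edges. Here I would invoke Lemma~\ref{lem:triangleclique}, which says each such connected component is a clique: therefore $\{i,j\} \in E$, and moreover --- since every edge of this clique-component lies in the edge-induced subgraph of triangle edges --- $\{i,j\}$ is itself a triangle edge. (If one wants to be careful about the edge-induced subgraph: every edge of the induced subgraph is by construction a triangle edge, and Lemma~\ref{lem:triangleclique} tells us the component is a complete graph on its vertex set $C$, so all $\binom{|C|}{2}$ pairs are present and are triangle edges.) Once $\{i,j\}$ is known to be a triangle edge, the wedge $(i,\{k,j\})$ contradicts the defining property of triangle edges.

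I would also dispose of the trivial degenerate cases up front: if $|C| = 1$ the statement is vacuous, and if $k$ has no neighbor in $C$ at all we are in the first alternative and done; so we may assume $k$ has at least one neighbor $i$ in $C$ and, for contradiction, at least one non-neighbor $j$ in $C$, with $|C| \ge 2$ so that $i \ne j$ makes sense.

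The main obstacle --- though a mild one --- is the bookkeeping about what ``triangle clique'' means as a connected component of an \emph{edge}-induced subgraph, and making sure that membership of $i$ and $j$ in the same triangle clique genuinely yields that $\{i,j\}$ is a triangle edge rather than just some edge of $G$ on a path of triangle edges. Lemma~\ref{lem:triangleclique} is exactly the lever that closes this gap, so once it is cited the proof is a two-line contradiction.
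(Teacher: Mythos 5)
Your proposal is correct and follows essentially the same argument as the paper: assume $k$ is adjacent to $i\in C$ but not to $j\in C$, observe that $(i,\{j,k\})$ is then a wedge, and contradict the fact that $\{i,j\}$ is a triangle edge (via Lemma~\ref{lem:triangleclique}). Your extra care in justifying that $\{i,j\}$ is indeed a triangle edge, not just an edge of $G$, is a welcome clarification of a step the paper leaves implicit.
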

In other words, a neighbor of one node in the triangle clique must be a neighbor of them all,
in which case we can call it a \emph{neighbor of the triangle clique}.
\ifkdd\else
\begin{proof}
Assume the contrary, that is, 
that some node $k\in V\setminus C$ is connected to $i\in C$ but not to $j\in C$.
This means that $(i,\{j,k\})\in\cW$.
However, this contradicts the fact that $\{i,j\}$ is triangle edge.
\end{proof}
\fi
This lemma allows us to define the concepts \emph{bundle} and \emph{ray}:
\begin{definition}[Bundle and ray]
Consider a triangle clique $C\subseteq V$ and one of its neighbors $k\in V\setminus C$.
The set of edges $\{k,i\}$ connecting $k$ with $i\in C$ is called a \emph{bundle} of the triangle clique.
Each edge $\{k,i\}$ in a bundle is called a \emph{ray} of the triangle clique.
\end{definition}

In Figure~\ref{figure:edgetypes}
the edges $(w,x)$, $(w,y)$, and $(w,z)$ form a bundle
of the triangle clique with nodes $x,y,$ and $z$.

\paragraph{A technical condition to ensure finiteness of the optimal solution}
Without 
loss of generality,
we will further assume that no connected component of the graph is a clique ---
such connected components can be easily detected and handled separately.
This ensures that a finite optimal solution exists, 
as we show in Propositions~\ref{prop:finite-noslacks} and~\ref{prop:finite-slacks}.
\note[Tijl]{To be checked: do we need the next lemma (saying that each triangle edge is adjacent to a wedge edge---omitted in the KDD version) for anything else?}
\ifkdd\else
These propositions rest on the following lemma:
\begin{lemma}[Each triangle edge is adjacent to a wedge edge]\label{lem:triangle-next-to-wedge-edge}
Each triangle edge in a graph without cliques as connected components is immediately adjacent to a wedge edge.
\end{lemma}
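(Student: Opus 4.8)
The plan is to argue by contradiction: suppose some triangle edge $\{i,j\}$ is not adjacent to any wedge edge. I would first invoke Lemma~\ref{lem:triangleclique} to locate the triangle clique $C$ containing $\{i,j\}$, so that $i,j\in C$ and all edges internal to $C$ are triangle edges. The assumption that $\{i,j\}$ touches no wedge edge means that \emph{every} edge incident to $i$ or to $j$ is a triangle edge. In particular, since triangle edges form cliques (Lemma~\ref{lem:triangleclique}) and $C$ is the connected component of $\{i,j\}$ in the triangle-edge subgraph, every neighbor of $i$ (in $G$) must lie in $C$, and likewise every neighbor of $j$ must lie in $C$.

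Next I would push this to show that $C$ is actually a connected component of the whole graph $G$, which contradicts the standing assumption that no connected component is a clique (a triangle clique is a clique by Lemma~\ref{lem:triangleclique}). The key step is to propagate the ``all neighbors stay inside $C$'' property from $i$ and $j$ to every node of $C$. Take any $\ell\in C$ and any neighbor $k$ of $\ell$ in $G$; I want $k\in C$. Here I would use Lemma~\ref{lem:neighbors}: if $k\notin C$ then $k$ is a neighbor of the triangle clique $C$, hence connected to \emph{all} of $C$, in particular to $i$ and to $j$. But then the edge $\{i,k\}$ (equivalently $\{j,k\}$) together with the internal edges of $C$ would need examination — actually the cleaner route is: $k$ being adjacent to all of $C$ means $\{i,k\}$ and $\{j,k\}$ are edges; combined with $\{i,j\}\in E$ this is a triangle, but more to the point, consider whether $\{i,k\}$ is a triangle edge or a wedge edge. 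If $k$ has any neighbor outside $C\cup\{k\}$, or any non-neighbor inside $C\cup\{k\}$... this needs care. The simplest contradiction: $k\notin C$ and $k$ adjacent to all of $C$ implies $\{i,k\}$ is an edge incident to $i$ that is not internal to $C$; since we assumed every edge at $i$ is a triangle edge and triangle edges of $i$ all lie in $i$'s triangle clique $C$, we get $k\in C$, a contradiction. Hence every neighbor of every node of $C$ lies in $C$, so $C$ is a union of connected components of $G$; being connected (it is a clique) it is a single connected component, and it is a clique — contradicting the blanket assumption.

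The main obstacle I anticipate is the bookkeeping in the propagation step: carefully justifying that ``$\{i,k\}$ is a triangle edge and $i\in C$'' forces $k\in C$, which relies on the precise definition of triangle clique as a connected component of the triangle-edge subgraph together with Lemma~\ref{lem:triangleclique}. One must be a little careful that $\{i,k\}$ being a triangle edge means it participates in some triangle and no wedge, and that such an edge is in the same triangle clique as $\{i,j\}$ because they share the vertex $i$ and triangle cliques are connected components. Once that is pinned down, the rest is immediate. I would also double-check the degenerate possibility that $C$ has fewer than three nodes, but a triangle edge by definition lies in a triangle, so $|C|\ge 3$ automatically, and the argument goes through unchanged.
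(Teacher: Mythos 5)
Your proof is correct and takes essentially the same route as the paper's: assume the triangle edge $\{i,j\}$ meets only triangle edges, and conclude that its connected component closes up into a clique, contradicting the standing assumption that no connected component is a clique. The only cosmetic difference is that you channel the closure argument through Lemmas~\ref{lem:triangleclique} and~\ref{lem:neighbors} (triangle cliques and their neighbors), whereas the paper argues directly that all neighbors of $i$ and $j$ must be pairwise connected and form a clique component.
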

\fi
\ifkdd\else
\begin{proof}
First note that, in a connected component,
for each edge there exists at least one adjacent edge.
If a triangle edge $\{i,j\}$ were adjacent to triangle edges only,
all $i$'s and $j$'s neighbors would be connected.
Together with $i$ and $j$, this set of neighbors would form a connected component 
that is a clique---a contradiction.
\end{proof}
\fi
\begin{proposition}[Finite feasible region without slacks]\label{prop:finite-noslacks}
A graph in which no connected component is a clique has a finite feasible region
for Problems~\ref{eq:LP1-STC} and \ref{eq:LP2-STC}.
\end{proposition}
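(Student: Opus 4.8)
For Problem~\ref{eq:LP1-STC} there is essentially nothing to prove: constraints~(\ref{eq:lower}) and~(\ref{eq:upper}) confine every feasible point to the box $[0,1]^{|E|}$, so the feasible region is bounded regardless of the graph, and the no-clique hypothesis is not needed here. The work lies in Problem~\ref{eq:LP2-STC}, where the upper bounds have been replaced by the weaker triangle constraints. I would first observe that the feasible region of~\ref{eq:LP2-STC} is nonempty (take $\bw=0$: every wedge, triangle, and nonnegativity constraint holds, using $d\ge 0$) and closed (a finite intersection of half-spaces), so the claim reduces to showing it is bounded; and for that it suffices to exhibit, for every edge, a finite upper bound on its strength valid throughout the feasible region.

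Next I would record the dichotomy that in a graph without clique components every edge is a wedge edge or a triangle edge: an edge lying on no wedge and no triangle forces each of its endpoints to have no other neighbour (any further neighbour of an endpoint would create a wedge or a triangle), hence it would be a $K_2$ component, contradicting the hypothesis. Then I bound the two kinds of edges separately. A wedge edge $\{i,j\}$ occurs in the wedge constraint $w_{ij}+w_{ik}\le 1$ of some wedge it belongs to, there with coefficient one next to a nonnegative variable, so $w_{ij}\le 1$. For a triangle edge $\{i,j\}$ I would invoke Lemma~\ref{lem:triangle-next-to-wedge-edge}: it is adjacent to a wedge edge, say $\{i,k\}$ with $k\notin\{i,j\}$ (the case $\{j,k\}$ is symmetric). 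Since $\{i,j\}$ lies on no wedge, the triple $i,j,k$ cannot be a wedge, so $\{j,k\}\in E$ and $\{i,j,k\}$ is a triangle; then the triangle constraint~(\ref{eq:triangle}) with root $j$, namely $w_{ij}+w_{jk}\le 2+d\cdot w_{ik}$, together with $w_{jk}\ge 0$ and the bound $w_{ik}\le 1$ already established for the wedge edge $\{i,k\}$, gives $w_{ij}\le 2+d$. Hence the whole feasible region sits inside $[0,2+d]^{|E|}$ and is bounded; being also nonempty and closed it is compact, which in particular forces the linear objective to attain a finite optimum.

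The one step that needs care is the triangle-edge case: one must verify that the wedge edge handed over by Lemma~\ref{lem:triangle-next-to-wedge-edge} genuinely closes a \emph{triangle} with $\{i,j\}$ rather than a wedge --- which is exactly what ``$\{i,j\}$ is a triangle edge'' rules out --- and then pick, among the three triangle constraints of that triangle, the one that carries the already-bounded wedge edge $\{i,k\}$ on the right-hand side (the other two are useless here). This is also the only place the no-clique hypothesis enters, both through Lemma~\ref{lem:triangle-next-to-wedge-edge} and through the edge dichotomy above; everything else is routine.
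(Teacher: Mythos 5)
Your proof is correct and follows essentially the same route as the paper: wedge edges are bounded by $1$ via the wedge constraints, and each triangle edge is bounded via Lemma~\ref{lem:triangle-next-to-wedge-edge} through a triangle constraint whose right-hand-side edge is a wedge edge. You merely make explicit two details the paper leaves implicit --- that under the no-clique hypothesis every edge is a wedge edge or a triangle edge, and that the adjacent wedge edge supplied by the lemma indeed closes a triangle with the triangle edge --- which is a welcome tightening rather than a different argument.
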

Thus, also the optimal solution is finite.
\ifkdd\else
\begin{proof}
The weight of wedge edges is trivially bounded by $1$.
From Lemma~\ref{lem:triangle-next-to-wedge-edge}, we know that
the weight of each triangle edge is bounded by a at least one triangle inequality where the strength of the edge on the right hand side
is a for a wedge edge---i.e., it is also bounded by a finite number,
thus proving the theorem.
\end{proof}
\fi
For Problems~\ref{eq:LP3-STC} and \ref{eq:LP4-STC} the following weaker result holds:
\begin{proposition}[Finite optimal solution with slacks]\label{prop:finite-slacks}
A graph in which no connected component is a clique has a finite optimal solution
for Problems~\ref{eq:LP3-STC} and \ref{eq:LP4-STC} for sufficiently large $C$.
\end{proposition}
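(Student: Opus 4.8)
The plan is to show that for sufficiently large $C$, the objective of Problems~\ref{eq:LP3-STC} and~\ref{eq:LP4-STC} is bounded above on the feasible region, and that the supremum is attained. Since \ref{eq:LP3-STC} is a special case of \ref{eq:LP4-STC} where the edges of $E$ keep the nonnegativity lower bound, it suffices to treat \ref{eq:LP4-STC}; the argument for \ref{eq:LP3-STC} is identical with some lower bounds tightened to $0$. The feasible region is a (possibly unbounded) polyhedron and the objective is linear, so the standard fact is: either the LP is unbounded along some recession direction, or the optimum is finite and attained at a vertex. Hence the real content is to rule out, for $C$ large enough, any recession direction $\bw^{\ast}\ne 0$ (satisfying the homogeneous system $w^{\ast}_{ij}+w^{\ast}_{ik}\le d\cdot w^{\ast}_{jk}$ for all wedges and triangles, and $w^{\ast}_{ij}\ge 0$, $w^{\ast}_{jk}\ge 0$ for the lower-bounded coordinates) along which the objective $\sum_{\{i,j\}\in E}w^{\ast}_{ij}-C\sum_{\{j,k\}\in\bE}w^{\ast}_{jk}$ is positive.

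First I would analyze the recession cone. Note the lower bounds in \ref{eq:LP4-STC} are all of the form ``$\ge-\frac1d$'', so in the homogeneous recession system they become ``$\ge 0$'': every coordinate $w^{\ast}_e\ge 0$, for $e\in E\cup\bE$. Now I claim that any recession direction must have $w^{\ast}_{ij}=0$ on every wedge edge. Indeed, a wedge edge $\{i,j\}$ is the root-incident edge of some wedge $(i,\{j,k\})$ or $(j,\{i,k\})$; the corresponding recession constraint reads $w^{\ast}_{ij}+w^{\ast}_{ik}\le d\cdot w^{\ast}_{jk}$ where $\{j,k\}\in\bE$, so $w^{\ast}_{ij}$ is not forced to $0$ by this alone --- it can be "paid for" by making $w^{\ast}_{jk}$ large. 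This is exactly where the penalty term enters: any recession direction that inflates a wedge edge must also inflate the associated $\bE$-coordinate. Concretely, summing the wedge recession inequalities appropriately, one gets that $\sum_{\text{wedge edges}}w^{\ast}_{ij}$ is at most $d$ times a nonnegative combination of the $w^{\ast}_{jk}$'s with $\{j,k\}\in\bE$; and by Lemma~\ref{lem:triangle-next-to-wedge-edge} every triangle edge is adjacent to a wedge edge, so via the triangle recession inequalities $w^{\ast}_{ij}+w^{\ast}_{ik}\le d\cdot w^{\ast}_{jk}$ (with $\{j,k\}$ a wedge edge or in $\bE$) the triangle-edge coordinates are in turn controlled by the wedge-edge and $\bE$-coordinates. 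Chaining these, there is a constant $M=M(G,d)$, independent of $C$, with $\sum_{\{i,j\}\in E}w^{\ast}_{ij}\le M\sum_{\{j,k\}\in\bE}w^{\ast}_{jk}$ for every recession direction.

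Given that inequality, the objective along a recession direction satisfies $\sum_{\{i,j\}\in E}w^{\ast}_{ij}-C\sum_{\{j,k\}\in\bE}w^{\ast}_{jk}\le (M-C)\sum_{\{j,k\}\in\bE}w^{\ast}_{jk}\le 0$ as soon as $C\ge M$, using $w^{\ast}_{jk}\ge 0$. Thus for $C\ge M$ no recession direction strictly increases the objective, so the LP is bounded above; being a feasible (the all-zero vector, or any point with all coordinates in $[0,1]$ wedge-wise, is feasible) bounded LP, it attains its optimum at a vertex, which is finite. This proves the proposition. I would remark that the threshold $M$ is the ratio governing how much ``free'' strength the triangle/wedge structure can generate per unit of inferred absent-edge strength, and that the no-clique-component hypothesis is used precisely through Lemma~\ref{lem:triangle-next-to-wedge-edge} to ensure triangle edges cannot escape this bookkeeping.

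The main obstacle I anticipate is making the chaining argument in the second paragraph fully rigorous and uniform: one must be careful that a triangle edge may be "several hops" from a wedge edge within its triangle clique, so the bound constant $M$ accumulates a factor of $d$ per hop and depends on the clique sizes --- this is fine since $G$ is fixed, but it needs an induction on distance within a triangle clique (using Lemma~\ref{lem:neighbors}, every neighbor of a triangle clique is joined to all its nodes, which keeps the relevant triangle inequalities available). A cleaner alternative, which I would pursue if the hop-counting gets unwieldy, is to argue directly on the recession cone $R$: it is a polyhedral cone not depending on $C$, and the linear functional $\ell_0(\bw^{\ast})=\sum_{\{i,j\}\in E}w^{\ast}_{ij}$ and $\ell_1(\bw^{\ast})=\sum_{\{j,k\}\in\bE}w^{\ast}_{jk}$ satisfy: $\ell_1(\bw^{\ast})=0$ on $R$ forces $\bw^{\ast}=0$ (by Proposition~\ref{prop:finite-noslacks}, the subcone with all $\bE$-coordinates zero is the recession cone of the finite polyhedron of \ref{eq:LP2-STC}, hence trivial). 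Then $\sup\{\ell_0/\ell_1 : \bw^{\ast}\in R,\ \ell_1(\bw^{\ast})>0\}$ is finite by compactness (restricting to $\ell_1=1$ gives a compact slice of $R$), call it $M$; and for $C>M$ the objective is $\le 0$ on all of $R$, giving boundedness and attainment as before.
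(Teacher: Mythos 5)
Your proposal is correct, and its engine is the same as the paper's: for large $C$ the penalty on the $\bE$-strengths outweighs the bounded rate at which extra slack can buy extra edge strength, with the no-clique hypothesis entering only to control the triangle edges. The packaging differs, though. The paper argues directly at an optimum by a perturbation: increasing some $w_{jk}$, $\{j,k\}\in\bE$, by $\delta$ lets every other weight in its component grow by at most $\max\{d,d^2\}\cdot\delta$, so with at most $n^2$ edges the objective strictly decreases once $C>n^2\max\{d,d^2\}$; finiteness of the slacks then propagates to wedge edges and, via Lemma~\ref{lem:triangle-next-to-wedge-edge}, to triangle edges. You instead work with the recession cone and invoke boundedness-plus-attainment for LPs, and your second variant replaces the explicit chaining constant by a compactness argument on the slice $\{\ell_1=1\}$ of the cone, using Proposition~\ref{prop:finite-noslacks} to see that the subcone with vanishing $\bE$-coordinates is the (trivial) recession cone of \ref{eq:LP2-STC}. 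That variant is fully rigorous and arguably cleaner than the paper's sketch, at the price of a non-explicit threshold for $C$, whereas the paper (and your first, chaining variant) yields a concrete threshold. One small simplification for your chaining variant: the induction on hops inside a triangle clique is unnecessary. If a triangle edge $\{i,j\}$ is adjacent to a wedge edge $\{i,k\}$ (guaranteed by Lemma~\ref{lem:triangle-next-to-wedge-edge}), then $\{j,k\}\in E$ must hold, since otherwise $(i,\{j,k\})$ would be a wedge containing $\{i,j\}$; so $\{i,j,k\}\in\cT$ and a single triangle inequality bounds $w^{\ast}_{ij}$ by $d$ times a wedge-edge coordinate, which is in turn bounded by $d$ times a $\bE$-coordinate—exactly the $\max\{d,d^2\}$ factor the paper uses.
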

Note that for these problems the feasible region is unbounded.
\ifkdd\else
\begin{proof}
Let $n$ be the number of nodes in the largest connected component in the graph.
Let $j,k\in V$ be two nodes in this connected component for which $\{j,k\}\not\in E$.
We will first show that $w_{jk}$ is finite at the optimum for sufficiently large $C$,
by showing that increasing it from a finite value strictly and monotonously reduces the value of the objective function.

By increasing $w_{jk}$ by $\delta$,
the other weights in the connected component may each increase by at most $\max\{d,d^2\}\cdot\delta$.
Indeed, the left hand side in the wedge constraints for wedges $(i,\{j,k\})$ can increase by $d\cdot\delta$,
which may result in a potential increase of the left hand sides of the triangle constraints by $d^2\cdot\delta$.
As there are at most $n^2$ edges in the connected component,
this means that the objective function will strictly decrease if $n^2\cdot\max\{d,d^2\}<C$,
as we set out to prove.

If $w_{jk}$ for all $(i,\{j,k\})\in\cW$ is finite,
this means that also each wedge edge is finitely bounded.

From Lemma~\ref{lem:triangle-next-to-wedge-edge},
it thus also follows that each triangle edge is finitely bounded.
\end{proof}
\fi

\subsection{Symmetry in the optimal solutions}

We now proceed to show that certain symmetries exist in \emph{all} optimal solutions%
\ifkdd
,
\else
 (Sec.~\ref{sec:symm-all}),
\fi
while for other symmetries we show that there always \emph{exists} 
an optimal solution that exhibits it%
\ifkdd
.
\else
 (Sec.~\ref{sec:symm-some}).
\fi

\subsubsection{There always exists an optimal solution that exhibits symmetry}\label{sec:symm-some}

We first state a general result,
before stating a more practical corollary.
The theorem pertains to automorphisms $\alpha:V\rightarrow V$ of the graph $G$,
defined as node permutations that leave the edges of the graph unaltered:
for $\alpha$ to be a graph automorphism, it must hold that $\{i,j\}\in E$ if and only if $\{\alpha(i),\alpha(j)\}\in E$.
Graph automorphisms form a permutation group defined over the nodes of the graph.

\begin{theorem}[Invariance under graph automorphisms]\label{thm:automorphisms}
For any subgroup $\cA$ of the graph automorphism group of $G$,
there exists an optimal solution for Problems~\ref{eq:LP1-STC}, \ref{eq:LP2-STC}, \ref{eq:LP3-STC} and \ref{eq:LP4-STC}
that is invariant under all automorphisms $\alpha\in\cA$.
In other words, there exists an optimal solution $\bw$ such that $w_{ij}=w_{\alpha(i)\alpha(j)}$ for each automorphism $\alpha\in\cA$.
\end{theorem}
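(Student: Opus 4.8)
The plan is to use a standard averaging (symmetrization) argument, exploiting the fact that all four problems maximize a \emph{linear} objective over a \emph{convex} feasible set, and that the action of the automorphism group preserves both. First I would fix a subgroup $\cA$ and note that any $\alpha\in\cA$ acts on strength vectors $\bw$ by $(\alpha\cdot\bw)_{ij} = w_{\alpha^{-1}(i)\alpha^{-1}(j)}$; this is well-defined on edges and on wedge end-point pairs in $\bE$ because $\alpha$ being a graph automorphism maps edges to edges, wedges to wedges (the defining condition $\{i,j\},\{i,k\}\in E$, $\{j,k\}\notin E$ is preserved), and triangles to triangles. I would then check that each of the constraint families --- the wedge/triangle constraints $w_{ij}+w_{ik}\le c + d\cdot w_{jk}$, the lower bounds $w_{ij}\ge 0$ or $w_{ij}\ge -\tfrac1d$, and the upper bounds $w_{ij}\le 1$ in \ref{eq:LP1-STC} --- is permuted among itself by $\alpha$; hence $\alpha\cdot\bw$ is feasible whenever $\bw$ is. Similarly the objective $\sum_{\{i,j\}\in E} w_{ij} - C\sum_{\{j,k\}\in\bE} w_{jk}$ is invariant, since $\alpha$ merely relabels the terms of each sum. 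So $\alpha$ maps the feasible region to itself and preserves the objective value, in particular it maps optimal solutions to optimal solutions.

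Next, starting from any optimal solution $\bw^\star$ (which exists and is finite by Propositions~\ref{prop:finite-noslacks} and~\ref{prop:finite-slacks}, for $C$ large enough in the latter two cases), I would define the averaged solution
\begin{align*}
\overline{\bw} \;=\; \frac{1}{|\cA|}\sum_{\alpha\in\cA}\alpha\cdot\bw^\star.
\end{align*}
Because the feasible region is convex and each $\alpha\cdot\bw^\star$ is feasible, $\overline{\bw}$ is feasible; because the objective is linear and each $\alpha\cdot\bw^\star$ is optimal, $\overline{\bw}$ attains the same optimal value, hence is itself optimal. Finally, for any $\beta\in\cA$, $\beta\cdot\overline{\bw} = \frac{1}{|\cA|}\sum_{\alpha\in\cA}(\beta\alpha)\cdot\bw^\star = \frac{1}{|\cA|}\sum_{\gamma\in\cA}\gamma\cdot\bw^\star = \overline{\bw}$, using that $\alpha\mapsto\beta\alpha$ is a bijection of $\cA$ (here $\cA$ being a subgroup is exactly what is needed). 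Unwinding the definition of the group action, $\overline{\bw}_{ij} = \overline{\bw}_{\beta(i)\beta(j)}$ for all $\beta\in\cA$, which is the claimed invariance.

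The argument is essentially routine; the only genuine points requiring care are (i) confirming that the group action is well-defined on the extended index set $E\cup\bE$ and that automorphisms really do send wedges to wedges and triangles to triangles --- this needs the ``if and only if'' in the definition of graph automorphism --- and (ii) invoking finiteness of an optimum, so that ``an optimal solution'' is something one can actually average; for \ref{eq:LP3-STC} and \ref{eq:LP4-STC} this is where the hypothesis that no connected component is a clique (and $C$ sufficiently large) enters, via Proposition~\ref{prop:finite-slacks}. I expect the main obstacle, if any, to be purely expository: stating the group action cleanly enough that the invariance of the three constraint types and of the objective is manifestly a relabeling, rather than something to be recomputed for each of the four LPs separately.
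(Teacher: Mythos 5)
Your proposal is correct and takes essentially the same route as the paper's own proof: average an optimal solution over the orbit of the subgroup $\cA$, use the fact that automorphisms preserve edges, wedges, and triangles (so they permute the constraints and the terms of the objective), convexity of the feasible region for feasibility of the average, and linearity of the objective for optimality. The only differences are expository—you spell out the group-shift argument for invariance of the averaged solution, the well-definedness of the action on $\bE$, and the finiteness of the optimum (via Propositions~\ref{prop:finite-noslacks} and~\ref{prop:finite-slacks}), points the paper leaves implicit.
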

\ifkdd\else
\begin{proof}
Let $w_{ij}$ be the optimal strength for the node pair $\{i,j\}$
in an optimal solution $\bw$.
Then, we claim that assigning a strength $\frac{1}{|\cA|}\sum_{\alpha\in\cA} w_{\alpha(i)\alpha(j)}$ 
to each node pair $\{i,j\}$ is also an optimal solution.
This solution satisfies the condition in the theorem statement, so if true, the theorem is proven.

It is easy to see that this strength assignment has the same value of the objective function.
Thus, we only need to prove that it is also feasible.

As $\alpha$ is a graph automorphism, it preserves the presence of edges, wedges, and triangles (e.g., $\{i,j\}\in E$ if and only if $\{\alpha(i),\alpha(j)\}\in E$).
Thus, if a set of strengths $w_{ij}$ for node pairs $\{i,j\}$ is a feasible solution,
then also the set of strengths $w_{\alpha(i)\alpha(j)}$ is feasible for these node pairs.
Due to convexity of the constraints, also the average over all $\alpha$ of these strengths is feasible,
as required.
\end{proof}
\fi

Enumerating all automorphisms of a graph is computationally at least as hard as solving the graph-isomorphism problem.
The graph-isomorphism problem is known to belong to $\mathbf{NP}$, 
but it is not known whether it belongs to $\mathbf{P}$.
However, the set of permutations in the following proposition is easy to find.
\begin{proposition}
The set $\Pi$ of permutations $\alpha:V\rightarrow V$ for which $i\in C$ if and only if $\alpha(i)\in C$ for all triangle cliques $C$ in $G$ forms a subgroup of the automorphism group of $G$.
\end{proposition}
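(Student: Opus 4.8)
The plan is to verify the two group axioms that are not automatic — closure under composition and closure under inverses — after first establishing the key structural fact that each $\alpha \in \Pi$ is actually a graph automorphism. The identity permutation clearly lies in $\Pi$, and associativity is inherited from the symmetric group on $V$, so these are the only points requiring work.

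First I would argue that every $\alpha \in \Pi$ is a graph automorphism, i.e. $\{i,j\} \in E \iff \{\alpha(i),\alpha(j)\} \in E$. The defining condition of $\Pi$ only constrains how $\alpha$ moves nodes among triangle cliques; the content to extract is that it must also respect edges. Here I would do a case analysis on the edge $\{i,j\}$ using the edge-type dichotomy from Section~\ref{sec:definitions}. If $\{i,j\}$ is a triangle edge, then $i$ and $j$ lie in a common triangle clique $C$ (by Lemma~\ref{lem:triangleclique}, each triangle clique is a clique, and conversely the connected component of the triangle-edge subgraph containing $\{i,j\}$ is a single triangle clique); since $\alpha$ preserves membership in $C$, both $\alpha(i),\alpha(j) \in C$, and as $C$ is a clique, $\{\alpha(i),\alpha(j)\} \in E$. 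If $\{i,j\}$ is a wedge edge, the situation is more delicate, and this is where I expect the main obstacle to be: I would need to show wedge edges are sent to wedge edges. The tool is Lemma~\ref{lem:neighbors}: if $i \in C$ for a triangle clique $C$ and $j \notin C$, then $\{i,j\}$ being an edge means $j$ is a neighbor of $C$, hence connected to \emph{all} of $C$ — so the bundle from $j$ to $C$ is determined entirely by which nodes lie in $C$. Since $\alpha$ preserves $C$ (setwise) and preserves the complement of $C$, it permutes the neighbors-of-$C$ among themselves or at least preserves the neighbor relation; one then reconstructs that $\{\alpha(i),\alpha(j)\}$ is again a ray. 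The remaining sub-case is a wedge edge $\{i,j\}$ with neither endpoint in any triangle clique, i.e. both are ``pure'' wedge nodes — here I would need a separate argument, possibly observing that such edges, together with the triangle-clique structure, are still forced: any edge not of the above forms must connect two nodes that are in no triangle together, and I would have to check $\Pi$ still cannot break it. If this last case cannot be handled by the triangle-clique bookkeeping alone, the proposition as stated would need $\Pi$ defined to also fix such edges; I would flag that while writing.

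Once $\Pi \subseteq \mathrm{Aut}(G)$ is established, closure under composition is straightforward: if $\alpha,\beta \in \Pi$ and $C$ is a triangle clique, then for any $i$, $i \in C \iff \beta(i) \in C \iff \alpha(\beta(i)) \in C$, so $\alpha\beta \in \Pi$; composition of automorphisms is an automorphism, so $\alpha\beta \in \mathrm{Aut}(G)$ as well. Closure under inverses is equally direct: $\alpha \in \Pi$ means $\alpha$ restricts to a bijection $C \to C$ on each triangle clique (it maps $C$ into $C$, and since it also maps $V \setminus C$ into $V \setminus C$, it is surjective onto $C$), hence $\alpha^{-1}$ also preserves each $C$ setwise, giving $\alpha^{-1} \in \Pi$; and $\alpha^{-1}$ is an automorphism because $\alpha$ is.

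The genuinely nontrivial step is the wedge-edge case of ``$\Pi \subseteq \mathrm{Aut}(G)$'': one must rule out that a permutation which shuffles triangle cliques and shuffles the non-clique nodes could nevertheless destroy a wedge edge or create a spurious one. I would lean entirely on Lemmas~\ref{lem:triangleclique} and~\ref{lem:neighbors}, which together say the adjacency pattern between a triangle clique and the rest of the graph is ``all-or-nothing'' and hence recoverable from the clique partition alone; the only worry is edges joining two nodes that belong to no triangle clique, and I would check carefully whether the hypotheses of the proposition (no connected component is a clique, etc.) are strong enough to force those too, or whether a small strengthening of the definition of $\Pi$ is needed.
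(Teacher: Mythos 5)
Your overall route is the paper's: show $\Pi\subseteq\mathrm{Aut}(G)$ using Lemma~\ref{lem:triangleclique} and Lemma~\ref{lem:neighbors}, then check that the identity, inverses and compositions preserve the clique-membership condition. The one place where you stop short --- nodes and edges touching no triangle clique --- is precisely where the statement has to be read the way the paper's proof reads it, and your worry is legitimate: under the literal reading, a node lying in no triangle clique satisfies ``$i\in C$ iff $\alpha(i)\in C$'' vacuously, so in a triangle-free graph $\Pi$ would be the full symmetric group on $V$, which is not contained in the automorphism group of, say, a path (and the standing assumption that no connected component is a clique does not rescue this). So no amount of bookkeeping with the two lemmas can close your third case as stated; the same issue already infects your ray case, where the claim that $\alpha$ ``permutes the neighbors-of-$C$ among themselves or at least preserves the neighbor relation'' is unjustified if $\alpha$ may send a non-clique neighbor $j$ to an arbitrary non-clique node.

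The paper's proof implicitly resolves this through the stronger reading contained in its one-line argument ``$\alpha$ only permutes nodes \emph{within} each triangle clique'': members of $\Pi$ fix every node that belongs to no triangle clique and act by permutation inside each clique (this is also the only reading under which Corollary~\ref{cor:permutations} is used later, giving equal strengths within cliques and bundles). With that reading all your cases collapse to easy ones: a triangle edge stays inside its clique, which is complete by Lemma~\ref{lem:triangleclique}; for a ray $\{i,j\}$ with $i\in C$, $j\notin C$, we have $\alpha(j)=j$, still a neighbor of $C$ and hence adjacent to $\alpha(i)$ by Lemma~\ref{lem:neighbors}; an edge joining two distinct cliques is handled by applying Lemma~\ref{lem:neighbors} twice; and an edge with both endpoints outside all cliques is pointwise fixed. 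Your verification of the group axioms is fine and matches the paper. The gap, then, is not in your strategy but in leaving the definitional ambiguity unresolved; adopting the ``fix everything outside the triangle cliques'' reading (or adding that clause to the definition of $\Pi$, as you yourself suggest) completes the proof.
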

Thus the set $\Pi$ contains permutations of the nodes that map any node in a triangle clique onto another node in the same triangle clique.
\ifkdd\else
\begin{proof}
Each permutation $\alpha\in\Pi$ is an automorphism of $G$.
This follows directly from Lemma~\ref{lem:neighbors} and the fact that $\alpha$ only permutes nodes \emph{within} each triangle clique.
Furthermore, it is clear that if $\alpha\in\Pi$ then also $\alpha^{-1}\in\Pi$, and if $\alpha_1,\alpha_2\in\Pi$ then also $\alpha_1\alpha_2\in\Pi$.
Finally, $\Pi$ contains at least the identity and is thus non-empty, proving that $\Pi$ is a subgroup of $\cA$.
\end{proof}
\fi

We can now state the more practical Corollary of Theorem~\ref{thm:automorphisms}:
\begin{corollary}[Invariance under permutations within triangle cliques]\label{cor:permutations}
Let $\Pi$ be the set of permutations $\alpha:V\rightarrow V$ for which $i\in C$ if and only if $\alpha(i)\in C$ for all triangle cliques $C$. 
There exists an optimal solution $\bw$ for problems~\ref{eq:LP1-STC}, \ref{eq:LP2-STC}, \ref{eq:LP3-STC} and \ref{eq:LP4-STC}
for which $w_{ij}=w_{\alpha(i)\alpha(j)}$ for each permutation $\alpha\in\Pi$.
\end{corollary}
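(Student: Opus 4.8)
The plan is to obtain the corollary as an immediate instance of Theorem~\ref{thm:automorphisms}, applied to the particular set $\Pi$, so the only real work is to confirm that $\Pi$ is a subgroup of the automorphism group of $G$ --- which is exactly the content of the Proposition stated just before the corollary. Concretely, I would first show that every $\alpha\in\Pi$ is a graph automorphism. Take an edge $\{i,j\}\in E$ and distinguish three cases. If $i$ and $j$ lie in the same triangle clique $C$, then since $C$ induces a clique (Lemma~\ref{lem:triangleclique}), $\{\alpha(i),\alpha(j)\}$ again joins two nodes of $C$ and is therefore an edge. If exactly one endpoint, say $i$, lies in a triangle clique $C$ and $j\notin C$, then $j$ is a neighbor of $C$, so by Lemma~\ref{lem:neighbors} $j$ is adjacent to \emph{every} node of $C$, in particular to $\alpha(i)$, while $\alpha(j)=j$; hence $\{\alpha(i),\alpha(j)\}\in E$. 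If neither endpoint lies in any triangle clique, then $\alpha$ fixes both, so the edge is preserved. Applying the same reasoning to $\alpha^{-1}$ (which also permutes only within triangle cliques) gives the reverse implication, so $\{i,j\}\in E\iff\{\alpha(i),\alpha(j)\}\in E$, i.e.\ $\alpha\in\cA$.

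Next I would check the group axioms for $\Pi$: it contains the identity; it is closed under composition, because composing two permutations that each stabilise every triangle clique setwise again stabilises every triangle clique; and it is closed under inversion for the same reason. Together with the previous paragraph this shows $\Pi$ is a subgroup of the automorphism group of $G$.

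Finally I would invoke Theorem~\ref{thm:automorphisms} with $\cA=\Pi$: for each of Problems~\ref{eq:LP1-STC}--\ref{eq:LP4-STC} there is an optimal solution $\bw$ with $w_{ij}=w_{\alpha(i)\alpha(j)}$ for all $\alpha\in\Pi$, which is precisely the statement of the corollary. For Problems~\ref{eq:LP3-STC} and \ref{eq:LP4-STC} I would note, as a minor remark, that the symmetrisation/averaging argument inside the proof of Theorem~\ref{thm:automorphisms} also acts consistently on the extended variables indexed by $\bE$, since an automorphism permutes wedges and therefore maps $\bE$ onto itself.

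The one step with genuine content is the claim that elements of $\Pi$ are automorphisms, and within that the only non-routine case is the ``ray'' case where one endpoint of an edge sits inside a triangle clique and the other outside; this is exactly where Lemma~\ref{lem:neighbors} is essential, and I expect it to be the main (indeed only) obstacle. The remaining verifications --- the group axioms for $\Pi$, and the feasibility/objective-value bookkeeping --- are routine, the latter being already handled in Theorem~\ref{thm:automorphisms}.
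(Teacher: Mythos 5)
Your overall route is exactly the paper's: establish that $\Pi$ is a subgroup of the automorphism group of $G$ (the Proposition preceding the corollary) and then specialize Theorem~\ref{thm:automorphisms} to $\cA=\Pi$; the remark about the extended variables on $\bE$ for Problems~\ref{eq:LP3-STC} and \ref{eq:LP4-STC} is also consistent with how the paper's proof of Theorem~\ref{thm:automorphisms} handles wedges.

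However, your case analysis for ``every $\alpha\in\Pi$ is an automorphism'' has a hole: you never treat an edge $\{i,j\}$ whose endpoints lie in two \emph{different} triangle cliques $C\neq C'$. Your second case assumes that the endpoint outside $C$ is fixed by $\alpha$ (``while $\alpha(j)=j$''), which fails precisely when $j\in C'$, since $\alpha$ may move $j$ inside $C'$. Such edges genuinely occur: take a $4$-clique on $\{i,i',j,j'\}$ and attach a node $k$ adjacent only to $i$ and $i'$; then $\{i,i'\}$ and $\{j,j'\}$ are triangle edges forming two distinct triangle cliques, while all four cross edges $\{i,j\},\{i,j'\},\{i',j\},\{i',j'\}$ are wedge edges joining the two cliques. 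The repair is immediate and uses Lemma~\ref{lem:neighbors} twice: since $j$ is a neighbor of $C$, it is adjacent to every node of $C$, so $\{\alpha(i),j\}\in E$; then $\alpha(i)$ is a neighbor of $C'$, hence adjacent to every node of $C'$, in particular to $\alpha(j)$, giving $\{\alpha(i),\alpha(j)\}\in E$. With that case added (and the same argument for $\alpha^{-1}$), your proof is complete and coincides with the paper's.
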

Thus there always exists an optimal solution for which edges in the same triangle clique (i.e., adjacent triangle edges) have equal strength,
and for which rays in the same bundle have equal strength.

\ifkdd\else
Such a symmetric optimal solution can be constructed from any other optimal solution,
by setting the strength of a triangle edge equal to the average of strengths within the triangle clique it is part of,
and setting the strength of each ray equal to the average of the strengths within the bundle it is part of.
Indeed, this averaged solution is equal to the average of all permutations of the optimal solution,
which, from convexity of the problem, is also feasible and optimal.
\fi

\oldnote[Tijl]{Based on the automorphism group of the graph: symmetric solutions are those that are invariant under certain graph automorphisms (note: not necessarily under all graph automorphisms!)}

\oldnote[Tijl]{Proof outline: (Lemma 1) subgraph induced by triangle-edges is a set of cliques. (Definition) Call this a `triangle clique'. (Lemma 2) any node connected to an endpoint of a triangle-edge is also connected to the other endpoint. This obviously means that if it is connected to a point in a triangle-clique, it is connected to all such points. (Definition) Call such a set of edges a `bundle', and such an edge a `ray' of the triangle clique. (Lemma) Permutations of the set of nodes in a triangle-clique form a subgroup of the graph automorphism group. (Theorem) A finite optimum exists if $d=1$ or if all its connected components contain at least one wedge-edge. Take any optimum, and apply all elements from the graph automorphism subgroup to it. The average of all these graphs is constant within all triangle-cliques, and within each bundle, as the strengths within these sets are the averages of the same sets of strengths.}

\oldnote[Tijl]{A slightly more general version of this is actually even more intuitive and simple: we can consider a subgroup of the automorphism group defined by swapping isomorphic nodes  (nodes with the same neighborhood, except perhaps themselves). In the adjacency matrix, triangle cliques can be recognized as identical columns (or rows) after adding a unity diagonal. Otherwise isomorphic nodes can be recognized as identical columns without adding this unity diagonal---these are nodes connected to the same set of nodes, but not necessarily to each other. Perhaps we should formulate the theorem in this slightly more general way.
An additional reason for formulating the symmetry result is that the slacks, which exist for such absent edges with identical non-empty neighborhoods, can then be the same at the optimum as well. (Perhaps a similar result to the next subsubsection can be obtained, saying that they are always equal at the optimum?}

\subsubsection{In each optimum, connected triangle-edges have equal strength}\label{sec:symm-all}

\ifkdd
Only 
\else
Here, we will prove that only
\fi
some of the symmetries discussed above are present in \emph{all} optimal solutions,
as formalized by the following theorem:

\begin{theorem}[Optimal strengths of adjacent triangle edges are equal]\label{thm:all}
In any optimal solution of Problems~\ref{eq:LP1-STC}, \ref{eq:LP2-STC}, \ref{eq:LP3-STC} and \ref{eq:LP4-STC},
the strengths of adjacent triangle edges are equal.
\end{theorem}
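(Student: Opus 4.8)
The plan is to argue by a local exchange (averaging) argument applied to pairs of adjacent triangle edges, strengthening Corollary~\ref{cor:permutations} from an existence statement to a statement about \emph{every} optimum. Concretely, let $\bw$ be any optimal solution, and suppose for contradiction that two adjacent triangle edges $\{i,j\}$ and $\{i,k\}$ (necessarily sharing a node, say $i$, and by Lemma~\ref{lem:triangleclique} lying in a common triangle clique $C$ with $\{j,k\}\in E$ as well) have $w_{ij}\neq w_{ik}$. Since the permutation $\alpha$ swapping $j$ and $k$ and fixing all other nodes lies in $\Pi$ (it permutes nodes within the triangle clique $C$, hence is a graph automorphism by the proposition preceding Corollary~\ref{cor:permutations}), the solution $\bw^\alpha$ defined by $w^\alpha_{pq} = w_{\alpha(p)\alpha(q)}$ is also feasible with the same objective value, hence also optimal. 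The average $\bw' = \tfrac12(\bw + \bw^\alpha)$ is then feasible (convexity of the constraints) and optimal, and it satisfies $w'_{ij} = w'_{ik} = \tfrac12(w_{ij}+w_{ik})$.

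This already shows an optimum with equal strengths exists, but we need that $\bw$ \emph{itself} has $w_{ij}=w_{ik}$. The key extra ingredient is \emph{strict} improvement: I would show that if $w_{ij}\neq w_{ik}$ in $\bw$, then some feasible perturbation strictly increases the objective, contradicting optimality of $\bw$. The natural move is to push both $w_{ij}$ and $w_{ik}$ up toward their average, or rather to increase the smaller of the two without decreasing the larger. Because the objective $\sum_{\{p,q\}\in E} w_{pq}$ has a strictly positive coefficient on every triangle edge, raising $\min(w_{ij},w_{ik})$ by $\delta>0$ strictly increases the objective, so it suffices to check that for small enough $\delta$ this stays feasible. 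The constraints that could bind are: (i) the triangle constraints $w_{pq}+w_{pr}\le 2 + d\,w_{qr}$ in which $\{i,j\}$ or $\{i,k\}$ appears on the left — but by Corollary~\ref{cor:permutations}-type reasoning, for any such constraint the symmetric partner constraint (with $j\leftrightarrow k$ swapped) is also present, and at the averaged point $\bw'$ the slack in these constraints is at least the average of the slacks, so it is safe to first replace $\bw$ by $\bw'$ and then argue; (ii) constraints with $\{i,j\}$ on the right-hand side — increasing $w_{ij}$ only loosens these; (iii) the lower bound $w_{ij}\ge 0$ (or $\ge -1/d$) — irrelevant since we are increasing. The subtle point is (i): a triangle constraint could contain \emph{both} $w_{ij}$ and $w_{ik}$ on the left — this happens exactly for the triangle $\{i,j,k\}$ itself, giving $w_{ij}+w_{ik}\le 2 + d\,w_{jk}$ — and there raising both toward the average keeps $w_{ij}+w_{ik}$ fixed, so that constraint is untouched; and it could contain $w_{ij}$ on the left and $w_{ik}$ on the right (or vice versa) in a triangle $\{i,j,k'\}$, but $\{i,k\}$ being a triangle edge means $\{i,k\}$ lies in no wedge, so $k'=k$, reducing again to the single constraint just handled.

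I would therefore structure the proof as: (1) reduce to a single shared node $i$ and a triangle clique $C$ via Lemmas~\ref{lem:triangleclique} and~\ref{lem:neighbors}; (2) identify precisely which constraints involve $w_{ij}$ and $w_{ik}$, using that triangle edges avoid all wedges to limit the "mixed" constraints to the triangle $\{i,j,k\}$ alone; (3) replace $\bw$ by the $j\leftrightarrow k$-symmetrized optimum $\bw'$, which can only increase slacks; (4) perturb $w_{ij},w_{ik}$ toward each other — this keeps their sum fixed, hence all constraints that see both are untouched, and loosens all constraints that see only one on the left while leaving the objective unchanged — wait, that last perturbation is objective-neutral, so instead I argue: if after symmetrizing we still have two \emph{distinct} values somewhere among adjacent triangle edges, we have a contradiction with the fact that $\bw$ was an \emph{arbitrary} optimum and $\bw\neq\bw'$ forces a whole face of optima, on which I then run the strict-improvement step by simultaneously raising all triangle-edge strengths in $C$ (each with positive objective coefficient), which is feasible as long as no wedge constraint or triangle-with-external-vertex constraint is tight — and if such a constraint is tight, it ties the triangle-clique strength to a wedge-edge or external strength, which by the bundle structure is itself symmetric. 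The main obstacle I anticipate is exactly this last bookkeeping: cleanly showing that the only obstructions to uniformly increasing the triangle-clique strengths are shared with the symmetric partner constraints, so that on the symmetrized optimum one genuinely gets strict improvement unless all adjacent triangle edges are already equal. I expect the cleanest route is in fact to prove the stronger statement that all triangle edges in one triangle clique share a common value $t_C$ in every optimum, and to show $t_C$ is pinned down as the maximum value compatible with the (symmetric) constraints coming from bundles and from $\{j,k\}$-type closing edges — essentially an LP complementary-slackness / extreme-point argument on the quotient LP obtained by the symmetrization.
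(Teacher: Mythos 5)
There is a genuine gap, and you in fact name it yourself: the ``last bookkeeping'' you defer is precisely the heart of the proof. Your skeleton matches the paper's (symmetrize the putative optimum $\bw$ via Corollary~\ref{cor:permutations} to get an optimal $\bw^=$ that is constant on each triangle clique and each bundle, then contradict optimality by strictly increasing the common clique strength), but the strict-improvement step is never actually established. Your proposed resolution --- ``if such a constraint is tight, it ties the triangle-clique strength to a wedge-edge or external strength, which by the bundle structure is itself symmetric'' --- does not yield a contradiction: a triangle constraint of the form $w^=_c \leq 2+(d-1)\,w^=_b$ can perfectly well be tight at a symmetric point, so tightness alone blocks the increase rather than producing one. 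What is needed, and what the paper supplies, is to show that the \emph{assumed inequality in the original optimum} $\bw$ forces strict slack after averaging. Concretely, the paper bounds the best achievable clique strength by $2+\min_b\{(d-1)w^=_b\}$ (and additionally by $\frac{2}{2-d}$ when $d<2$, coming from constraints using three triangle edges --- a family of constraints your proposal ignores), and then argues in two cases over the bundle $b$ attaining the minimum: if that bundle had unequal ray strengths in $\bw$, the mixed constraints $w_{ij}\leq 2+d\cdot\min\{w_{bi},w_{bj}\}-\max\{w_{bi},w_{bj}\}$ average to a bound \emph{strictly} below $2+(d-1)w^=_b$; if its rays were all equal, then the per-edge bounds $w_{ij}\leq 2+(d-1)w^=_b$ average to equality only when all triangle-edge strengths in the clique coincide, contradicting the assumption. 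A separate averaging argument over the tightest within-clique triangle inequalities rules out $w^=_c\geq\frac{2}{2-d}$ for $d<2$. None of this case analysis appears in your proposal, so the contradiction is not reached.

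Two smaller points. First, your intermediate claim that moving $w_{ij}$ and $w_{ik}$ toward each other leaves untouched ``all constraints that see both'' is false: the constraint of the triangle $\{i,j,k\}$ rooted at $j$, namely $w_{ij}+w_{jk}\leq 2+d\,w_{ik}$, sees $w_{ij}$ on the left and $w_{ik}$ on the right, so the exchange tightens it; you abandon that route anyway, but the slip signals exactly the kind of constraint accounting the real proof must do. Second, the contradiction in the correct argument is reached against the optimality of $\bw^=$ (which has the same objective value as $\bw$), not against $\bw$ directly --- so one does not need $\bw$ itself to be improvable, only to show that unequal strengths in $\bw$ make the symmetrized point strictly suboptimal, which is impossible since it is optimal.
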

\ifkdd\else
\begin{proof}
Consider an optimum $\bw$ for which this is not the case,
i.e., two adjacent triangle edges can be found that have different strength.
From Corollary~\ref{cor:permutations} we know that we can construct from this optimal solution
another optimal solution $\bw^=$ for which adjacent triangle edges do have the same strength,
equal to the average strength in $\bw$ of all triangle edges in the triangle clique they are part of.
Moreover, in $\bw^=$ all rays within the same bundle have the same strength,
equal to the average strength in $\bw$ of all rays in the bundle.
Let us denote the strength in $\bw^=$ of the $b$-th bundle to the triangle clique as $w^=_b$ (i.e., $b$ is an index to the bundle),
and the strength of the edges in the triangle clique as $w^=_c$.
We will prove that $\bw^=$ is not optimal, reaching a contradiction.

In particular, we will show that there exists a solution $\bw^*$ for which $w^*_b=w^=_b$ for all bundles $b$,
but for which the strength within the triangle clique is strictly larger: $w^*_c>w^=_c$.
We first note that the strengths of the triangle edges $w^*_c$ are bounded
in triangle constraints involving two rays and one triangle edge, 
namely $w^*_c \leq 2+(d-1)\cdot w^=_b$.
They are bounded also in triangle constraints involving only triangle edges,
namely $(2-d)\cdot w^*_c\leq 2$.
For $d\geq 2$ this constraint is trivially satisfied,
but not for $d<2$.
Thus, we know that 
$w^*_c = 2+\min_b \left\{(d-1)\cdot w^=_b\right\}$ for $d\ge 2$, and
$w^*_c = \min\left\{2+\min_b \left\{(d-1)\cdot w^=_b\right\},\frac{2}{2-d}\right\}$ for $d<2$.
If this optimal value for $w^*_c$ is larger than $w^=_c$
the contradiction is established.

First we show that $w^=_c < \frac{2}{2-d}$ when $d<2$, again by contradiction.
For each triangle $\{i,j,k\}$ in the triangle clique,
the following triangle inequality is the tightest:
$\max\{w_{ij},w_{ik},w_{jk}\} + \mbox{median}\{w_{ij},w_{ik},w_{jk}\} \leq 2+d\cdot\min\{w_{ij},w_{ik},w_{jk}\}$.
Averaging these constraints over all triangles within the triangle clique,
we obtain:
\begin{align}\label{eq:bla}
w^+ + w^0 &\leq 2 + d\cdot w^-
\end{align}
for some $w^+\geq w^0\geq w^-$ for which $w^=_c=\frac{1}{3}(w^++w^0+w^-)$.
Since we assumed (with the intention to reach a contradiction)
that not all $w_{ij}$ in the triangle clique are equal,
we also know that $\frac{w^++w^0}{2} > w^=_c > w^-$.
Given this, and if it were indeed the case that $w^=_c\geq \frac{2}{2-d}$,
Eq.~(\ref{eq:bla}) would imply that $2\cdot\frac{2}{2-d}<2+d\cdot w^-$,
and thus, $\frac{2}{2-d}<w^-$,
a contradiction.

Next, we show that $w^=_c < 2+\min_b \left\{(d-1)\cdot w^=_b\right\}$.
To show this, we need to distinguish two cases:
\begin{enumerate}
\item \emph{The bundle $b$ with smallest $(d-1)\cdot w^=_b$ has at least two different weights in $\bw$.}
Then, note that for each pair of ray strengths $w_{bi}$ and $w_{bj}$ from node $b$ to triangle-edge $(i,j)$,
the following bounds must hold: $w_{ij}\leq 2 + d\cdot \min\{x_{bi},x_{bj}\} - \max\{x_{bi},x_{bj}\}$.
Summing this over all $\{i,j\}$ and dividing by $\frac{n(n-1)}{2}$ where $n$ is the number of nodes in the triangle clique, yields:
$w^=_c\leq 2 + d\cdot w_b^- - w_b^+$ for some $w_b^-<w_b^+$ with $w_b=\frac{w_b^- + w_b^+}{2}$.
This means that $w^=_c < 2 + \min_b \left\{(d-1)\cdot x_b\right\}$, 
with a strict inequality since we assumed that there is at least one pair of rays $\{b,i\}$ and $\{b,j\}$ for which $w_{bi}<w_{bj}$.
Thus, this shows that $w^*_c>w^=_c$, and a contradiction is reached.
\item \emph{All rays in the bundle $b$ with smallest $(d-1)\cdot w^=_b$ have equal strength $w_b=w^=_b$ in $\bw$.}
In this case, we know that $w_{ij}\leq 2+(d-1)w^=_b$ (due to feasibility of the original optimum).
Again, averaging this over all triangle edges $\{i,j\}$, yields:
$w^=_c\leq 2+(d-1)w^=_b$, with equality only if all terms are equal (since the right hand side is independent of $i$ and $j$).
Thus, again a contradiction is reached.
\end{enumerate}
\end{proof}
\fi

\oldnote[Tijl]{Proof outline: Imagine an optimum exists for which this is not the case, i.e., in which a pair of connected triangle-edges can be found that have different strength. Then, consider the solution that per clique of triangle-edges assigns the average weight, and per bundle of rays into such a clique also assigns their average weight. This is again a feasible solution, as it is the average (and hence convex combination) of all transformations under the considered graph automorphism subgroup. Now, the strength $e$ of edges in the triangle-clique can be set equal to $2+\min_i \{(d-1)\cdot x_i\}$ where $x_i$ is the averaged weight of a bundle of rays. This follows from the triangle constraint involving two rays and one triangle-edge, and the fact that these rays have weights upper bounded by 1 (as they are in at least one wedge) such that the tightest constraint for this triangle upper bounds $e$. If this value for $e$ is larger than the average weight of the triangle-edges, a contradiction is reached as a new larger optimum is found. Now, we show that this is indeed the case, if the strengths within the triangle-clique are not the same. To see this, we assume two cases. In the first case we assume that the bundle with smallest $x_i$ has at least two different weights. Then, note that for each pair of ray strength $x_{ij}$ and $x_{ik}$ from node $i$ to triangle-edge $(j,k)$, the following bounds must hold: $e_{jk}\leq 2 + d\cdot \min\{x_{ik},x_{jk}\} - \max\{x_{ik},x_{jk}\}$. Summing this over all $(j,k)$ and dividing by $n(n-1)$ where $n$ is the number of nodes in the triangle-clique, yields: $e\leq 2 + d\cdot x_i^- - x_i^+$ for some $x_i^-<x_i^+$ with $x_i=\frac{x_i^- + x_i^+}{2}$, such that $e\leq 2 + (d-1)\cdot x_i$, with a strict inequality since we assumed that there is at least one $(j,k)$ for which $x_{ik}<x_{jk}$. Thus, in this case the optimal value of $e$ is larger than its average, and a contradiction is reached. Let us consider the other case, where in the bundle with smallest average strength $x_i$, all strengths are equal: $x_{i,j}=x_i$ for all $j$ in the triangle-clique. In that case, we know that $e_{jk}\leq 2+(d-1)x_i$ (due to feasibility of the original optimum). Again, averaging this over all $(j,k)$ in the triangle clique, yields: $e\leq 2+(d-1)x_i$, with equality only if all terms are equal (since the right hand side is independent of $(j,k)$). Thus, again a contradiction is reached: either $e$ can be increased again after averaging, or the $e_{jk}$ were all equal.
}

\oldnote[Tijl]{To be checked: does this also prove that the weakest bundle on average must have equal values too, at the optimum? And thus that at least one bundle must be equal?}

Note that there do exist graphs for which not all optimal solutions have equal strengths within a bundle.
An example is shown in Fig.~\ref{fig:non-constant-bundle}.

\begin{figure}[t]
\begin{tikzpicture}

\tikzstyle{exedge} = [yafcolor5!80, thick, text=black!80]
\tikzstyle{exnode} = [thick, draw = yafcolor7!80, fill=white, circle, inner sep = 1pt, text=black, minimum width=11pt]

\node[exnode] (b1) at (0, 0)    {$b_1$};
\node[exnode] (b2) at (0, 3) {$b_2$};
\node[exnode] (x1) at (1, 1.5)  {$x_1$};
\node[exnode] (x2) at (2, 0.8)  {$x_2$};
\node[exnode] (x3) at (2, 2.2)  {$x_3$};
\node[exnode] (y) at (3.5, 1.5)  {$y$};
\node[exnode] (z1) at (5, 1.5) {$z_1$};
\node[exnode] (z2) at (5.8, 0) {$z_2$};
\node[exnode] (z3) at (5.8, 3) {$z_3$};
\node[exnode] (z4) at (6.8, 0) {$z_4$};
\node[exnode] (z5) at (6.8, 3) {$z_5$};
\node[exnode] (z6) at (7.6, 1.5) {$z_6$};

\draw[-, exedge, bend left = 10] (b1) to (x1);
\node (b1x1) at (0.3, 1.1) {$1/3$};
\draw[-, exedge, bend right = 10] (b1) to (x2);
\node (b1x2) at (1.05, 0.5) {$2/3$};
\draw[-, exedge, bend right = 10] (b1) to (x3);
\node (b1x3) at (0.8, 0.9) {$1/3$};

\draw[-, exedge, bend right = 10] (b2) to (x1);
\node (b2x1) at (0.3, 1.9) {$2/3$};
\draw[-, exedge, bend left = 10] (b2) to (x2);
\node (b2x2) at (1.1, 2.3) {$1/3$};
\draw[-, exedge, bend left = 10] (b2) to (x3);
\node (b2x3) at (1.6, 2.7) {$2/3$};

\draw[-, exedge, bend right = 5] (x1) to (x2);
\draw[-, exedge, bend right = 5] (x2) to (x3);
\draw[-, exedge, bend right = 5] (x3) to (x1);

\draw[-, exedge, bend left = 0] (x1) to (y);
\draw[-, exedge, bend right = 5] (x2) to (y);
\draw[-, exedge, bend left = 5] (x3) to (y);

\draw[-, exedge, bend left = 0] (y) to (z1);
\draw[-, exedge, bend right = 15] (y) to (z6);
\draw[-, exedge, bend left = 0] (y) to (z2);
\draw[-, exedge, bend left = 0] (y) to (z3);
\draw[-, exedge, bend left = 0] (y) to (z4);
\draw[-, exedge, bend left = 0] (y) to (z5);

\draw[-, exedge, bend left = 0] (z1) to (z2);
\draw[-, exedge, bend left = 0] (z1) to (z3);
\draw[-, exedge, bend left = 0] (z1) to (z4);
\draw[-, exedge, bend left = 0] (z1) to (z5);
\draw[-, exedge, bend left = 0] (z1) to (z6);
\draw[-, exedge, bend left = 0] (z2) to (z3);
\draw[-, exedge, bend left = 0] (z2) to (z4);
\draw[-, exedge, bend left = 0] (z2) to (z5);
\draw[-, exedge, bend left = 0] (z2) to (z6);
\draw[-, exedge, bend left = 0] (z3) to (z4);
\draw[-, exedge, bend left = 0] (z3) to (z5);
\draw[-, exedge, bend left = 0] (z3) to (z6);
\draw[-, exedge, bend left = 0] (z4) to (z5);
\draw[-, exedge, bend left = 0] (z4) to (z6);
\draw[-, exedge, bend left = 0] (z5) to (z6);

\end{tikzpicture}
\caption{\label{fig:non-constant-bundle} This graph is an example where an optimal solution of Problem~\ref{eq:LP2-STC} (with $d=2$) exists that is not constant within a bundle.
To see this, note that $y$ is the root of a bundle to both triangle cliques (the one with nodes $x_i$ and the one with nodes $z_i$).
Its rays to both bundles constrain each other in wedge constraints.
As the $z$ triangle clique is large, the optimal solution has the largest possible value for edges to those nodes.
This is achieved by assigning strengths of $1$ to $y$'s rays to $z_i$, and $0$ to $y$'s rays to $x_i$.
Then the triangle edges in the $z$ triangle clique can have strength $3$,
and the strengths between the $x$ nodes is $2$.
There are two other bundles to the $x$ triangle clique: from $b_1$ and $b_2$.
These constrain each other in wedges $(x_i,\{b_1,b_2\})$, such that edges from $b_1$ and $b_2$ to the same $x_i$ must sum to $1$ at the optimum.
Furthermore, triangles $\{b_i,x_j,x_k\}$ impose a constraint on the strength of those edges as:
$w_{b_ix_j} + w_{x_ix_k}\leq 2 + d\cdot x_{b_ix_k}$. For $d=2$ and $w_{x_jx_k}=2$, this gives:
$w_{b_ix_j}\leq 2\cdot x_{b_ix_k}$.
No other constraints apply. Thus, the (unequal) strengths for the edges in the bundles from $b_1$ and $b_2$ shown in the figure are feasible.
Moreover, this particular optimal solution is a vertex point of the feasible polytope%
\ifkdd\else
 (proof not given)
\fi. $1/2$ for each of those edges is also feasible.}
\end{figure}
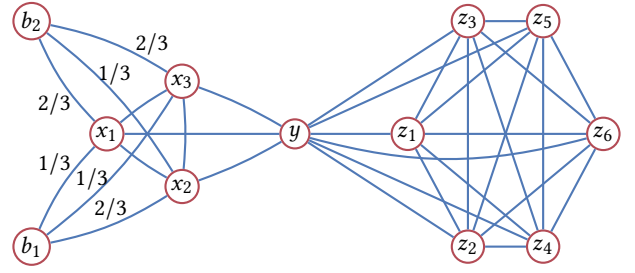

\subsection{An equivalent formulation for finding symmetric optima of Problem~\ref{eq:LP2-STC}}\label{sec:2var}

\oldnote[Tijl]{I plan to try rewriting this section by referring to a graph obtained
by edge-contracting all triangle edges.
Each triangle clique is a node in this contracted graph,
and each bundle is an edge to a clique node in this contracted graph.
This contracted graph does not contain any triangle edges,
i.e., each edge is part of a wedge.
As wedge constraints are tighter than triangle constraints for the non-slack case (the case we are concerned with),
this means that all we need is wedge constraints on this contracted graph,
and triangle constraints for edges between a triangle node and a normal one, or between two triangle nodes.
The objective function is then a weighted sum of all strengths,
with weight equal to the number of edges that were contracted to it (we could call it its `multiplicity').
That may be easier to explain, and it may be easier to write down the final optimization problem.}

Solutions that lack the symmetry properties 
specified in Corollary \ref{cor:permutations} essentially make arbitrary strength assignments.
Thus, it makes sense to constrain the search space to just those optimal solutions 
that exhibit these symmetries.%
\footnote{It would be desirable to search only for solutions that exhibit 
all symmetries guaranteed by Theorem~\ref{thm:automorphisms},
but given the algorithmic difficulty of enumerating all automorphisms, this is hard to achieve directly.
Also, realistic graphs probably contain few automorphisms other than the permutations within triangle cliques.
\ifkdd
The extended report~\cite{extended}
\else
Section~\ref{sec:reducing-arbitrariness}
\fi
does however describe an indirect but still polynomial-time approach
for finding fully symmetric solutions.
}
In addition, exploiting symmetry leads to fewer variables, and thus, 
computational-efficiency gains.

In this section, we will refer to strength assignments 
that are invariant with respect to permutations 
within triangle cliques as \emph{symmetric}, for short.
The results here apply only to Problem~\ref{eq:LP2-STC}.

The set of free variables consists of one variable per triangle clique,
one variable per bundle,
and one variable per edge that is neither a triangle edge nor a ray in a bundle.
To reformulate Problem~\ref{eq:LP2-STC} in terms of this reduced set of variables,
it is convenient to introduce the \emph{contracted graph},
defined as the graph obtained by edge-contracting all triangle edges in $G$.
More formally:
\begin{definition}[Contracted graph]
Let $\sim$ denote the equivalence relation between nodes
defined as $i\!\sim\!j$ if and only if $i$ and $j$ are connected by a triangle edge.
Then, the contracted graph $\tG=(\tV,\tE)$ with 
$\tE\subseteq {\tV \choose 2}$
is defined as the graph for which $\tV=V/\sim$ (the quotient set of $\sim$ on $V$),
and for any $A,B\in\tV$,
it holds that $\{A,B\}\in\tE$ if and only if for all $i\in A$ and $j\in B$ 
it holds that $\{i,j\}\in E$.
\end{definition}

Figure~\ref{fig:contracted} illustrates these definitions for the graph from Fig.~\ref{fig:non-constant-bundle}.

\begin{figure}[t]
\begin{tikzpicture}

\tikzstyle{exedge} = [yafcolor5!80, thick, text=black!80]
\tikzstyle{exnode} = [thick, draw = yafcolor7!80, fill=white, circle, inner sep = 1pt, text=black, minimum width=11pt]

\node[exnode] (b1) at (0, 0)    {$\{b_1\}$};
\node[exnode] (b2) at (0, 2) {$\{b_2\}$};
\node[exnode] (x) at (1.5, 1)  {$\{x_i|i=1:3\}$};
\node[exnode] (y) at (3, 1)  {$\{y\}$};
\node[exnode] (z) at (4.5, 1) {$\{z_i|i=1:6\}$};

\draw[-, exedge, bend left = 0] (b1) to (x);

\draw[-, exedge, bend right = 0] (b2) to (x);

\draw[-, exedge, bend left = 0] (x) to (y);

\draw[-, exedge, bend left = 0] (y) to (z);

\end{tikzpicture}
\caption{\label{fig:contracted}
The contracted graph corresponding to the graph shown in Fig.~\ref{fig:non-constant-bundle}.
}
\end{figure}
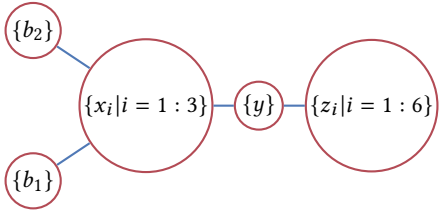


We now introduce a vector $\bw^t$ indexed by sets $A\subseteq V$, with $|A|\geq 2$,
with $w^t_A$ denoting the strength of the edges in the triangle clique $A\subseteq V$.
We also introduce a vector $\bw^w$ indexed by unordered pairs $\{A,B\}\in\tE$,
with $w^w_{AB}$ denoting the strength of the wedge edges between nodes in $A\subseteq V$ and $B\subseteq V$.
Note that if $|A|\geq 2$ or $|B|\geq 2$, these edges are rays in a bundle.

With this notation,
we can state the symmetrized problem as:
\begin{align}\tag{LP2sym}\label{eq:LP2-STC-sym}
\max_{\bw^t,\bw^w} & \sum_{A\in \tV:|A|\geq 2} \frac{|A|(|A|-1)}{2}w^t_{A}
 + \sum_{\{A,B\}\in \tE} |A||B|w^w_{AB},
\end{align}
\begin{align}
\mbox{s.t.}\quad & w^w_{AB} + w^w_{AC} \leq 1, &\mbox{for all }& (A,\{B,C\})\in\tcW,\label{eq:wedge-sym}\\
& w^t_{A} \leq 2 + (d-1)\cdot w^w_{AB}, &\mbox{for all }& \{A,B\}\in\tE,|A|\geq 2,\label{eq:triangle-sym-2var}\\
& w^t_{A}\leq \frac{2}{2-d} \mbox{\ \ (if } d<1\mbox{)} , &\mbox{for all }& A\in\tV, |A|\geq 3,\label{eq:triangle-sym-1var}\\
& w^t_{A}\geq 0, &\mbox{for all }& A\in \tV,|A|\geq 2,\label{eq:pos-sym-1}\\
& w^w_{AB}\geq 0, &\mbox{for all }& \{A,B\}\in \tE.\label{eq:pos-sym-2}
\end{align}
The following theorem shows that there is a one-to-one mapping between optimal solutions to this problem
and \emph{symmetric} optimal solutions to Problem~\ref{eq:LP2-STC},
setting $w_{ij}=w^t_A$ if and only if $i,j\in A$,
and $w_{ij}=w^w_{AB}$ if and only if $i\in A,j\in B$.

\oldnote[Tijl]{For now, this section applies only to the non-slack variants,
although I think very similar results hold for the slack variants.
However I haven't found similarly short proofs for the slack variants yet.}

\oldnote[Tijl]{Rather than relaxing the problem further, here we are going to \emph{add} more constraints: equality constraints for all pairs of isomorphic edges. Here we will show how doing this allows us to simplify the structure of the problem for subsequent analysis and algorithm development.}

\begin{theorem}[Problem~\ref{eq:LP2-STC-sym} finds symmetric solutions of Problem~\ref{eq:LP2-STC}]\label{th:LP2-STC-sym}
The set of optimal symmetric optimal solutions of Problem~\ref{eq:LP2-STC}
is equivalent to the set of all optimal solutions of Problem~\ref{eq:LP2-STC-sym}.
\end{theorem}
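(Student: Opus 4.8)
The plan is to establish the claimed equivalence by showing two directions, using Corollary~\ref{cor:permutations} to bridge between the two problems. First I would set up the bijection explicitly: given a symmetric feasible solution $\bw$ of Problem~\ref{eq:LP2-STC}, define $w^t_A$ to be the common strength of the triangle edges inside the triangle clique $A$ (well-defined by symmetry, for $|A|\ge 2$), and $w^w_{AB}$ to be the common strength of the rays in the bundle between $A$ and $B$ (again well-defined by symmetry, and this covers \emph{all} remaining edges, since every edge of $G$ is either a triangle edge inside some triangle clique or a wedge edge forming a ray of some bundle between two nodes of $\tG$). Conversely, from any $(\bw^t,\bw^w)$ feasible for Problem~\ref{eq:LP2-STC-sym}, define $\bw$ by $w_{ij}=w^t_A$ when $i,j\in A$ and $w_{ij}=w^w_{AB}$ when $i\in A$, $j\in B$; this is symmetric by construction.

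Next I would check that this map sends feasible points to feasible points in both directions, and that it preserves the objective value. For the objective: the number of triangle edges inside $A$ is $\binom{|A|}{2}$ and the number of rays in the bundle $\{A,B\}$ is $|A|\,|B|$, so $\sum_{\{i,j\}\in E}w_{ij}$ equals exactly the objective of Problem~\ref{eq:LP2-STC-sym}; this is immediate. For feasibility, the content is to verify that, \emph{for a symmetric strength assignment}, the full constraint set of Problem~\ref{eq:LP2-STC} reduces precisely to (\ref{eq:wedge-sym})--(\ref{eq:pos-sym-2}). This requires a case analysis of the wedge constraints $w_{ij}+w_{ik}\le 1$ and the triangle constraints $w_{ij}+w_{ik}\le 2+d\,w_{jk}$ according to which of $i,j,k$ lie in the same triangle clique. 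The key structural facts are Lemma~\ref{lem:triangleclique} (triangle edges come in cliques) and Lemma~\ref{lem:neighbors} (a neighbor of a triangle clique is adjacent to all of it), which guarantee that wedges and triangles in $G$ descend to well-defined configurations in $\tG$: a wedge of $G$ whose two edges lie in different bundles gives a wedge $(A,\{B,C\})$ of $\tG$, yielding (\ref{eq:wedge-sym}); a triangle of $G$ with two rays into a triangle clique $A$ and one triangle edge of $A$ gives (\ref{eq:triangle-sym-2var}); a triangle fully inside a triangle clique $A$ (needs $|A|\ge 3$) gives $2w^t_A\le 2+d\,w^t_A$, i.e.\ (\ref{eq:triangle-sym-1var}), which is only binding when $d<2$ (the statement writes $d<1$, matching the regime highlighted in the paper); and the remaining wedge/triangle types are either subsumed (a wedge constraint is tighter than the corresponding triangle constraint when both edges have strength $\le 1$, which holds for rays) or redundant. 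I would organize this as an enumeration of the finitely many types of wedge/triangle, showing each gives a constraint already present in Problem~\ref{eq:LP2-STC-sym} or implied by one, and conversely that each constraint of Problem~\ref{eq:LP2-STC-sym} arises from some wedge/triangle of $G$.

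Finally, to upgrade "feasible $\leftrightarrow$ feasible, objective-preserving" to "optimal $\leftrightarrow$ optimal", I would argue: the image of a symmetric optimum of Problem~\ref{eq:LP2-STC} is feasible for Problem~\ref{eq:LP2-STC-sym} with the same value, and it must be optimal there, because any strictly better point of Problem~\ref{eq:LP2-STC-sym} would pull back to a strictly better symmetric — hence better — feasible point of Problem~\ref{eq:LP2-STC}, contradicting optimality; and conversely an optimum of Problem~\ref{eq:LP2-STC-sym} pulls back to a symmetric feasible point of Problem~\ref{eq:LP2-STC} which is optimal among \emph{all} feasible points, since by Corollary~\ref{cor:permutations} there is some symmetric optimum of Problem~\ref{eq:LP2-STC}, whose value (being achieved by a feasible point of Problem~\ref{eq:LP2-STC-sym}) cannot exceed our value, and cannot be less either. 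I expect the main obstacle to be the case analysis in the previous paragraph — in particular being careful that \emph{every} edge of $G$ falls into exactly one of the two classes (triangle edge / ray), that no wedge or triangle constraint of $G$ is accidentally omitted from the reduction, and handling the $d<2$ versus $d\ge 2$ split for the intra-clique triangle constraint (\ref{eq:triangle-sym-1var}) consistently with the assumptions already in force (no connected component is a clique, so Proposition~\ref{prop:finite-noslacks} applies and optima are finite). Everything else is bookkeeping.
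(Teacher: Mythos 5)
Your plan follows essentially the same route as the paper's proof: identical objectives under the triangle-clique/bundle identification, a two-way feasibility correspondence obtained from the same case analysis of wedge and triangle constraints (wedges map to (\ref{eq:wedge-sym}), two-rays-plus-one-triangle-edge triangles map to (\ref{eq:triangle-sym-2var}), intra-clique triangles give (\ref{eq:triangle-sym-1var}), the rest are redundant because ray strengths are at most $1$), plus the optimality transfer via Corollary~\ref{cor:permutations}, which the paper leaves implicit. The one point you flag but do not resolve---why (\ref{eq:triangle-sym-1var}) is included only for $d<1$ although the intra-clique constraint is binding for all $d<2$---is settled in the paper by noting that for $1\le d<2$ constraint~(\ref{eq:triangle-sym-2var}) together with $w^w_{AB}\le 1$ already yields $w^t_A\le d+1\le \frac{2}{2-d}$.
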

\ifkdd\else
\begin{proof}
It is easy to see that for a symmetric solution,
the objective functions of Problems~\ref{eq:LP2-STC-sym} and \ref{eq:LP2-STC} are identical.

Thus, it suffices to show that:
\begin{enumerate}
\item the feasible region of Problem~\ref{eq:LP2-STC-sym}
is contained within the feasible region of Problem~\ref{eq:LP2-STC},
\item the set of \emph{symmetric} feasible solutions of Problem~\ref{eq:LP2-STC} is contained within the feasible region of Problem~\ref{eq:LP2-STC-sym}.
\end{enumerate}
The latter is immediate,
as all constraints in Problem~\ref{eq:LP2-STC-sym} are directly derived from those in Problem~\ref{eq:LP2-STC}
(see rest of the proof for clarification),
apart from the reduction in variables which does nothing else than imposing symmetry.

To show the former,
we need to show that all constraints of Problem~\ref{eq:LP2-STC} are satisfied. 
This is trivial for the positivity constraints (\ref{eq:pos-sym-1}) and (\ref{eq:pos-sym-2}).
The wedge inequalities (\ref{eq:wedge-sym}) are also accounted in Problem~\ref{eq:LP2-STC-sym},
and thus trivially satisfied, too.

We consider three types of triangle constraints in Problem~\ref{eq:LP2-STC}:
those involving two rays from the same bundle and one triangle edge with the triangle edge strength on the left hand side of the $<$ sign,
those involving two rays from the same bundle and one triangle edge with the triangle edge strength on the right hand side of the $>$ sign,
and those involving three triangle edges.

Constraint~(\ref{eq:triangle-sym-2var}) covers all triangle constraints involving
two rays (from the same bundle) and one triangle edge,
with the triangle edge strength being upper bounded.
Indeed, $w^t_A$ is the strength of the triangle edges between nodes in $A$,
and $w^w_{AB}$ is the strength of the edges in the bundle from any node in $B$
to the two nodes connected by any triangle edge in $A$.

Triangle constraints involving two rays and one triangle edge
that lower bound the triangle edge strength are redundant and can thus be omitted.
Indeed, they can be stated as $w^w_{AB}+w^w_{AB}\leq 2+d\cdot w^t_{A}$,
which is trivially satisfied as each wedge edge has strength at most~$1$.

Finally, triangle constraints involving three triangle edges within $A$ reduce to $w^t_A+w^t_A\leq 2+ d\cdot w^t_A$.
For $d\geq 2$ this constraint is trivially satisfied.
For $d<2$ it reduces to $w^t_A\leq \frac{2}{2-d}$.
For $2>d\geq 1$,
this constraint is also redundant with the triangle constraints involving the triangle edge and two rays,
which imply an upper bound of at most $d+1\leq \frac{2}{2-d}$ for $d\geq 1$ (namely for the ray strengths equal to $1$).
Thus, constraint $w^t_A\leq \frac{2}{2-d}$ must be included in Problem~\ref{eq:LP2-STC-sym} only for $d<1$.
Finally, note that such triangle constraints are only possible in triangle cliques $A$ with $|A|\geq 3$.
\end{proof}
\fi

\oldnote[Tijl]{For symmetric solutions and $d\geq 1$, we can also drop some constraints as they are always inactive or redundant. Triangle inequalities involving only triangle-edges in such optimum are always loose (for $d=1$ they may be tight, but only if they are also both part of a tight wedge constraint). Triangle inequalities involving two rays and one triangle-edge are loose if the rays are on the left of the $\leq$ sign (as they are bounded by 1 each, and the upper bound is at least 2). Thus, only triangle inequalities where one ray is left and one is right can be tight. As the weights of the rays are constrained to be equal for symmetry, these constraints can be written as $x_{jk} \leq 2+(d-1)\cdot x_{ij}$ and $x_{jk} \leq 2+(d-1)\cdot x_{ik}$. We call this the 2VAR formulation of the problem (after Hochbaum), as it involves only 2 variables per constraint.}

\oldnote[Tijl]{For symmetric solutions and $0<d<1$, the triangle inequalities involving only triangle edges are \emph{not} redundant. They imply that the strength of such edges at a symmetric optimum is at most $\frac{2}{2-d}$. Apart from this, the same holds as for $d\geq 1$.}

\oldnote[Tijl]{Proposition: The sets of symmetric optima of these new and the original problem formulations are identical.}

\oldnote[Tijl]{Proposition: In this new problem formulation, the value of the optimum is the same as in the original problem formulation, as the feasible region is reduced such that the optimal value cannot have increased, and the optimal value is achieved by a symmetric solution which is still feasible.}

\oldnote[Tijl]{Note that the optimal value of the objective does not change after dropping the equality constraints between isomorphic edges, as due to the same symmetry any solution can be transformed by an automorphism, and averaging over all automorphisms yields a symmetric solution that is feasible (due to convexity) and has the same value of the objective.
However, following a similar reasoning, after dropping these constraints not all optima are symmetric. In fact, the feasible region can be strictly larger than for the original problem. That is the case in particular for the example where not all feasible solutions were symmetric in the original problem. It is not illogical: the constraint with 2 variables assumes that both variables are the same, which is not the tightest situation for the constraint with 3 variables it was derived from.
Note however that not all optimal half-integral solutions to this new problem (without equality constraints between isomorphic edges) are symmetric: the graph in which a non-symmetric optimal solution existed for the original formulation is a counterexample. This means that these equality constraints are necessary when using this formulation to find a solution to the original problem by solving the 2VAR variant---either explicitly, or better, by using only one variable per bundle and per triangle clique (and accordingly adapting the objective function).}

\subsection{The vertex points of the feasible polytope of Problem~\ref{eq:LP2-STC}}

The following theorem generalizes the well-known half-integrality result for Problem~\ref{eq:LP1-STC} \cite{NeT:75}
to Problem~\ref{eq:LP2-STC-sym}.

%

\oldnote[Tijl]{First recap the known result for the simplest relaxation, here in the preamble of this section.}

\oldnote[Tijl]{Note: results here do not require $d$ to be integer.}


\ifkdd\else
\begin{theorem}[Vertices of the feasible polytope]
\label{thm:verticespoly}
On the vertex points of the feasible polytope of Problem~\ref{eq:LP2-STC-sym}, 
the strengths of the wedge edges take values $w^w_{AB}\in\left\{0,\frac{1}{2},1\right\}$, 
and the strengths of the triangle edges take values $w^t_{A}\in\left\{0,2,\frac{d+3}{2},d+1\right\}$ for $d\geq 1$, 
or $w^t_{A}\in\left\{0,\frac{2}{2-d},d+1,\frac{d+3}{2},2\right\}$ for $d<1$.
\end{theorem}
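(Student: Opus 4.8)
The plan is to characterize a vertex point $(\bw^t,\bw^w)$ as a feasible point at which a full-rank subsystem of the defining inequalities holds with equality, and then to solve the resulting small, block-structured linear systems. First I would separate the variables into the wedge-edge variables $w^w_{AB}$ (indexed by $\tE$) and the triangle-clique variables $w^t_A$ (indexed by $A\in\tV$ with $|A|\ge 2$), and observe that the constraint matrix is "almost block triangular": the wedge constraints \eqref{eq:wedge-sym} and the positivity constraints \eqref{eq:pos-sym-2} involve \emph{only} the $w^w$ variables, whereas each constraint bounding a $w^t_A$ (namely \eqref{eq:triangle-sym-2var}, \eqref{eq:triangle-sym-1var}, \eqref{eq:pos-sym-1}) involves at most one $w^w$ variable in addition to $w^t_A$ itself. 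The strategy exploits this: at a vertex, the active constraints among \eqref{eq:wedge-sym} and \eqref{eq:pos-sym-2} already pin down the $w^w$ subvector on its own (this is exactly the classical half-integral LP for fractional vertex cover / independent set, whose vertices have coordinates in $\{0,\tfrac12,1\}$, by \cite{NeT:75}), and then each $w^t_A$ is determined by whichever of its own active upper-bound constraints is tight, with the relevant $w^w_{AB}$ already fixed.

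The key steps, in order: (1) Restrict attention to a single connected component of the contracted graph, and within it argue that the $w^w$-part of any vertex is a vertex of the polytope defined by \eqref{eq:wedge-sym} and \eqref{eq:pos-sym-2} alone; invoke the Nemhauser–Trotter half-integrality theorem to conclude $w^w_{AB}\in\{0,\tfrac12,1\}$. One must check that fixing $\bw^w$ to such a vertex and then choosing $\bw^t$ still yields a genuine vertex of the full polytope, i.e.\ that no "mixing" of $w^w$ and $w^t$ active rows can create a vertex whose $w^w$-part is not itself extreme — this follows because a $w^t$-constraint contributes a pivot only in its own $w^t_A$ coordinate, so the $w^w$-rows in any full-rank active set must already be full rank on the $w^w$ coordinates. (2) With $w^w_{AB}\in\{0,\tfrac12,1\}$ fixed, each $w^t_A$ (with $|A|\ge 2$) is, at a vertex, equal to the minimum of the right-hand sides of its tight constraints, so it equals one of: $0$ (from \eqref{eq:pos-sym-1}); $2+(d-1)w^w_{AB}$ for some incident bundle, which for $w^w_{AB}\in\{0,\tfrac12,1\}$ gives the three values $2$, $\tfrac{d+3}{2}$, $d+1$; and, only when $d<1$ and $|A|\ge 3$, the value $\tfrac{2}{2-d}$ from \eqref{eq:triangle-sym-1var}. (3) Assemble: for $d\ge1$ the candidate set is $\{0,2,\tfrac{d+3}{2},d+1\}$, and for $d<1$ it is $\{0,\tfrac{2}{2-d},d+1,\tfrac{d+3}{2},2\}$; one should double-check the ordering/redundancy remarks from the proof of Theorem~\ref{th:LP2-STC-sym} (e.g.\ that for $1\le d<2$ the bound $\tfrac{2}{2-d}$ is dominated by $d+1$ and hence never the binding tight constraint, so it contributes no new vertex value) to see that no additional values appear and that each listed value is actually attained at some vertex.

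The main obstacle I anticipate is step (1): making rigorous the claim that the $w^w$-coordinates of a vertex of the \emph{full} polytope are a vertex of the \emph{projected} wedge polytope. The clean way is a rank/pivot argument on the active constraint matrix — order coordinates as $(\bw^w,\bw^t)$, note every active $w^t$-row has a nonzero entry in a distinct $w^t_A$ column and zeros in all other $w^t$ columns (and at most one nonzero $w^w$ entry), so Gaussian elimination of the $w^t$ block leaves a full-rank system purely in $\bw^w$ built only from \eqref{eq:wedge-sym} and \eqref{eq:pos-sym-2}. A secondary, more bookkeeping-heavy obstacle is verifying attainability of every listed value and confirming there is no vertex with a $w^t_A$ strictly between the listed numbers; this is handled by exhibiting, for each value, a small contracted graph (a bundle attached to a triangle clique, with the bundle's $w^w$ forced to $0$, $\tfrac12$, or $1$ by an auxiliary wedge) realizing it — the examples in Figures~\ref{fig:non-constant-bundle} and~\ref{fig:contracted} already supply the needed gadgets.
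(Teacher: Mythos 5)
Your overall architecture (active-set/rank decomposition into the $\bw^w$ block and the $\bw^t$ block, then back-substitution) is a genuinely different route from the paper, which proves Theorem~\ref{thm:verticespoly} by a direct perturbation argument: it nudges every coordinate lying strictly between the listed values by $\pm\epsilon$ (with triangle-edge strengths nudged by $\pm(d-1)\epsilon$ in the relevant ranges), checks that tight constraints stay tight and loose ones stay feasible, and concludes the point is a midpoint of two feasible points. However, your step (1) has a genuine gap exactly at the place you flag as the main obstacle. The claim that ``every active $w^t$-row has a nonzero entry in a distinct $w^t_A$ column,'' so that eliminating the $\bw^t$ block ``leaves a full-rank system purely in $\bw^w$ built only from \eqref{eq:wedge-sym} and \eqref{eq:pos-sym-2},'' is false: several constraints of type \eqref{eq:triangle-sym-2var} bound the \emph{same} variable $w^t_A$ (one per incident bundle), and when two of them are simultaneously tight (which is not a degenerate corner case for $d\neq 1$ --- it occurs whenever two bundles incident to the same triangle clique have equal strength and both bounds are active), Gaussian elimination of $w^t_A$ produces a derived row $(d-1)\bigl(w^w_{AB}-w^w_{AC}\bigr)=0$, which is neither a wedge nor a positivity constraint. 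Consequently the $\bw^w$-part of a vertex of the full polytope need not be a vertex of the polytope defined by \eqref{eq:wedge-sym} and \eqref{eq:pos-sym-2} alone ($w^w_{AB}$ may be pinned only through such derived equalities), so the appeal to the half-integrality of that polytope's vertices via \cite{NeT:75} does not go through as stated.

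The gap is repairable within your framework, but it needs an extra argument rather than a citation: after eliminating the $\bw^t$ block, the reduced active system on $\bw^w$ consists of rows $w_i+w_j=1$ (wedges), $w_i=0$ (positivity) and $w_i=w_j$ (derived equalities); any full-rank such system determines every variable by propagation along ``flip'' ($x\mapsto 1-x$) and ``copy'' ($x\mapsto x$) relations, grounded either at a zero variable or at a cycle with an odd number of flips, which forces values in $\{0,\tfrac12,1\}$. (One also checks that a tight \eqref{eq:triangle-sym-1var} or a tight lower bound $w^t_A\geq 0$ combined with a tight \eqref{eq:triangle-sym-2var} would force $w^w_{AB}>1$ or $w^w_{AB}<0$, hence cannot occur at a feasible point, so no other derived rows arise.) With that replacement, your step (2) back-substitution is fine, and note that attainability of each listed value is not needed --- the theorem only asserts containment in the stated sets.
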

\begin{proof}
Assume the contrary, i.e., 
that a vertex point of this convex feasible polytope 
can be found that has a different value for one of the edge strengths.
To reach a contradiction, we will nudge the \emph{wedge edges'} strengths $w^w_{AB}$ as follows:
\begin{itemize}
\item[--] if $w^w_{AB}\in(0,\frac{1}{2})$, add $\epsilon$,
\item[--] if $w^w_{AB}\in(\frac{1}{2},1)$, subtract $\epsilon$.
\end{itemize}
Note that $0\leq w^w_{AB}\leq 1$ for all wedge edges (due to the wedge constraints in Eq.~(\ref{eq:wedge-sym})).
Thus, all wedge edge strengths that are not exactly equal to $0$, $\frac{1}{2}$, or $1$ will be nudged.

For the \emph{triangle edges} we need to distinguish between $d\geq 1$ and $d<1$.
For $d\geq 1$, we nudge their strengths as follows:
\begin{itemize}
\item[--] if $w^t_A\in(0,2)$, add $\epsilon$,
\item[--] if $w^t_{A}\in\left(2,2+(d-1)\cdot\frac{1}{2}\right)=\left(2,\frac{d+3}{2}\right)$, add $(d-1)\epsilon$,
\item[--] if $w^t_{A}\in\left(2+(d-1)\cdot\frac{1}{2}, 2+(d-1)\cdot 1\right)=\left(\frac{d+3}{2}, d+1\right)$, subtract $(d-1)\epsilon$.
\end{itemize}
For $d<1$, we nudge the strenghts as follows:
\begin{itemize}
\item[--] if $w^t_{A}\in\left(0,\frac{2}{2-d}\right)\cup\left(\frac{2}{2-d},d+1\right)$, add $\epsilon$,
\item[--] if $w^t_{A}\in\left(2+(d-1)\cdot 1, 2+(d-1)\cdot\frac{1}{2}\right)=\left(d+1, \frac{d+3}{2}\right)$, subtract $(d-1)\epsilon$.
\item[--] if $w^t_{A}\in\left(2+(d-1)\cdot\frac{1}{2},2\right)=\left(\frac{d+3}{2},2\right)$, add $(d-1)\epsilon$.
\end{itemize}
Note that for $d\geq 1$, it holds that $0\leq w^t_{A}\leq d+1$ for all triangle edges (due to the triangle constraints in Eq.~(\ref{eq:triangle-sym-2var}) with ray strength equal to $1$).
For $d<1$, it holds that $0\leq w^t_{A}\leq 2$ for all triangle edges (due to the triangle constraints in Eq.~(\ref{eq:triangle-sym-2var}) with ray strength equal to $0$).
Thus, also all triangle edge strengths that are not of one of the values specified in the theorem statement will be nudged.

For sufficiently small $|\epsilon|$ no loose constraint will become invalid by this.
Furthermore, it is easy to verify that strengths in tight constraints are nudged in corresponding directions,
such that all tight constraints remain tight and thus valid.
Now, this nudging can be done for positive and negative $\epsilon$,
yielding two new feasible solutions of which the average is the supposed vertex point of the polytope---a contradiction.
\end{proof}
\fi

\oldnote[Tijl]{Theorem statement: for $d\geq 1$, at the vertices of the polytope defined by the 2VAR constraint set, the wedge edges have values $0$, $\frac{1}{2}$, or $1$ only, and the triangle edges have values $0$, $2$, $d/2+3/2$, and $d+1$ only. (This theorem holds regardless of whether the equality constraints between isomorphic edges are included or not. It also holds when the equality constraints are made implicit by using just one variable for each set of isomorphic edges.)}

\oldnote[Tijl]{Proof outline:
Assume the contrary, that a vertex point of this polytope can be found that has a different value for one of the edge strengths. To reach a contradiction, we will nudge wedge-edges depending on their strength $x_{ij}$: if in $(0,\frac{1}{2})$, add $\epsilon$, and if in $(\frac{1}{2},1)$, subtract $\epsilon$. (Thus, each wedge edge strength that is not $0$, $\frac{1}{2}$, or $1$ will be nudged.) For the triangle edges we nudge them as follows: if $x_{jk}$ is in $(2,2+(d-1)\cdot \frac{1}{2})=(2,d/2+3/2)$, add $(d-1)\epsilon$, and if in $(2+(d-1)\cdot \frac{1}{2}, 2+(d-1)\cdot 1)=(d/2+3/2, d+1)$, subtract $(d-1)\epsilon$. For $x_{jk}$ in $(0,2)$, simply add $\epsilon$ (note that in this range the triangle constraint can never be tight). Note that $0\leq x_{ij}\leq 1$ for all wedge edges and $0\leq x_{jk}\leq d+1$ for all triangle edges. Thus, whenever an edge is not of one of the specified values, it will be nudged. For sufficiently small $|\epsilon|$ no loose constraint will become invalid by this. Furthermore, all tight constraints remain tight and thus valid. Now, this nudging can be done for positive and negative $\epsilon$, yielding two new feasible solutions of which the average is the supposed vertex point of the polytope---a contradiction.}

\ifkdd
\begin{theorem}[Vertices of the optimal face of the feasible polytope]\label{thm:half-integrality}
\else
\begin{corollary}\label{cor:half-integrality}
\fi
On the vertices of the optimal face of the feasible polytope of Problem~\ref{eq:LP2-STC-sym},
the strengths of the wedge edges take values $w^w_{AB}\in\left\{0,\frac{1}{2},1\right\}$,
and the strengths of the triangle edge take values $w^t_{A}\in\left\{2,\frac{d+3}{2},d+1\right\}$ if $d\geq 1$,
or $w^t_{A}\in\left\{\frac{2}{2-d},d+1,\frac{d+3}{2},2\right\}$ if $d<1$.
Moreover, for $d<1$, triangle edge strengths for $|A|\geq 3$ are all equal to $w^t_A=\frac{2}{2-d}$ throughout the optimal face of the feasible polytope.
\ifkdd
\end{theorem}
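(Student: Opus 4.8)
This is a corollary of Theorem~\ref{thm:verticespoly}, so the strategy is to restrict the vertex-enumeration argument from the whole feasible polytope to the optimal face, and to show that the values $w^t_A=0$ (for $d\geq 1$) and $w^w_{AB}=$ anything below $1$ whenever it could be larger, etc., cannot occur on an optimum because the objective has strictly positive coefficients. More precisely: the optimal face is itself a polytope, each of whose vertices is a vertex of the original polytope (a standard fact about linear programming over polyhedra), so by Theorem~\ref{thm:verticespoly} the candidate values at an optimal vertex are already restricted to $w^w_{AB}\in\{0,\tfrac12,1\}$ and $w^t_A\in\{0,2,\tfrac{d+3}{2},d+1\}$ for $d\geq 1$ (resp.\ the five-element set for $d<1$). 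It then remains to rule out the value $0$ for triangle edges, and to handle the $d<1$ claim about $|A|\geq 3$.

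First I would dispose of the triangle-edge value $w^t_A=0$. Since the objective coefficient $\frac{|A|(|A|-1)}{2}$ of $w^t_A$ is strictly positive, it suffices to argue that from any feasible solution with $w^t_A=0$ one can strictly increase $w^t_A$ while keeping feasibility — hence such a solution is not optimal and cannot lie on the optimal face. The only constraints that bound $w^t_A$ from above are (\ref{eq:triangle-sym-2var}), giving $w^t_A\le 2+(d-1)w^w_{AB}\geq \min\{2,d+1\}>0$ for $d\geq 1$ (and $\geq \min\{2, d+1\} = d+1>0$ for $d<1$, also using (\ref{eq:triangle-sym-1var}) which gives $\frac{2}{2-d}>0$); either way the upper bound is strictly positive, so $w^t_A=0$ is never tight against any constraint other than its own lower bound $w^t_A\geq 0$, and increasing it strictly improves the objective without touching any other variable. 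This rules out $0$, leaving exactly the stated sets.

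Next, the $d<1$ claim for $|A|\geq 3$: I would show $w^t_A=\frac{2}{2-d}$ holds not merely at vertices but throughout the optimal face. Here the key observation is that for $|A|\geq 3$ the triangle clique contains an actual triangle of triangle edges, so constraint~(\ref{eq:triangle-sym-1var}), $w^t_A\le\frac{2}{2-d}$, is present. On the other hand, since the objective weight on $w^t_A$ is strictly positive and $w^t_A$ appears with a $+$ sign only in its own objective term and on the \emph{right-hand} side of (\ref{eq:triangle-sym-2var}) (where increasing it only loosens constraints) and in its own lower bound, any optimal solution must push $w^t_A$ to its largest feasible value; the binding upper bounds are $\frac{2}{2-d}$ and $2+(d-1)w^w_{AB}$ over the neighboring bundles $B$. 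Because $2+(d-1)w^w_{AB}\geq 2+(d-1)\cdot 1=d+1$ and for $d<1$ one checks $d+1\geq\frac{2}{2-d}$ (equivalently $(d+1)(2-d)=2+d-d^2\geq 2$, i.e.\ $d(1-d)\geq0$, true for $0\le d\le 1$), the triangle-edge constraint (\ref{eq:triangle-sym-1var}) is the tightest, so $w^t_A=\frac{2}{2-d}$ at every optimum, i.e.\ throughout the optimal face.

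\textbf{Main obstacle.} The routine part is the vertex/optimal-face bookkeeping; the one genuinely delicate point is verifying that no \emph{other} constraint can force $w^t_A$ strictly below these values at an optimum — i.e.\ that the argument "the objective pushes $w^t_A$ up, and the only upper bounds are the triangle constraints" is airtight. One must be careful that $w^t_A$ also appears on the right-hand side of the (redundant, hence omitted) triangle constraints and, more importantly, that raising $w^t_A$ does not interact with the $w^w_{AB}$ variables through any shared constraint in a way that would require lowering them and thereby hurt the objective. Since in Problem~\ref{eq:LP2-STC-sym} the variable $w^t_A$ occurs only in (\ref{eq:triangle-sym-2var}), (\ref{eq:triangle-sym-1var}), (\ref{eq:pos-sym-1}), and its own objective term — never coupled to a $w^w$ variable except as the larger side of an inequality — this decoupling holds, and the argument goes through; but this is the step that needs to be stated carefully rather than waved through.
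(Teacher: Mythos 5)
Your proposal is correct and follows essentially the same route as the paper: vertices of the optimal face are vertices of the feasible polytope so Theorem~\ref{thm:verticespoly} applies, the value $0$ for triangle edges is excluded because the strictly positive upper bounds on $w^t_A$ together with its positive objective coefficient make $w^t_A=0$ non-optimal, and for $d<1$, $|A|\geq 3$ the constraint $w^t_A\leq \frac{2}{2-d}$ pins the value on the optimal face. The only cosmetic difference is that you prove the face-wide claim by a direct push-to-the-tightest-upper-bound argument at every optimum (explicitly checking $d+1\geq\frac{2}{2-d}$), whereas the paper observes that $\frac{2}{2-d}$ is the smallest admissible vertex value and extends from the vertices to the whole face by convexity; both are valid.
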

\else
\end{corollary}
\fi
\ifkdd\else
\begin{proof}
A strength of $0$ for a triangle edge can never be optimal, as triangle edges are upper bounded by at least $2$ for $d\geq 1$,
and the objective function is an increasing function of the edge strengths.
The second statement follows from the fact that $\frac{2}{2-d}$ is the smallest possible value for triangle edges when $d<1$,
and Eq.~(\ref{eq:triangle-sym-1var}) bounds the triangle edge strengths in triangle cliques $A$ with $|A|\geq 3$ to that value.
Thus, it is the only possible value for the vertex points of the optimal face of the feasible polytope,
and thus for that entire optimal face.
\end{proof}
\fi
This means that there always exists an optimal solution to Problem~\ref{eq:LP2-STC-sym}
where the edge strengths belong to these small sets of possible values.
Note that the symmetric optima of Problem~\ref{eq:LP2-STC} coincide with those of Problem~\ref{eq:LP2-STC-sym},
such that this result obviously also applies to the symmetric optima of~\ref{eq:LP2-STC}.

\oldnote[Tijl]{Corollary: For $d\geq 1$, the values in the \emph{basic} (in Nemhauser's terminology) optimal symmetric solutions are $0$, $\frac{1}{2}$, or $1$ for the wedge edges and $2$, $d/2+3/2$, and $d+1$ for the triangle edges. Thus, there always exists an optimum for which the edge strengths are values from these sets.
Indeed, that a $0$ strength for a triangle edge can never be optimal, as the upper bound constraint is at least $2$ for $d\geq 1$ and the objective prefers larger strengths over smaller ones. Furthermore, the symmetric optima of the original formulation coincide with the symmetric optima of the 2VAR formulation.}

\oldnote[Tijl]{Add a remark saying that this does not imply that \emph{all} optima, including the non-symmetric ones, are half-integral---and refer to the counterexample above. However, all optima of the 2VAR formulation \emph{do} only take the $6$ allowed values, as all vertex points of the feasible polytope have only such values (as well as $0$ for the triangle edges). Again, we can refer to this counterexample.}

\oldnote[Tijl]{This is probably not that important, but anyway: it is unclear whether the limited set of $6$ possible values only holds for half-integral \emph{symmetric} solutions, or for any half-integral solution.}


\oldnote[Tijl]{Theorem statement: for $0<d<1$, at the vertices of the polytope defined by the 2VAR constraint set, the wedge edges have values $0$, $\frac{1}{2}$, or $1$ only, the triangle edges that are not adjacent to another triangle edge have values $0$, $d+1$, $d/2+3/2$, or $2$, and the triangle edges adjacent to at least one other triangle edge have values $0$ or $\frac{d}{2-d}$ only. (This theorem holds regardless of whether the equality constraints between isomorphic edges are included or not. It also holds when the equality constraints are made implicit by using just one variable for each set of isomorphic edges.)}

\oldnote[Tijl]{Proof outline: very similar to the proof for $d\geq 1$. The main difference is the fact that the triangle constraint is required as it bounds the value of triangle edges that are adjacent to at least one other triangle edges (such that they will be constrained by a triangle constraint involving only triangle edges.}

%

\section{Algorithms}\label{sec:algorithms}

In this section we discuss algorithms for
solving the edge-strength inference problems
\ref{eq:LP1-STC}, \ref{eq:LP2-STC}, \ref{eq:LP3-STC}, and \ref{eq:LP4-STC}.
\ifkdd\else
The final subsection~\ref{sec:reducing-arbitrariness} also discusses a number of ways to further reduce the arbitrariness of the optimal solutions.
\fi

\subsection{Using generic LP solvers}

First, all proposed formulations are linear programs (LP),
and thus, standard LP solvers can be used. 
In our experimental evaluation we used CVX~\cite{cvx} from within Matlab,
and MOSEK \cite{mosek} as the solver that implements an interior-point method.

Interior-point algorithms for LP run in polynomial time,
namely in $\bigO(n^{3}L)$ operations, 
where $n$ is the number of variables, and 
$L$ is the number of digits in the problem specification~\cite{mehrotra1993finding}.
For our problem formulations, 
$L$ is proportional to the number of constraints.
In particular, 
problem \ref{eq:LP1-STC} has $|E|$ variables and $|\cW|$ constraints, 
problem \ref{eq:LP2-STC} has $|E|$ variables and $|\cW|+|\cT|$ constraints, 
and 
problems \ref{eq:LP3-STC} and \ref{eq:LP4-STC} have $|E|+|\bE|$ variables and $|\cW|+|\cT|$ constraints.
\ifkdd\else
Here $|E|$ is the number of edges in the input graph, 
$|\cW|$ the number of wedges, and
$|\cT|$ the number of triangles.
\fi

Today, the development of primal-dual methods and practical improvements
ensure convergence that is often much faster than this worst-case complexity.
\oldnote[Tijl$\rightarrow$Aris]{Can you say something more about this?}
\oldnote[Aris]{I do not really know, and I was not able to find something. 
Also I do not think that this is something that we should discuss in more detail.}
Alternatively, one can use the Simplex algorithm, 
which has worst-case exponential running time, 
but is known to yield excellent performance in practice \cite{spielman2004smoothed}.

\oldnote[Tijl]{General LP solvers, some notes on properties and complexity. Also here we could implicitly impose symmetry in the solution by using only one variable per triangle clique and one variable per bundle.}

\oldnote[Tijl]{Of course, a generic LP solver can also be used to find a solution of the 2VAR formulation for integer $d\geq 1$.}

\subsection{Using the Hochbaum-Naor algorithm}

For rational $d$,
we can exploit the special structure of Problems \ref{eq:LP1-STC} and \ref{eq:LP2-STC}
and solve them using more efficient combinatorial algorithms.
In particular, 
the algorithm of Hochbaum and Naor~\cite{hochbaum1994simple}
is designed for a family of integer problems named 2VAR problems.
2VAR problems are integer programs (IP) with 2 variables per constraint 
of the form $a_{k}x_{i_k} - b_{k}x_{j_k} \geq c_{k}$ with rational $a_k, b_k,$ and $c_k$,
in addition to integer lower and upper bounds on the variables.
A 2VAR problem is called \emph{monotone} if the coefficients $a_{k}$ and $b_{k}$ have the same sign. Otherwise the IP is called \emph{non-monotone}.
The algorithm of Hochbaum and Naor~\cite{hochbaum1994simple}
gives an \emph{optimal integral} solution for monotone IPs
and an \emph{optimal half-integral} solution for non-monotone IPs.
The running time of the algorithm is pseudopolynomial: polynomial in the range (difference between lower bound and upper bound) of the variables. 
More formally, it is $\bigO(n\Delta^2(n+r))$, 
where $n$ is the number of variables, 
$r$ is the number of constraints, and $\Delta$ is maximum range size. 
\ifkdd
Further details of the 2VAR problem formulation and
Hochbaum-Naor's algorithm for solving it
are discussed in the extended report~\cite{extended}.
\else
For completeness, we briefly discuss the problem and algorithm for solving it below.

\paragraph{The monotone case.}
We first consider an IP with monotone inequalities: 
\begin{align}\tag{monotone IP}\label{eq:monotonesystem}
\max\, & \sum_{i=1}^{n} d_ix_i,&\\
\st\, & a_{k}x_{i_k} - b_{k}x_{j_k} \geq c_{k}      &\mbox{ for } k = 1,\ldots,r,\\
& \ell_i\leq x_i\leq u_i,\quad\quad x_i\in \mathbb{Z}, &\mbox{ for } i = 1,\ldots,n,
\end{align}
where $a_k$, $b_k$, $c_k$, 
and $d_i$
are rational, 
while 
$\ell_i$ and $u_i$
are integral. 
The coefficients $a_k$ and $b_k$ 
have the same sign, 
and $d_i$ 
can be negative.

The algorithm is based on constructing a weighted directed 
graph $G'=(V',E')$ and finding a minimum $s\mbox{-}t$ cut on~$G'$. 

For the construction of the graph $G'$, 
for each variable $x_i$ in the IP we create a set of $(u_i-\ell_i+1)$ nodes $\{v_{ip}\}$, 
one for each integer~$p$ in the range $[\ell_i, u_i]$. 
An auxiliary source node $s$ and a sink node~$t$ are added. 
All nodes that correspond to positive integers are denoted by $V^{+}$, and 
all nodes that correspond to non-positive integers are denoted by $V^{-}$.

The edges of $G'$ are created as follows:
First, we connect the source $s$ to all nodes $v_{ip}\in V^{+}$, 
with $\ell_i+1 \le p \le u_i$. 
We also connect all nodes $v_{ip}\in V^{-}$, with $\ell_i+1 \le p \le u_i$,
to the sink node $t$.
All these edges have weight $|d_i|$.
The rest of the edges described below have infinite weight.

For the rest of the graph, 
we add edges from $s$ to all nodes $v_{ip}$ with $p=\ell_i$---both in $V^+$ and $V^-$. 
For all $\ell_i+1 \le p \le u_i$, 
the node $v_{ip}$ is connected to $v_{i(p-1)}$ by a directed edge. 
Let $q_k(p)=\lceil {\frac{c_{k}+ b_{k}p}{a_k}} \rceil$. 
For each inequality $k$ we connect node $v_{j_kp}$, 
corresponding to $x_{j_k}$ with $\ell_{j_k} \le p \le u_{j_k}$, 
to the node $v_{i_kq}$, 
corresponding to $x_{i_k}$ where $q=q_k(p)$. 
If $q_k(p)$ is below the feasible range $[\ell_{i_k}, u_{i_k}]$, 
then the edge is not needed. 
If $q_k(p)$ is above this range, then node $v_{j_kp}$ must be connected to the sink $t$.

Hochbaum and Naor~\cite{hochbaum1994simple} show 
that the optimal solution of (\ref{eq:monotonesystem}) 
can be derived from the source set $S$ of minimum $s$-$t$ 
cuts on the graph $G'$ by setting $x_i=\max \{p\mid v_{ip}\in S\}$. 
The complexity of this algorithm is dominated by solving the minimum $s$-$t$ cut problem, 
which is $\bigO(|V'||E'|)= \bigO(n\Delta^2(n+r))$, 
where $n$ is the number of variables in the (\ref{eq:monotonesystem}) problem, 
$r$ is the number of constraints, 
and $\Delta$ is maximum range size $\Delta=max_{i=[1,n]} (u_i-\ell_i+1)$. 
Recall that in Problem \ref{eq:LP1-STC} 
the number of variables is equal to the number of edges in the original graph $m$, 
the number of constraints is equal to the number of wedges $|\cW|$, 
while the range size $\Delta$ is constant.

Note also that despite the relatively high worst-case complexity, 
in practice the graph is sparse and finding the cut is fast.

\paragraph{Monotonization and half-integrality}
A non-monotone IP with two variables per constraint is NP-hard. 
Edelsbrunner et al.~\cite{edelsbrunner1989testing} 
showed that a non-monotone IP with two variables per constraint has half-integral solutions, 
which can be obtained by the following monotonization procedure.
Consider a non-monotone IP:

\begin{align}\tag{non-monotone IP}\label{eq:nonmonotone}
\max\, & \sum_{i=1}^{n} d_ix_i,&\\
\st\, & a_{k}x_{i_k} + b_{k}x_{j_k} \geq c_{k}& \mbox{ for all } k = 1,\ldots,m,\\
& \ell_i\leq x_i\leq u_i,\quad\quad x_i\in \mathbb{Z}, &\mbox{ for all } i = 1,\ldots,n,
\end{align}
with no constraints on the signs of $a_{k}$ and $b_{k}$. 

For monotonization we replace each variable 
$x_i$ by $x_i=\frac{x_i^+ - x_i^-}{2}$, 
where $\ell_i\leq x_i^+\leq u_i$ and $-u_i\leq x_i^-\leq -\ell_i$.
Each non-monotone inequality 
($a_k$ and $b_k$ having the same sign) 
$a_{k}x_{i_k} + b_{k}x_{j_k} \geq c_{k}$ is replaced by a pair:
\begin{align}
a_{k}x_{i_k}^+ - b_{k}x_{i_k}^- \geq c_{k}\\
-a_{k}x_{i_k}^- + b_{k}x_{i_k}^+ \geq c_{k}
\end{align}

Each monotone inequality $\bar a_{k}x_{i_k} - \bar b_{k}x_{j_k} \geq c_{k}$ is replaced by:
\begin{align}
\bar a_{k}x_{i_k}^+ - \bar b_{k}x_{j_k}^+ \geq c_{k}\\
-\bar a_{k}x_{i_k}^- + \bar b_{k}x_{j_k}^- \geq c_{k}
\end{align}

The objective function is replaced by $\sum_{i=1}^{n}{ \frac{1}{2} d_ix^+_i -\frac{1}{2} d_ix^-_i}$.

By construction, the resulting \ref{eq:monotonesystem} is a half-integral relaxation of 
the Problem (\ref{eq:nonmonotone}).
\fi

\ref{eq:LP1-STC} is a (non-monotone) 2VAR system, 
so that it can directly be solved by the algorithm of Hochbaum and Naor.
Problem~\ref{eq:LP2-STC}, however, is not a 2VAR problem,
such that the Hochbaum and Naor algorithm is not directly applicable.
Yet for integer $d\geq 1$, Problem~\ref{eq:LP2-STC-sym} \emph{is} a 2VAR problem.
The lower bound on each of the variables is $0$,
and the upper bound is equal to $1$ for the wedge edges and $d+1$ for the triangle edges---i.e., both lower and upper bound are integers.
For rational $d$,
the upper bound is $\max\{2,d+1\}$, which may be rational,
but reformulating the problem in terms of $a\cdot w_{ij}$ for $a$ the smallest integer for which $a\cdot d$ is integer turns it into a 2VAR problem again.
Thus, for $d$ rational, finding one of the \emph{symmetric} solutions of Problem~\ref{eq:LP2-STC} \emph{can} be done using Hochbaum and Naor's algorithm.
Moreover, this symmetric solution will immediately be one of the half-integral solutions we know exist from \ifkdd Theorem~\ref{thm:half-integrality}\else Corollary~\ref{cor:half-integrality}\fi.

%
%
%


\note[Tijl$\rightarrow$Aris, Polina]{What is the computational complexity,
for comparison with interior point methods?}

\note[Polina]{Interior point methods seems to be an active field, people study lots of special cases and usually consider $m=n$ in complexity analysis which is confusing. 
The best, that we could find now has complexity of $O(n^3/\log(n))$. 
The classic result is $O(m^{3/2} n^2 L) = O( m^{3/2} n^2 \log(m+n))$ where $L$ is the number of bits, needed to represent the problem.
N. Karmarkar. A new polynomial-time algorithm for linear programming. Combinatorica,1984.
We can either use it or let me know and I'll try to figure out what is the most recent development.}

\oldnote[Tijl]{Start from the 2VAR formulation, and assume $d$ is integer. Then any optimal symmetric solution is half-integral. As Hochbaum's algorithm for such 2VAR problems finds an optimal half-integral solution, it appears it should be applicable.}

\oldnote[Tijl]{Thus we need to impose symmetry on the solution, for which there are two options: either the problem with the equality constraints within each triangle clique or ray bundle is solved, or first these triangle cliques and ray bundles are sought after which only one variable for each is used. The latter is no doubt more efficient.
Note that finding all equivalence classes of isomorphic edges may be a small challenge in itself. One `tricky' example illustrating a challenge is where there are two cliques, both fully connected to each other. This means that each node in one clique has a bundle going to the nodes in the other and vice versa. The result of equality constraints within each bundle will be that each of edges between the two bundles are equal in a symmetric solution. Thus, some equivalence classes may be larger than immediately observable using a naive approach.
A good algorithm could be by first finding all connected components (which are cliques) in the graph induced by triangle edges, and using one triangle-edge variable for each such clique. The second step is using one wedge-edge variable for each such clique (including the unconnected nodes) that is connected by a wedge-edge.
Finally, not that the solutions are not invariant with respect to all graph isomorphisms---just with respect to permutations within the triangle cliques (and hence the ray bundles).}

\oldnote[Tijl]{We can special-case $d=1$, when the triangle edges always get strength $2$. This holds because these are the maximal value for these strengths at a symmetric optimum, and because all inequalities they are part of (i.e., triangle inequalities) are automatically satisfied when they have this value.}


\oldnote[Tijl]{First of all, for $0<d<1$ the triangle edges that are adjacent to at least one other triangle edge can be assigned a weight of $\frac{2}{2-d}$. This holds because these are the maximal value for these strengths at a symmetric optimum, and because all inequalities they are part of (i.e., triangle inequalities) are automatically satisfied when they have this value.}


\oldnote[Tijl]{By multiplying the triangle inequalities by the smallest integer $a$ for which $a\cdot d$ is integer, and absorbing that factor into the strength of the triangle edges, one gets again a half-integral polytope in the transformed variables (excluding those triangle edges adjacent to at least one other triangle edge, as the strength thereof is already known), and Hochbaum is applicable.}

\ifkdd\else
\subsection{Approaches for further reducing arbitrariness}\label{sec:reducing-arbitrariness}

\note[Aris]{This section is a bit complex. 
There are a few different ideas running in parallel. 
Also, it is not entirely clear to me, so as to improve it.
I suggest to simplify it. 
One suggestion is to start with a single approach, explain it more clearly, 
and then only at the end mention other approaches that can be solved in a similar way. }

As pointed out in Sec.~\ref{sec:2var},
Problem~\ref{eq:LP2-STC-sym} does not impose symmetry with respect to \emph{all} graph automorphisms,
as it would be impractical to enumerate them.
However, in Sec.~\ref{sec:full-symmetry} below we discuss an efficient (polynomial-time) algorithm
that is able to find a solution that satisfies all such symmetries,
without the need to explicitly enumerate all graph automorphisms.

Furthermore, in Sec.~\ref{sec:optimalface} we discuss a strategy for reducing arbitrariness
that is not based on finding a fully symmetric solution.
This alternative strategy is to characterize the entire optimal face of the feasible polytope,
rather than selecting a single optimal (symmetric) solution from it.
We furthermore propose a number of different algorithms implementing this strategy,
which run in polynomial time as well.


Several algorithms discussed below exploit the following characterization of the optimal face.
As an example, 
and with $o^*$ the value of the objective at the optimum,
for Problem~\ref{eq:LP2-STC-sym} this characterization is:
\begin{align*}
\mathcal{P}^*=\Big\{\bw \mid & \sum_{\{i,j\}\in E}w_{ij}=o^*,&&\\
& w_{ij} + w_{ik} \leq 1, &\mbox{for all }& (i,\{j,k\})\in\cW,\\
& w_{ij}+w_{ik} \leq 2 + d\cdot w_{jk}, &\mbox{for all }& \{i,j,k\}\in\cT,\\
& w_{ij}\geq 0, &\mbox{for all }& \{i,j\}\in E.\Big\}
\end{align*}
It is trivial to extend this to the optimal faces of the other problems.

\subsubsection{Invariance with respect to all graph automorphisms}\label{sec:full-symmetry}

Here we discuss an efficient algorithm to find a fully symmetric solution,
without explicitly having to enumerate all graph automorphisms.

Given the optimal value of the objective function of (for example) Problem~\ref{eq:LP2-STC},
consider the following problem which finds a point in the optimal face of the feasible polytope
that minimizes the sum of squares of all edge strengths:
\begin{align}\tag{LP2fullsym}\label{eq:LP2-STC-full-sym}
\min_{\bw} & \sum_{\{i,j\}\in E} (w_{ij})^2,&&\\
\st &\bw\in\mathcal{P}^*.\nonumber
\end{align}
As $\mathcal{P}^*$ is a polytope,
this is a Linearly Constrained Quadratic Program (LCQP),
which again can be solved efficiently using interior point methods.

\begin{theorem}[Problem~\ref{eq:LP2-STC-full-sym} finds a solution symmetric with respect to all graph automorphisms]
The edge strength assignments that minimize Problem~\ref{eq:LP2-STC-full-sym}
are an optimal solution to Problem~\ref{eq:LP2-STC}
that is symmetric with respect to all graph automorphisms.
\end{theorem}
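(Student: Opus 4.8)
The key idea is that the objective $\sum_{\{i,j\}\in E} (w_{ij})^2$ is a \emph{strictly convex} function, so the linearly constrained quadratic program~\ref{eq:LP2-STC-full-sym} has a \emph{unique} minimizer $\bw^*$ over the polytope $\mathcal{P}^*$. Since $\mathcal{P}^*$ is precisely the optimal face of Problem~\ref{eq:LP2-STC}, any point in $\mathcal{P}^*$ is an optimal solution to Problem~\ref{eq:LP2-STC}, so $\bw^*$ in particular is. It remains to show $\bw^*$ is invariant under every graph automorphism $\alpha$ of $G$.

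\textbf{Key steps.} First I would record that $\mathcal{P}^*$ is a nonempty, compact, convex set (it is the intersection of the bounded feasible polytope of Problem~\ref{eq:LP2-STC}, which is finite by Proposition~\ref{prop:finite-noslacks}, with the hyperplane $\sum w_{ij}=o^*$), so the strictly convex quadratic attains its minimum at a unique point $\bw^*$. Second, I would observe that $\mathcal{P}^*$ is invariant under the action of the automorphism group: if $\bw\in\mathcal{P}^*$ and $\alpha$ is a graph automorphism, then the permuted assignment $\bw^\alpha$ defined by $w^\alpha_{ij}=w_{\alpha(i)\alpha(j)}$ is again feasible for Problem~\ref{eq:LP2-STC} (because $\alpha$ preserves edges, wedges, and triangles, exactly as in the proof of Theorem~\ref{thm:automorphisms}) and has the same objective value $o^*$ (it merely reindexes the sum), hence $\bw^\alpha\in\mathcal{P}^*$. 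Third, the objective of Problem~\ref{eq:LP2-STC-full-sym} is itself automorphism-invariant: $\sum_{\{i,j\}\in E}(w^\alpha_{ij})^2 = \sum_{\{i,j\}\in E}(w_{ij})^2$, again because $\alpha$ just permutes the edge set. Therefore $\bw^\alpha$ is also a minimizer of Problem~\ref{eq:LP2-STC-full-sym}; by uniqueness of the minimizer, $\bw^\alpha=\bw^*$, i.e., $w^*_{\alpha(i)\alpha(j)}=w^*_{ij}$ for every automorphism $\alpha$. This is exactly the claimed full symmetry, and combined with $\bw^*\in\mathcal{P}^*$ being optimal for Problem~\ref{eq:LP2-STC}, the theorem follows.

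\textbf{Main obstacle.} The only genuinely delicate point is establishing that the minimizer is unique. Strict convexity of $\sum (w_{ij})^2$ on $\mathbb{R}^{|E|}$ gives this immediately once we know the feasible set is convex and the minimum is attained; attainment needs compactness of $\mathcal{P}^*$, which in turn relies on the standing assumption (via Proposition~\ref{prop:finite-noslacks}) that no connected component of $G$ is a clique, so the feasible region of Problem~\ref{eq:LP2-STC} is bounded. I would state this dependence explicitly. A secondary, more cosmetic point is that the objective written in Problem~\ref{eq:LP2-STC-full-sym} sums only over $\{i,j\}\in E$, whereas a symmetric solution should also be constant within bundles of rays and within triangle cliques --- but this is automatic, since automorphism invariance (in particular invariance under the permutations $\Pi$ of Corollary~\ref{cor:permutations}) already forces equality of strengths within each triangle clique and each bundle. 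No further argument is needed beyond noting that $\Pi$ is a subgroup of the automorphism group.
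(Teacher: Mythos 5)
Your proof is correct and essentially the same as the paper's: both arguments rest on strict convexity of the sum-of-squares objective together with the observation that permuting any point of $\mathcal{P}^*$ by a graph automorphism yields another point of $\mathcal{P}^*$ with the same quadratic value; the paper packages this as a contradiction (averaging $\bw^*$ with its permuted copy strictly decreases the objective), while you package it as uniqueness of the minimizer, which is the identical convexity inequality. Your explicit treatment of attainment (compactness of $\mathcal{P}^*$ via the no-clique-component assumption) is a minor point the paper leaves implicit, but it does not constitute a different approach.
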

\begin{proof}
Let us denote the optimal vector of weights found by solving
Problem~\ref{eq:LP2-STC-full-sym} as $\bw^*$.
It is clear that $\bw^*$ is an optimal solution to Problem~\ref{eq:LP2-STC},
as it is constrained to be such.

Now, we will prove symmetry by contradiction:
let us assume there is a graph automorphism $\alpha\in\cA$ with respect to which it is not symmetric,
such that there exists a set of edges $\{i,j\}\in E$ for which
$w^*_{ij}\neq w^*_{\alpha(i)\alpha(j)}$.
Due to convexity, $\bw^{**}$ with $w^{**}_{ij}=\frac{w^*_{ij}+w^*_{\alpha(i)\alpha(j)}}{2}$ is then also
a solution to Problem~\ref{eq:LP2-STC-full-sym} and thus to Problem~\ref{eq:LP2-STC}.
However, since $a^2+b^2\geq 2\left(\frac{a+b}{2}\right)^2$ for any $a,b\in\mathbf{R}$,
$\bw^{**}$ has a smaller value for the objective of Problem~\ref{eq:LP2-STC-full-sym},
such that $\bw^*$ cannot be optimal---a contradiction.
\end{proof}
For simplicity of notation,
we explained this strategy for Problem~\ref{eq:LP2-STC},
but of course it is computationally more attractive to seek a solution within
the optimal face of the feasible polytope for Problem~\ref{eq:LP2-STC-sym}.

\subsubsection{Characterizing the entire optimal face of the feasible polytope}\label{sec:optimalface}

Here, we discuss an alternative strategy for reducing arbitrariness,
which is to characterize the entire optimal face of the feasible polytope of the proposed problem formulations,
rather than to select a single (possibly arbitrary) optimal solution from it.
Specifically, we propose three algorithmic implementations of this strategy.

The first algorithmic implementation of this strategy exactly characterizes the range of the strength of each edge amongst the optimal solutions.
This range can be found by solving, for edge strength $w_{ij}$ for each $\{i,j\}\in E$,
two optimization problems:
\begin{align*}
\max_{\bw}\, &w_{ij},\\
\st&\bw\in\mathcal{P}^*.
\end{align*}
and
\begin{align*}
\min_{\bw}\, &w_{ij},\\
\st&\bw\in\mathcal{P}^*.
\end{align*}
This is again an LP, and thus requires polynomial time.
Yet, it is clear that this approach is impractical,
as the number of such optimization problems to be solved is twice the number of variables in the original problem.

\oldnote[Tijl]{By solving a variant of the linear program, one for each edge}

The second algorithmic implementation of this strategy is computationally much more attractive,
but quantifies the range of each edge strength only partially.
It exploits the fact that the strengths at the vertex points of the optimal face belong to a finite set of values.
Thus, given any optimal solution,
we can be sure that for each edge,
there exists an optimal solution for which any given edge's strength is equal to the smallest value within that set equal to or exceeding the value in that optimal solution,
as well as a for which it is equal to the largest value within that set equal to or smaller than the value in that optimal solution.
To ensure this range is as large as possible,
it is beneficial to avoid finding vertex points of the feasible polytope,
and more generally points that do not lie within the \emph{relative interior} of the optimal face.
This can be done in the same polynomial time complexity as solving the LP itself,
namely $\bigO(n^3L)$ where $L$ is the input length of the LP \cite{mehrotra1993finding}.
This could be repeated several times with different random restarts to yield wider intervals for each edge strength.

\oldnote[Tijl]{Using interior point method for finding point in relative interior of optimal face}

The third implementation is to uniformly sample points (i.e., optimal solutions) from the optimal face $\mathcal{P}^*s$.
A recent paper \cite{chen2017fast} details an MCMC algorithm with polynomial mixing time for achieving this.

\oldnote[Tijl]{Also mention the possibility of sampling from the optimal face, as further work.}

\fi

\section{Empirical results}\label{sec:empirical}

This section contains the main empirical findings.
\ifkdd
Further details are available in the extended report~\cite{extended}.
\fi
The code used in the experiments is available at \url{https://bitbucket.org/ghentdatascience/stc-code-public}.

\subsection{Qualitative analysis}
\begin{figure}[t]
\begin{subfigure}[b]{0.4\textwidth}
\centering
\resizebox{\linewidth}{!}{
\begin{tikzpicture}
\tikzstyle{exedge} = [yafcolor5!80, thick, text=black!80]
\tikzstyle{exedge2} = [yafcolor2!80, very thick, text=black!80]
\tikzstyle{exnode} = [thick, draw = yafcolor7!80, fill=white, circle, inner sep = 1pt, text=black, minimum width=11pt]

\node[exnode] (v1) at (0, 3)  {$1$};
\node[exnode] (v3) at (0, 1)  {$3$};
\node[exnode] (v2) at (0, 2)  {$2$};
\node[exnode] (v4) at (2, 2)  {$4$};

\node[exnode] (v5) at (4, 2)  {$5$};
\node[exnode] (v6) at (5.2, 2)  {$6$};
\node[exnode] (v7) at (6, 3)  {$7$};
\node[exnode] (v8) at (6, 1)  {$8$};

\draw[-, exedge, bend left = 0] (v1) to (v2);
\node (v1v2) at (-0.3,2.5) {$0$};

\draw[-, exedge, bend left = 0] (v3) to (v2);
\node (v3v2) at (-0.3,1.5) {$1$};

\draw[-, exedge, bend left = 0] (v2) to (v4);
\node (v4v2) at (1,2.15) {$1$};

\draw[-, exedge, bend left = 0] (v1) to (v4);
\node (v1v4) at (1.15,2.65) {$0$};

\draw[-, exedge, bend left = 0] (v3) to (v4);
\node (v3v4) at (1.15,1.35) {$1$};

\draw[-, exedge, bend left = 0] (v4) to (v5);
\node (v4v5) at (3,2.15) {$0$};

\draw[-, exedge, bend left = 0] (v5) to (v6);
\node (v5v6) at (4.7,2.15) {$1$};

\draw[-, exedge, bend left = 0] (v5) to (v7);
\node (v5v7) at (4.9,2.65) {$1$};

\draw[-, exedge, bend left = 0] (v5) to (v8);
\node (v5v8) at (4.9,1.35) {$1$};

\draw[-, exedge2, bend left = 0] (v6) to (v7);
\node (v6v7) at (5.7,2.45) {$1$};

\draw[-, exedge2, bend left = 0] (v6) to (v8);
\node (v6v8) at (5.7,1.55) {$1$};

\draw[-, exedge2, bend left = 0] (v7) to (v8);
\node (v7v8) at (6.15,2) {$1$};
\end{tikzpicture}}
\caption{\STCbinary\ Greedy approximation.}
\label{fig:toybin}
\end{subfigure}

\begin{subfigure}[b]{0.4\textwidth}
\centering
\resizebox{\linewidth}{!}{
\begin{tikzpicture}
\tikzstyle{exedge} = [yafcolor5!80, thick, text=black!80]
\tikzstyle{exedge2} = [yafcolor2!80, very thick, text=black!80]
\tikzstyle{exnode} = [thick, draw = yafcolor7!80, fill=white, circle, inner sep = 1pt, text=black, minimum width=11pt]

\node[exnode] (v1) at (0, 3)  {$1$};
\node[exnode] (v3) at (0, 1)  {$3$};
\node[exnode] (v2) at (0, 2)  {$2$};
\node[exnode] (v4) at (2, 2)  {$4$};

\node[exnode] (v5) at (4, 2)  {$5$};
\node[exnode] (v6) at (5.2, 2)  {$6$};
\node[exnode] (v7) at (6, 3)  {$7$};
\node[exnode] (v8) at (6, 1)  {$8$};

\draw[-, exedge, bend left = 0] (v1) to (v2);
\node (v1v2) at (-0.3,2.5) {$1/2$};

\draw[-, exedge, bend left = 0] (v3) to (v2);
\node (v3v2) at (-0.3,1.5) {$1/2$};

\draw[-, exedge, bend left = 0] (v2) to (v4);
\node (v4v2) at (1,2.15) {$1$};

\draw[-, exedge, bend left = 0] (v1) to (v4);
\node (v1v4) at (1.15,2.65) {$1/2$};

\draw[-, exedge, bend left = 0] (v3) to (v4);
\node (v3v4) at (1.15,1.35) {$1/2$};

\draw[-, exedge, bend left = 0] (v4) to (v5);
\node (v4v5) at (3,2.15) {$0$};

\draw[-, exedge, bend left = 0] (v5) to (v6);
\node (v5v6) at (4.7,2.15) {$1$};

\draw[-, exedge, bend left = 0] (v5) to (v7);
\node (v5v7) at (4.9,2.65) {$1$};

\draw[-, exedge, bend left = 0] (v5) to (v8);
\node (v5v8) at (4.9,1.35) {$1$};

\draw[-, exedge2, bend left = 0] (v6) to (v7);
\node (v6v7) at (5.7,2.45) {$1$};

\draw[-, exedge2, bend left = 0] (v6) to (v8);
\node (v6v8) at (5.7,1.55) {$1$};

\draw[-, exedge2, bend left = 0] (v7) to (v8);
\node (v7v8) at (6.15,2) {$1$};
\end{tikzpicture}}
\caption{\ref{eq:LP1-STC}.}
\label{fig:toylp1}
\end{subfigure}

\begin{subfigure}[b]{0.4\textwidth}
\centering
\resizebox{\linewidth}{!}{
\begin{tikzpicture}
\tikzstyle{exedge} = [yafcolor5!80, thick, text=black!80]
\tikzstyle{exedge2} = [yafcolor2!80, very thick, text=black!80]
\tikzstyle{exnode} = [thick, draw = yafcolor7!80, fill=white, circle, inner sep = 1pt, text=black, minimum width=11pt]

\node[exnode] (v1) at (0, 3)  {$1$};
\node[exnode] (v3) at (0, 1)  {$3$};
\node[exnode] (v2) at (0, 2)  {$2$};
\node[exnode] (v4) at (2, 2)  {$4$};

\node[exnode] (v5) at (4, 2)  {$5$};
\node[exnode] (v6) at (5.2, 2)  {$6$};
\node[exnode] (v7) at (6, 3)  {$7$};
\node[exnode] (v8) at (6, 1)  {$8$};

\draw[-, exedge, bend left = 0] (v1) to (v2);
\node (v1v2) at (-0.3,2.5) {$1/2$};

\draw[-, exedge, bend left = 0] (v3) to (v2);
\node (v3v2) at (-0.3,1.5) {$1/2$};

\draw[-, exedge, bend left = 0] (v2) to (v4);
\node (v4v2) at (1,2.15) {$1$};

\draw[-, exedge, bend left = 0] (v1) to (v4);
\node (v1v4) at (1.15,2.65) {$1/2$};

\draw[-, exedge, bend left = 0] (v3) to (v4);
\node (v3v4) at (1.15,1.35) {$1/2$};

\draw[-, exedge, bend left = 0] (v4) to (v5);
\node (v4v5) at (3,2.15) {$0$};

\draw[-, exedge, bend left = 0] (v5) to (v6);
\node (v5v6) at (4.7,2.15) {$1$};

\draw[-, exedge, bend left = 0] (v5) to (v7);
\node (v5v7) at (4.9,2.65) {$1$};

\draw[-, exedge, bend left = 0] (v5) to (v8);
\node (v5v8) at (4.9,1.35) {$1$};

\draw[-, exedge2, bend left = 0] (v6) to (v7);
\node (v6v7) at (5.7,2.45) {$2$};

\draw[-, exedge2, bend left = 0] (v6) to (v8);
\node (v6v8) at (5.7,1.55) {$2$};

\draw[-, exedge2, bend left = 0] (v7) to (v8);
\node (v7v8) at (6.15,2) {$2$};
\end{tikzpicture}}
\caption{\ref{eq:LP2-STC} (d=1).}
\label{fig:toylp2}
\end{subfigure}

\begin{subfigure}[b]{0.4\textwidth}
\centering
\resizebox{\linewidth}{!}{
\begin{tikzpicture}
\tikzstyle{exedge} = [yafcolor5!80, thick, text=black!80]
\tikzstyle{exedge2} = [yafcolor2!80, very thick, text=black!80]
\tikzstyle{exedge_epsi} = [yafcolor4!80, thick, dashed, text=black!80]
\tikzstyle{exnode} = [thick, draw = yafcolor7!80, fill=white, circle, inner sep = 1pt, text=black, minimum width=11pt]

\node[exnode] (v1) at (0, 3)  {$1$};
\node[exnode] (v3) at (0, 1)  {$3$};
\node[exnode] (v2) at (0, 2)  {$2$};
\node[exnode] (v4) at (2, 2)  {$4$};

\node[exnode] (v5) at (4, 2)  {$5$};
\node[exnode] (v6) at (5.2, 2)  {$6$};
\node[exnode] (v7) at (6, 3)  {$7$};
\node[exnode] (v8) at (6, 1)  {$8$};

\draw[-, exedge, bend left = 0] (v1) to (v2);
\node (v1v2) at (0.2,2.5) {$2$};

\draw[-, exedge, bend left = 0] (v3) to (v2);
\node (v3v2) at (0.2,1.5) {$2$};

\draw[-, exedge, bend left = 0] (v2) to (v4);
\node (v4v2) at (1,2.15) {$2$};

\draw[-, exedge, bend left = 0] (v1) to (v4);
\node (v1v4) at (1.15,2.65) {$2$};

\draw[-, exedge, bend left = 0] (v3) to (v4);
\node (v3v4) at (1.15,1.35) {$2$};

\draw[-, exedge, bend left = 0] (v4) to (v5);
\node (v4v5) at (3,2.15) {$-1$};

\draw[-, exedge, bend left = 0] (v5) to (v6);
\node (v5v6) at (4.7,2.15) {$2$};

\draw[-, exedge, bend left = 0] (v5) to (v7);
\node (v5v7) at (4.9,2.65) {$2$};

\draw[-, exedge, bend left = 0] (v5) to (v8);
\node (v5v8) at (4.9,1.35) {$2$};

\draw[-, exedge2, bend left = 0] (v6) to (v7);
\node (v6v7) at (5.7,2.45) {$2$};

\draw[-, exedge2, bend left = 0] (v6) to (v8);
\node (v6v8) at (5.7,1.55) {$2$};

\draw[-, exedge2, bend left = 0] (v7) to (v8);
\node (v7v8) at (6.15,2) {$2$};

\draw[-, exedge_epsi, bend left = -50] (v1) to (v3);
\node (v1v3) at (-0.7,2) {$2$};

\end{tikzpicture}}
\caption{\ref{eq:LP4-STC} (d=1, C=1).}
\label{fig:toylp4}
\end{subfigure}
\caption{\label{fig:toyexample}Toy example with 8 nodes to show the different outcomes of the proposed algorithms. The triangle edges are shown in orange.}
\end{figure}
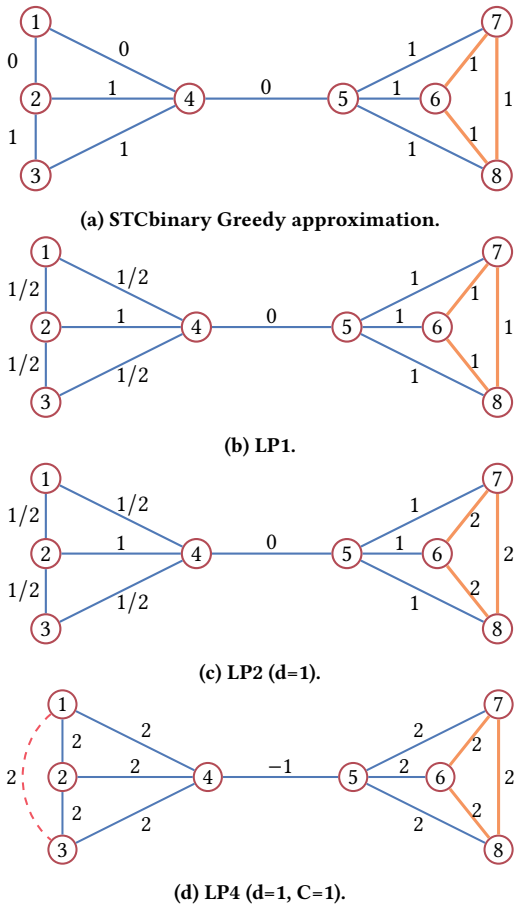
To gain some insight in our methods, we start by discussing a simple toy example. Figure~\ref{fig:toyexample} shows a network of 8 nodes, modelling a scenario of 2 communities being connected by a bridge, i.e., the edge $\{4,5\}$.
The nodes $\{1,2,3,4\}$ form a near-clique---the edge $\{1,3\}$ is missing---while the nodes $\{5,6,7,8\}$ form a 4-clique. This 4-clique contains a triangle clique: the subgraph induced by the nodes $\{6,7,8\}$.
Triangle edges are colored orange in the figure.

Fig.~\ref{fig:toybin} contains a solution to \STCbinary.
Fig.~\ref{fig:toylp1} shows a half-integral optimal solution to Problem \ref{eq:LP1-STC}. 
We observe that for \STCbinary\ we could swap nodes 1 and 3 and obtain a different yet equally good solution, hence the strength assignment is arbitrary with respect to several edges, while for \ref{eq:LP1-STC} the is not the case. Indeed, there is no evidence to prefer a strong label for edges $\{2,3\}$ and $\{3,4\}$ over the edges $\{1,2\}$ and $\{1,4\}$.

Figure~\ref{fig:toylp2} shows a symmetric optimal solution to Problem~\ref{eq:LP2-STC}, allowing for multi-level edge strengths. It labels the triangle edges as stronger than all other (wedge) edges, in accordance with Theorem~\ref{thm:all} and \ifkdd Theorem~\ref{thm:half-integrality}\else Corollary~\ref{cor:half-integrality}\fi.

Finally, Figure~\ref{fig:toylp4} shows the outcome of \ref{eq:LP4-STC} for $d=1$ and $C=1$, allowing for edge additions and deletions.
For $C=0$, the problem becomes unbounded: the edge $\{4,5\}$ is only part of wedges, and since wedge violations are unpenalized, $w_{45}=+\infty$ is the best solution (see Section~\ref{sec:negedge}).
Since this edge is part of 6 wedges, the problem becomes bounded for $C>1/6$.
For $C=1$, the algorithm produces a value of 2 for the absent edge $\{1,3\}$.
This suggests the addition of an edge $\{1,3\}$ with strength $2$ to the network, in order to increase the objective function.
\ifkdd\else
This is the only edge being suggested for addition by the algorithm.
\fi
Edge $\{4,5\}$, on the other hand, is given a value of $-1$.
As discussed in Section~\ref{sec:negedge}, this corresponds to the strength of an
absent edge (when $d=1$), suggesting the removal of the bridge in the network in
order to increase the objective.

\note[Jef->Florian]{I do not understand the comment about $\{4,5\}$ only being in wedges hence the problem becoming unbounded. Both existing and non-existing edges are lower-bounded so how can $\{4,5\}$ be bounded by any constraints anymore?}

\ifkdd\else
For large $C$ there will be no more edge additions being suggested, as can be seen by setting $C=\infty$ in LP4-STC (reducing it to LP3-STC).
The cost of a violation of a wedge constraint will always be higher than the possible benefits.
However, regardless of the value of $C$, the edge $\{4,5\}$ is always being suggested for edge deletion.
\fi

A further illustration on a more realistic network is given in Fig~\ref{fig:heatmap-lesmis},
which shows the edge strengths assigned by \STCbinary\ (1st), \ref{eq:LP1-STC} (2nd), \ref{eq:LP2-STC} with $d=1$ (3rd), and \ref{eq:LP4-STC} with $d=1$ and $C=1$ (4th).
Also here, we see that \STCbinary\ is forced to make arbitrary choices,
while \ref{eq:LP1-STC}, and \ref{eq:LP2-STC} avoids this by making use of an intermediate level.
Densely-connected parts of the graph tend to contain edges marked as strong,
with an extra level of strength for \ref{eq:LP2-STC} assigned to the triangle edges.
In comparison with~\ref{eq:LP2-STC},~\ref{eq:LP4-STC} suggests to remove a lot of
weak edges (weight 0 in \ref{eq:LP2-STC})
that act as bridges between the communities,
in order to allow a stronger labeling in the densely-connected regions.
Besides edge removal, it also suggests the addition of edges in a near-cliques to form full cliques.

\note[Jef->Florian]{How do we see that \STCbinary\ is forced to make arbitrary choices?}


\begin{figure}[tp]
    \centering
\includegraphics[width=\columnwidth]{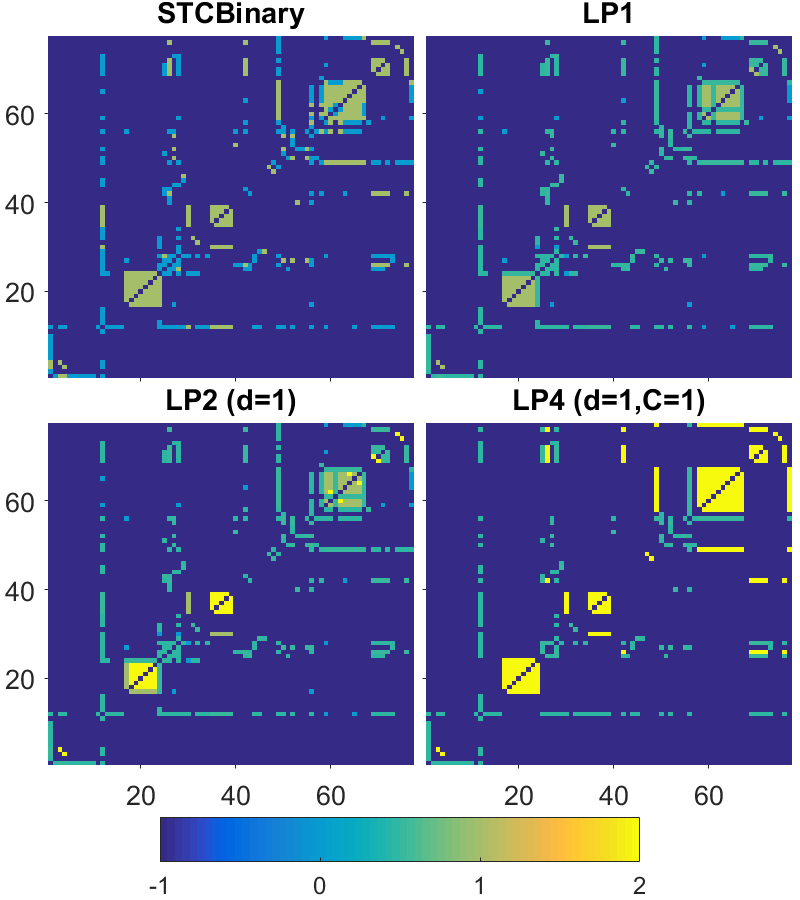}
    \caption{Heatmap of the edge strengths on Les Miserables with methods~\STCbinary\ Greedy (1st),~\ref{eq:LP1-STC} (2nd),~\ref{eq:LP2-STC} with $d=1$ (3rd) and~\ref{eq:LP4-STC} with $d=1$ and $C=1$ (4th). A strength of $-1$ indicates there is no edge.}
		\label{fig:heatmap-lesmis}
\end{figure}

\subsection{Objective performance analysis}

\note[Tijl]{I shortened this title to make some space (and also because we might forget to adapt it if we end up not doing link prediction!}

We evaluate our approaches in a similar manner as Sintos and Tsaparas~\cite{sintos2014using}.
In particular, we investigate whether the optimal strength assignments correlate to externally provided ground truth measures of tie strength,
on a number of networks for which such information is available.
Table~\ref{table:stats} shows a summary of the dataset statistics and edge weight interpretations. 

\begin{table}
\caption{Network statistics.\label{table:stats}}
\resizebox{\columnwidth}{!}{%
\begin{tabular}{lrrl}
\toprule
\multicolumn{1}{l}{Network}&
\multicolumn{1}{c}{Vertices}&
\multicolumn{1}{c}{Edges}&
\multicolumn{1}{c}{Edge weight meaning}\\
\midrule
Les Mis.&77&254&{co-appearence of characters in same chapter}\\
KDD&2\,738&11\,073&{co-authorship between 2 authors}\\
Facebook \cite{viswanath-2009-activity}&3\,228&4\,585&{number of posts on each other's wall}\\
Twitter&4\,185&5\,680&{mentions of each other}\\
Authors&9\,150&34\,614&{unknown}\\
BitCoin OTC&5\,875&21\,489&{Who-trust-whom score in Bitcoin OTC}\\
BitCoin Alpha&3\,775&14\,120&{Who-trust-whom score in Bitcoin Alpha}\\
\bottomrule
\end{tabular}%
}
\end{table}

We compare the algorithms \STCbinary\ Greedy 
(which Sintos and Tsaparas found to perform best), \ref{eq:LP1-STC} and \ref{eq:LP2-STC}.
For each dataset, the first row in Table~\ref{table:weights} 
displays the number of edges that are assigned in that category.
The second row shows the mean ground truth weight over the labeling assigned by the respective algorithm.

\note[Jef]{I think reviewers will find it very unconvincing that we did not test LP4 unless we present very good reasons to exclude it. If we do not want to have it in the table, we should just outright admit it performs very poorly.}
\note[Tijl->Florian, all]{I agree. Note that the section title still said we do link prediction too, so I removed that for now.
My preference would be that we evaluate LP4 for link prediction only, perhaps without comparison, in a very short paragraph.
In the tech report we could elaborate.
Of course, LP4 is also just a different method in that it assigns strengths to non-existent edges,
so we could also argue that a simple comparison cannot be made.
That would be the quick and easy solution if no time remains.}

Les Miserables is a network where \STCbinary\ Greedy is known to perform well~\cite{sintos2014using}. 
For this dataset, we can clearly see that our methods provide a correct multi-level strength labeling, 
enabling more refined notions of tie strength. 

A second observation is that in for the networks 
KDD, Facebook, Twitter, and Authors, neither the existing nor the newly-proposed methods perform well.
This raises the question of whether the STC assumption is valid in these networks 
with the provided ground truths.%
\footnote{For example, in a co-authorship network,
junior researchers having published their first paper with several co-authors
could well have all their first edges marked as strong,
as their co-authors are connected through the same publication.
Yet, they have not yet had the time to form strong connections according to the ground truth.
Also, although the Facebook and Twitter networks are social networks,
and hence have a natural tendency to satisfy the STC property \cite{David:2010:NCM:1805895},
these sampled networks are too sparse to accommodate a meaningful number of strong edges without any STC violations.}
That said, it is reassuring to see that our methods work in a robust and fail-safe way:
in such cases, as indicated by the high number of 1/2 strength assignments.

For trust networks in particular, however,
it has been described that the STC property is likely to develop due to the transitive property~\cite{David:2010:NCM:1805895}.
Indeed, if a user A trusts user B and user B trusts user C, then user A has a basis for trusting user C.
The two BitCoin networks are examples of such trust networks.
Our methods perform well in identifying some clearly strong and some clearly weak edges,
although it again takes a cautious approach in assigning an intermediate strength to many edges.
Remarkably though, \STCbinary\ Greedy performs poorly on this network,
incorrectly labeling many strong edges as weak and vice versa.

\begin{table}
\caption{Mean ground-truth weight analysis comparison of different STC methods.
For each dataset, the first row is the number of edges of an assigned label.
The 2nd row indicates the mean groundtruth weight over that respective set of edges.
The ground-truth strength ranges are indicated by the numbers between brackets. \label{table:weights}}
\resizebox{\columnwidth}{!}{%
\begin{tabular}{lrrcrrrcrrrr}
\toprule
\multicolumn{1}{l}{Network}&
\multicolumn{2}{c}{\STCbinary\ Greedy} & $\quad$ &
\multicolumn{3}{c}{\ref{eq:LP1-STC}} & $\quad$ &
\multicolumn{4}{c}{\ref{eq:LP2-STC} (d=1)}\\
\cmidrule{2-3}
\cmidrule{5-7}
\cmidrule{9-12}
&
\multicolumn{1}{c}{1}& 
\multicolumn{1}{c}{0}& &
\multicolumn{1}{c}{1}&\multicolumn{1}{c}{$1/2$}&\multicolumn{1}{c}{0}& &
\multicolumn{1}{c}{2}&\multicolumn{1}{c}{1}&\multicolumn{1}{c}{$1/2$}&\multicolumn{1}{c}{0}\\
\midrule
Les Mis. &131&123&&60&180&14&&30&30&180&14\\
$[1$--$31]$ &3.6&2.8&&4.5&2.9&1.5&&3.3&5.7&2.9&1.5\\
\midrule
KDD&3\,085&7\,988&&545&10\,390&138&&290&252&10\,396&135\\
$[0.04$--$47.3]$ &1.14&0.85&&0.89&0.93&0.61&&0.77&1.03&0.94&0.61\\
\midrule
Facebook&1\,451&3\,134&&28&4\,547&10&&11&17&4\,547&10\\
$[1$--$30]$ &1.9&1.94&&2.29&1.92&1.5&&2.46&2.18&1.92&1.5\\
\midrule
Twitter&282&5\,398&&0&5\,680&0&&0&0&5\,680&0\\
$[1$--$139]$ &1.29&2&&-&1.97&-&&-&-&1.97&-\\
\midrule
Authors&16\,647&17\,967&&9\,599&22\,994&2\,021&&5\,590&4\,009&22\,994&2\,021\\
$[1$--$52]$ &1.19&1.4&&1.1&1.41&1.16&&1.09&1.1&1.41&1.16\\
\midrule
BitCoin OTC&1\,794&19\,695&&37&21\,446&6&&26&11&21\,446&6\\
$[-10$--$10]$ &0.89&0.62&&2.37&0.64&-2.33&&2.5&2.1&0.64&-2.33\\
\midrule
BitCoin Alpha&1\,178&12\,942&&6&14\,113&1&&4&2&14\,113&1\\
$[-10$--$10]$ &1.21&1.43&&5&1.4&-10&&6&3&1.4&-10\\
\bottomrule
\end{tabular}%
}
\end{table}


\oldnote[Tijl]{Running time analysis at least of the LP2 formulation
solved using CVX
as well as solved using the Hochbaum implementation.
Perhaps also for the LP3 and LP4 formulations just using CVX.}

Finally, Table~\ref{tab:runtime} reports running times
on a PC with an Intel i7-4800MQ CPU at 2.70GHz and 16 GB RAM
of our CVX/MOSEK and Hochbaum-Naor implementations.
It demonstrates the superior performance of the latter.
Remarkably, the Hochbaum-Naor algorithm performs very comparably to the greedy approximation algorithm for STCbinary.

\begin{table}
\caption{\label{tab:runtime}Running times (in seconds) 
for solving \ref{eq:LP2-STC-sym} with $d=1$ by a general LP solver with Interior Point (IP) algorithm, 
for Hochbaum-Naor's algorithm with Minimum Cut as the main subroutine, and 
for the greedy approximation algorithm for STCbinary. 
Total times include problem construction for \ref{eq:LP2-STC-sym}, 
graph construction for Hochbaum-Naor, and wedge-graph construction for STCbinary.}	
\resizebox{\columnwidth}{!}{%
\begin{tabular}{lrrcrrcrr}
\toprule
\multicolumn{1}{l}{Network} &
\multicolumn{2}{c}{LP}& $\quad$ &
\multicolumn{2}{c}{Hochbaum}& $\quad$ &
\multicolumn{2}{c}{\STCbinary}\\
\cmidrule{2-3}
\cmidrule{5-6}
\cmidrule{8-9}
& \multicolumn{1}{c}{IP} & \multicolumn{1}{c}{total} && 
\multicolumn{1}{c}{MinCut} & \multicolumn{1}{c}{total} && 
\multicolumn{1}{c}{Greedy} & \multicolumn{1}{c}{total}\\
\midrule
Les Mis.& 0.11& 0.63 && 0.004 & 0.008 && 0.002 & 0.016\\
KDD & 3.92 &19.30 && 0.29 & 1.60 && 0.86 & 2.07 \\
Facebook & 0.31 & 1.94 &&  0.02 & 0.46 && 0.15 & 0.41\\
Twitter & 1.44 & 10.30 && 0.28 & 0.87 && 0.27 & 1.88\\
Authors & 14.66 & 47.29 && 0.68 & 6.68 && 5.22 & 9.50\\
BitCoin OTC & 126.22 & 269.31 && 8.69 & 13.41 && 1.54& 7.12\\
\bottomrule
\end{tabular}%
}
\end{table}

\oldnote[Jef]{Again, should we not just report the times for STCmin Greedy as well? We do not have to hide how fast it is, it is just an approximation algorithm to a clearly inferior problem setting.}
\oldnote[Tijl->Polina/Florian]{I agree, it would be better (even if it's our own implementation),
especially because there is space for another column. Polina, could you still do that?}

\section{Related work\label{sec:relatedwork}}

\ifkdd\else
This paper builds on the STC principle,
which was proposed in sociology by Simmel \cite{sim:08}.
Sintos and Tsaparas \cite{sintos2014using} first considered the problem of labeling the edges of the graph to enforce the STC property:
maximize the number of strong edges, such that the network satisfies the STC property.
In our work we relax and extend this formulation by introducing new constraints and integer labels.
To our knowledge, we are the first to introduce and study such formulations.
\fi

The 
work by Sintos and Tsaparas~\cite{sintos2014using} is part of a broader line of active recent research
aiming to infer the strength of the links in a social network.
E.g., Jones et al.~\cite{jones2013inferring} uses frequency of online interaction 
to predict of strength ties with high accuracy.
Gilbert et al.~\cite{gilbert2009predicting} characterize social ties based on similarity and interaction information.
Similarly, Xiang et al.~\cite{xiang2010modeling} estimate relationship strength from homophily principle and interaction patterns
and extend the approach to heterogeneous types of relationships.
Pham et al.~\cite{pham2016inferring} incorporate spatio-temporal features of social iterations to increase accuracy of inferred tie strength. 

A related direction of research focuses solely on inferring types of the links in a network.
E.g., Tang et al.~\cite{tang2012inferring,tang2011learning,tang2016transfer} propose a generative statistical model,
which can be used to classify heterogeneous relationships.
The model relies on social theories and incorporates structural properties of the network and node attributes.
Their more recent works can also produce strength of the predicted types of ties.
Backstrom et al.~\cite{backstrom2014romantic} focuses on the graph structure to identify a particular type of ties---romantic relationships in Facebook.

Most of these works, however, make use of various meta-data and characteristics of social interactions in the networks.
In contrast, like Sintos and Tsaparas' work,
our aim is to infer strength of ties solely based of graph structure, and in particular on the STC assumption.

Another recent extension of the work of Sintos and Tsaparas \cite{sintos2014using} is 
followed by Rozenshtein et al.~\cite{rozenshtein2017inferring}.
However, their direction is different:
they consider binary strong and weak labeling with additional community 
connectivity constrains and allow STC violations to satisfy those constraints.

\section{Conclusions and further work}\label{sec:conclusions}

\ifkdd\else
\subsection{Conclusions}
\fi

We have proposed a sequence of linear programming relaxations of the STCbinary problem 
introduced by Sintos and Tsaparas~\cite{sintos2014using}.
These formulations have a number of advantages,
most notably their computational complexity,
the fact that they refrain from making arbitrary strength assignments in the presence of uncertainty,
and as a result, enhanced robustness.
Extensive theoretical analysis of the second relaxation (\ref{eq:LP2-STC})
has not only provided insight into the solution and the arbitrariness the solution from STCbinary may exhibit,
it also yielded a highly efficient algorithm for finding a symmetric (non-arbitrary) optimal strenght assignment.

The empirical results confirm these findings.
At the same time, they raise doubts about the validity of the STC property in real-life networks,
with trust networks appearing to be a notable exception.

\ifkdd\else
\subsection{Further work}
\fi

Our research results open up a large number of avenues for further research.
The first is to investigate whether more efficient algorithms can be found
for inferring the range of edge strengths across the optimal face of the feasible polytope.
A related research question is whether the marginal distribution of individual edge strengths,
under the uniform distribution of the optimal polytope,
can be characterized in a more analytical manner (instead of by uniform sampling).
Both these questions seem important beyond the STC problem,
and we are unaware of a definite solution to them.

A second line of research is to investigate alternative problem formulations.
An obvious variation would be to take into account community structure,
and the fact that the STC property probably often fails to hold for wedges that span different communities.
A trivial approach would be to simply remove the constraints for such wedges,
but more sophisticated approaches could exist.
Additionally, 
it would be interesting to investigate the possibility 
to allow for different relationship types and respective edge strengths,
requiring the STC property to hold only within each type.
Furthermore, the fact that the presented formulations are LPs,
combined with the fact that many graph-theoretical properties can be expressed in terms of linear constraints,
opens up the possibility to impose additional constraints on the optimal strength assignments
without incurring significant computational overhead as compared to the interior point implementation.
One line of thought is to impose upper bounds on the sum of edge strengths incident to any given edge,
modeling the well-known fact that an individual is limited in how many strong social ties they can maintain.

A third line of research is whether an active learning approach can be developed,
to quickly reduce the number of edges assigned an intermediate strength by our approaches.

\ifkdd\else
More directly, a fourth line of research concerns the question of whether the theoretical understanding of
Problems~\ref{eq:LP1-STC} and \ref{eq:LP2-STC} can be transferred more completely to
Problems~\ref{eq:LP3-STC} and \ref{eq:LP4-STC} than achieved in the current paper.
\fi

Finally, perhaps the most important line of further research concerns the validity of the STC property:
could it be modified so as to become more widely applicable across real-life social networks?

\note[Tijl]{In the KDD version we could further condense the list of further work ideas if we need more space.}

%
%
%
%
%
%
\oldnote[Tijl]{Similar to Sintos and Tsaparas. Then, each relationship type $t$ would correspond to a strength $w_{ij}^t$,
with the STC constraint applying for all $t$ separately.
Each edge can be strong only for one type under the binary formulation.
We should relax this constraint as well to ensure tractability,
which can be done e.g., by ensuring that $\sum_t w_{ij}^t\leq 1$.
Of course, this only works for the upper bounded variant, where $w_{ij}^t\leq 1$ for all $i,j,t$.}

%

\ptitle{Acknowledgements} This work was supported by
the ERC under the
EU's Seventh Framework Programme (FP7/2007-2013) / ERC
Grant Agreement no.\ 615517,
FWO (project no.\ G091017N, G0F9816N),
the EU's Horizon 2020 research and innovation programme and the FWO under the Marie Sklodowska-Curie Grant Agreement no.\ 665501,
three Academy of Finland projects  (286211, 313927, and 317085), 
and the EC H2020 RIA project ``SoBigData'' (654024).

\note[Aris]{Comment out acks for the submission, if need space.}

\note[Jef]{Should we update the title of the arXiv report?}
\note[Tijl]{As it is pretty much the same paper just with more text, we could probably keep the same title?}

\bibliographystyle{ACM-Reference-Format}
\bibliography{paper}


\begin{thebibliography}{24}


\ifx \showCODEN    \undefined \def \showCODEN     #1{\unskip}     \fi
\ifx \showDOI      \undefined \def \showDOI       #1{#1}\fi
\ifx \showISBNx    \undefined \def \showISBNx     #1{\unskip}     \fi
\ifx \showISBNxiii \undefined \def \showISBNxiii  #1{\unskip}     \fi
\ifx \showISSN     \undefined \def \showISSN      #1{\unskip}     \fi
\ifx \showLCCN     \undefined \def \showLCCN      #1{\unskip}     \fi
\ifx \shownote     \undefined \def \shownote      #1{#1}          \fi
\ifx \showarticletitle \undefined \def \showarticletitle #1{#1}   \fi
\ifx \showURL      \undefined \def \showURL       {\relax}        \fi
\providecommand\bibfield[2]{#2}
\providecommand\bibinfo[2]{#2}
\providecommand\natexlab[1]{#1}
\providecommand\showeprint[2][]{arXiv:#2}

\bibitem[\protect\citeauthoryear{ApS}{ApS}{2015}]%
        {mosek}
\bibfield{author}{\bibinfo{person}{MOSEK ApS}.}
  \bibinfo{year}{2015}\natexlab{}.
\newblock \bibinfo{booktitle}{{\em The MOSEK optimization toolbox for MATLAB
  manual. Version 7.1 (Revision 28).}}
\newblock
\showURL{%
\url{http://docs.mosek.com/7.1/toolbox/index.html}}


\bibitem[\protect\citeauthoryear{Backstrom and Kleinberg}{Backstrom and
  Kleinberg}{2014}]%
        {backstrom2014romantic}
\bibfield{author}{\bibinfo{person}{Lars Backstrom} {and} \bibinfo{person}{Jon
  Kleinberg}.} \bibinfo{year}{2014}\natexlab{}.
\newblock \showarticletitle{Romantic partnerships and the dispersion of social
  ties: a network analysis of relationship status on facebook}. In
  \bibinfo{booktitle}{{\em Proceedings of the 17th ACM conference on Computer
  supported cooperative work \& social computing}}. ACM,
  \bibinfo{pages}{831--841}.
\newblock


\bibitem[\protect\citeauthoryear{Chen, Dwivedi, Wainwright, and Yu}{Chen
  et~al\mbox{.}}{2017}]%
        {chen2017fast}
\bibfield{author}{\bibinfo{person}{Yuansi Chen}, \bibinfo{person}{Raaz
  Dwivedi}, \bibinfo{person}{Martin~J Wainwright}, {and} \bibinfo{person}{Bin
  Yu}.} \bibinfo{year}{2017}\natexlab{}.
\newblock \showarticletitle{Fast MCMC sampling algorithms on polytopes}.
\newblock \bibinfo{journal}{{\em arXiv preprint arXiv:1710.08165\/}}
  (\bibinfo{year}{2017}).
\newblock


\bibitem[\protect\citeauthoryear{Easley and Kleinberg}{Easley and
  Kleinberg}{2010}]%
        {David:2010:NCM:1805895}
\bibfield{author}{\bibinfo{person}{David Easley} {and} \bibinfo{person}{Jon
  Kleinberg}.} \bibinfo{year}{2010}\natexlab{}.
\newblock \bibinfo{booktitle}{{\em Networks, Crowds, and Markets: Reasoning
  About a Highly Connected World}}.
\newblock \bibinfo{publisher}{Cambridge University Press},
  \bibinfo{address}{New York, NY, USA}.
\newblock
\showISBNx{0521195330, 9780521195331}


\bibitem[\protect\citeauthoryear{Edelsbrunner, Rote, and Welzl}{Edelsbrunner
  et~al\mbox{.}}{1989}]%
        {edelsbrunner1989testing}
\bibfield{author}{\bibinfo{person}{Herbert Edelsbrunner},
  \bibinfo{person}{G{\"u}nter Rote}, {and} \bibinfo{person}{Ermo Welzl}.}
  \bibinfo{year}{1989}\natexlab{}.
\newblock \showarticletitle{Testing the necklace condition for shortest tours
  and optimal factors in the plane}.
\newblock \bibinfo{journal}{{\em Theoretical Computer Science\/}}
  \bibinfo{volume}{66}, \bibinfo{number}{2} (\bibinfo{year}{1989}),
  \bibinfo{pages}{157--180}.
\newblock


\bibitem[\protect\citeauthoryear{Gilbert and Karahalios}{Gilbert and
  Karahalios}{2009}]%
        {gilbert2009predicting}
\bibfield{author}{\bibinfo{person}{Eric Gilbert} {and} \bibinfo{person}{Karrie
  Karahalios}.} \bibinfo{year}{2009}\natexlab{}.
\newblock \showarticletitle{Predicting tie strength with social media}. In
  \bibinfo{booktitle}{{\em Proceedings of the SIGCHI conference on human
  factors in computing systems}}. ACM, \bibinfo{pages}{211--220}.
\newblock


\bibitem[\protect\citeauthoryear{Grant and Boyd}{Grant and Boyd}{2014}]%
        {cvx}
\bibfield{author}{\bibinfo{person}{Michael Grant} {and}
  \bibinfo{person}{Stephen Boyd}.} \bibinfo{year}{2014}\natexlab{}.
\newblock \bibinfo{title}{{CVX}: Matlab Software for Disciplined Convex
  Programming, version 2.1}.
\newblock \bibinfo{howpublished}{\url{http://cvxr.com/cvx}}.
  (\bibinfo{date}{March} \bibinfo{year}{2014}).
\newblock


\bibitem[\protect\citeauthoryear{H{\aa}stad}{H{\aa}stad}{1999}]%
        {haastad1999clique}
\bibfield{author}{\bibinfo{person}{Johan H{\aa}stad}.}
  \bibinfo{year}{1999}\natexlab{}.
\newblock \showarticletitle{Clique is hard to approximate within
  $n^{1-\varepsilon}$}.
\newblock \bibinfo{journal}{{\em Acta Mathematica\/}} \bibinfo{volume}{182},
  \bibinfo{number}{1} (\bibinfo{year}{1999}), \bibinfo{pages}{105--142}.
\newblock


\bibitem[\protect\citeauthoryear{Hochbaum}{Hochbaum}{1982}]%
        {Hoc:82}
\bibfield{author}{\bibinfo{person}{Dorit~S Hochbaum}.}
  \bibinfo{year}{1982}\natexlab{}.
\newblock \showarticletitle{Approximation algorithms for the set covering and
  vertex cover problems}.
\newblock \bibinfo{journal}{{\em SIAM Journal on computing\/}}
  \bibinfo{volume}{11}, \bibinfo{number}{3} (\bibinfo{year}{1982}),
  \bibinfo{pages}{555--556}.
\newblock


\bibitem[\protect\citeauthoryear{Hochbaum}{Hochbaum}{1983}]%
        {Hoc:83}
\bibfield{author}{\bibinfo{person}{Dorit~S Hochbaum}.}
  \bibinfo{year}{1983}\natexlab{}.
\newblock \showarticletitle{Efficient bounds for the stable set, vertex cover
  and set packing problems}.
\newblock \bibinfo{journal}{{\em Discrete Applied Mathematics\/}}
  \bibinfo{volume}{6}, \bibinfo{number}{3} (\bibinfo{year}{1983}),
  \bibinfo{pages}{243--254}.
\newblock


\bibitem[\protect\citeauthoryear{Hochbaum and Naor}{Hochbaum and Naor}{1994}]%
        {hochbaum1994simple}
\bibfield{author}{\bibinfo{person}{Dorit~S Hochbaum} {and}
  \bibinfo{person}{Joseph Naor}.} \bibinfo{year}{1994}\natexlab{}.
\newblock \showarticletitle{Simple and fast algorithms for linear and integer
  programs with two variables per inequality}.
\newblock \bibinfo{journal}{{\it SIAM J. Comput.}} \bibinfo{volume}{23},
  \bibinfo{number}{6} (\bibinfo{year}{1994}), \bibinfo{pages}{1179--1192}.
\newblock


\bibitem[\protect\citeauthoryear{Jones, Settle, Bond, Fariss, Marlow, and
  Fowler}{Jones et~al\mbox{.}}{2013}]%
        {jones2013inferring}
\bibfield{author}{\bibinfo{person}{Jason~J Jones}, \bibinfo{person}{Jaime~E
  Settle}, \bibinfo{person}{Robert~M Bond}, \bibinfo{person}{Christopher~J
  Fariss}, \bibinfo{person}{Cameron Marlow}, {and} \bibinfo{person}{James~H
  Fowler}.} \bibinfo{year}{2013}\natexlab{}.
\newblock \showarticletitle{Inferring tie strength from online directed
  behavior}.
\newblock \bibinfo{journal}{{\em PloS one\/}} \bibinfo{volume}{8},
  \bibinfo{number}{1} (\bibinfo{year}{2013}), \bibinfo{pages}{e52168}.
\newblock


\bibitem[\protect\citeauthoryear{Mehrotra and Ye}{Mehrotra and Ye}{1993}]%
        {mehrotra1993finding}
\bibfield{author}{\bibinfo{person}{Sanjay Mehrotra} {and}
  \bibinfo{person}{Yinyu Ye}.} \bibinfo{year}{1993}\natexlab{}.
\newblock \showarticletitle{Finding an interior point in the optimal face of
  linear programs}.
\newblock \bibinfo{journal}{{\em Mathematical Programming\/}}
  \bibinfo{volume}{62}, \bibinfo{number}{1} (\bibinfo{year}{1993}),
  \bibinfo{pages}{497--515}.
\newblock


\bibitem[\protect\citeauthoryear{Nemhauser and Trotter}{Nemhauser and
  Trotter}{1975}]%
        {NeT:75}
\bibfield{author}{\bibinfo{person}{George~L Nemhauser} {and}
  \bibinfo{person}{Leslie~Earl Trotter}.} \bibinfo{year}{1975}\natexlab{}.
\newblock \showarticletitle{Vertex packings: structural properties and
  algorithms}.
\newblock \bibinfo{journal}{{\em Mathematical Programming\/}}
  \bibinfo{volume}{8}, \bibinfo{number}{1} (\bibinfo{year}{1975}),
  \bibinfo{pages}{232--248}.
\newblock


\bibitem[\protect\citeauthoryear{Pham, Shahabi, and Liu}{Pham
  et~al\mbox{.}}{2016}]%
        {pham2016inferring}
\bibfield{author}{\bibinfo{person}{Huy Pham}, \bibinfo{person}{Cyrus Shahabi},
  {and} \bibinfo{person}{Yan Liu}.} \bibinfo{year}{2016}\natexlab{}.
\newblock \showarticletitle{Inferring social strength from spatiotemporal
  data}.
\newblock \bibinfo{journal}{{\em ACM Transactions on Database Systems
  (TODS)\/}} \bibinfo{volume}{41}, \bibinfo{number}{1} (\bibinfo{year}{2016}),
  \bibinfo{pages}{7}.
\newblock


\bibitem[\protect\citeauthoryear{Rozenshtein, Tatti, and Gionis}{Rozenshtein
  et~al\mbox{.}}{2017}]%
        {rozenshtein2017inferring}
\bibfield{author}{\bibinfo{person}{Polina Rozenshtein},
  \bibinfo{person}{Nikolaj Tatti}, {and} \bibinfo{person}{Aristides Gionis}.}
  \bibinfo{year}{2017}\natexlab{}.
\newblock \showarticletitle{Inferring the Strength of Social Ties: A
  Community-Driven Approach}. In \bibinfo{booktitle}{{\em Proceedings of the
  23rd ACM SIGKDD International Conference on Knowledge Discovery and Data
  Mining}}. ACM, \bibinfo{pages}{1017--1025}.
\newblock


\bibitem[\protect\citeauthoryear{Simmel}{Simmel}{1908}]%
        {sim:08}
\bibfield{author}{\bibinfo{person}{Georg Simmel}.}
  \bibinfo{year}{1908}\natexlab{}.
\newblock \bibinfo{booktitle}{{\em Soziologie {U}ntersuchungen über die
  {F}ormen der {V}ergesellschaftung}}.
\newblock


\bibitem[\protect\citeauthoryear{Sintos and Tsaparas}{Sintos and
  Tsaparas}{2014}]%
        {sintos2014using}
\bibfield{author}{\bibinfo{person}{Stavros Sintos} {and}
  \bibinfo{person}{Panayiotis Tsaparas}.} \bibinfo{year}{2014}\natexlab{}.
\newblock \showarticletitle{Using strong triadic closure to characterize ties
  in social networks}. In \bibinfo{booktitle}{{\em Proceedings of the 20th ACM
  SIGKDD international conference on Knowledge discovery and data mining}}.
  ACM, \bibinfo{pages}{1466--1475}.
\newblock


\bibitem[\protect\citeauthoryear{Spielman and Teng}{Spielman and Teng}{2004}]%
        {spielman2004smoothed}
\bibfield{author}{\bibinfo{person}{Daniel~A Spielman} {and}
  \bibinfo{person}{Shang-Hua Teng}.} \bibinfo{year}{2004}\natexlab{}.
\newblock \showarticletitle{Smoothed analysis of algorithms: Why the simplex
  algorithm usually takes polynomial time}.
\newblock \bibinfo{journal}{{\em Journal of the ACM (JACM)\/}}
  \bibinfo{volume}{51}, \bibinfo{number}{3} (\bibinfo{year}{2004}),
  \bibinfo{pages}{385--463}.
\newblock


\bibitem[\protect\citeauthoryear{Tang, Lou, and Kleinberg}{Tang
  et~al\mbox{.}}{2012}]%
        {tang2012inferring}
\bibfield{author}{\bibinfo{person}{Jie Tang}, \bibinfo{person}{Tiancheng Lou},
  {and} \bibinfo{person}{Jon Kleinberg}.} \bibinfo{year}{2012}\natexlab{}.
\newblock \showarticletitle{Inferring social ties across heterogenous
  networks}. In \bibinfo{booktitle}{{\em Proceedings of the fifth ACM
  international conference on Web search and data mining}}. ACM,
  \bibinfo{pages}{743--752}.
\newblock


\bibitem[\protect\citeauthoryear{Tang, Lou, Kleinberg, and Wu}{Tang
  et~al\mbox{.}}{2016}]%
        {tang2016transfer}
\bibfield{author}{\bibinfo{person}{Jie Tang}, \bibinfo{person}{Tiancheng Lou},
  \bibinfo{person}{Jon Kleinberg}, {and} \bibinfo{person}{Sen Wu}.}
  \bibinfo{year}{2016}\natexlab{}.
\newblock \showarticletitle{Transfer learning to infer social ties across
  heterogeneous networks}.
\newblock \bibinfo{journal}{{\em ACM Transactions on Information Systems
  (TOIS)\/}} \bibinfo{volume}{34}, \bibinfo{number}{2} (\bibinfo{year}{2016}),
  \bibinfo{pages}{7}.
\newblock


\bibitem[\protect\citeauthoryear{Tang, Zhuang, and Tang}{Tang
  et~al\mbox{.}}{2011}]%
        {tang2011learning}
\bibfield{author}{\bibinfo{person}{Wenbin Tang}, \bibinfo{person}{Honglei
  Zhuang}, {and} \bibinfo{person}{Jie Tang}.} \bibinfo{year}{2011}\natexlab{}.
\newblock \showarticletitle{Learning to infer social ties in large networks}.
  In \bibinfo{booktitle}{{\em Joint European Conference on Machine Learning and
  Knowledge Discovery in Databases}}. Springer, \bibinfo{pages}{381--397}.
\newblock


\bibitem[\protect\citeauthoryear{Viswanath, Mislove, Cha, and
  Gummadi}{Viswanath et~al\mbox{.}}{2009}]%
        {viswanath-2009-activity}
\bibfield{author}{\bibinfo{person}{Bimal Viswanath}, \bibinfo{person}{Alan
  Mislove}, \bibinfo{person}{Meeyoung Cha}, {and} \bibinfo{person}{Krishna~P.
  Gummadi}.} \bibinfo{year}{2009}\natexlab{}.
\newblock \showarticletitle{On the Evolution of User Interaction in Facebook}.
  In \bibinfo{booktitle}{{\em Proceedings of the 2nd ACM SIGCOMM Workshop on
  Social Networks (WOSN'09)}}.
\newblock


\bibitem[\protect\citeauthoryear{Xiang, Neville, and Rogati}{Xiang
  et~al\mbox{.}}{2010}]%
        {xiang2010modeling}
\bibfield{author}{\bibinfo{person}{Rongjing Xiang}, \bibinfo{person}{Jennifer
  Neville}, {and} \bibinfo{person}{Monica Rogati}.}
  \bibinfo{year}{2010}\natexlab{}.
\newblock \showarticletitle{Modeling relationship strength in online social
  networks}. In \bibinfo{booktitle}{{\em Proceedings of the 19th international
  conference on World wide web}}. ACM, \bibinfo{pages}{981--990}.
\newblock


\end{thebibliography}

\end{document}